\newtheorem{theorem}{Theorem}
\newtheorem{definition}{Definition}
\newtheorem{corollary}{Corollary}
\DeclareMathOperator*{\argmax}{argmax}
\renewcommand\footnotetextcopyrightpermission[1]{}  
\newcommand{\E}{\mathbb{E}}
\newcommand{\I}{\mathbb{I}}
\newcommand{\R}{\mathbb{R}}
\newcommand{\bz}{\boldsymbol{z}}
\newcommand{\cI}{\mathcal{I}}
\newcommand{\cQ}{\mathcal{Q}}
\newcommand{\cR}{\mathcal{R}}
\newcommand{\cS}{\mathcal{S}}
\newcommand{\OPT}{{\it OPT}}
\newcommand{\EPT}{{\it EPT}}
  \providecommand\BibTeX{{%
    \normalfont B\kern-0.5em{\scshape i\kern-0.25em b}\kern-0.8em\TeX}}}
\newcommand{\compilehidecomments}{false}
	\newcommand{\wei}[1]{}
	\newcommand{\sheng}[1]{}
	\newcommand{\hao}[1]{}
	\newcommand{\zhou}[1]{}
        \newcommand{\zhang}[1]{}
	\newcommand{\wei}[1]{{\color{blue!50!black}  [\text{Wei:} #1]}}
	\newcommand{\sheng}[1]{{\color{red!70!black} [\text{Sheng:} #1]}}
	\newcommand{\hao}[1]{{\color{green!90!black} [\text{Hao:} #1]}}
	\newcommand{\zhou}[1]{{\color{yellow!90!black} [\text{zhou:} #1]}}
        \newcommand{\zhang}[1]{{\color{purple!90!black} [\text{zhang:} #1]}}
\begin{document}

\title{Popularity Ratio Maximization: Surpassing Competitors through Influence Propagation}

\author{Hao Liao}
\orcid{}
\affiliation{%
  \institution{Shenzhen University}
  \city{Shenzhen}
  \country{China}}
  \email{haoliao@szu.edu.cn}

\author{Sheng Bi}
\affiliation{%
  \institution{Shenzhen University}
  \city{Shenzhen}
  \country{China}}
\email{2060271075@email.szu.edu.cn}

\author{Jiao Wu}
\affiliation{%
  \institution{Shenzhen University}
  \city{Shenzhen}
  \country{China}}
\email{1800271041@email.szu.edu.cn}

\author{Wei Zhang}
\affiliation{%
  \institution{Shenzhen University}
  \city{Shenzhen}
  \country{China}}
\email{2210275010@email.szu.edu.cn}

\author{Mingyang Zhou}
\affiliation{%
  \institution{Shenzhen University}
  \city{Shenzhen}
  \country{China}}
\email{zmy@szu.edu.cn}

\author{Rui Mao}
\affiliation{%
  \institution{Shenzhen University}
  \city{Shenzhen}
  \country{China}}
\email{mao@szu.edu.cn}

\author{Wei Chen}
\authornote{Contact author of this paper}
\affiliation{%
  \institution{Microsoft Research}
  \city{Beijing}
  \country{China}}
\email{weic@microsoft.com}

\renewcommand{\shortauthors}{Hao Liao and Sheng Bi, et al.}


\begin{abstract}
In this paper, we present an algorithmic study on how to surpass competitors in popularity by strategic promotions in social networks.
We first propose a novel model, in which we integrate the Preferential Attachment (PA) model for popularity growth with 
	the Independent Cascade (IC) model for influence propagation in social networks called PA-IC model.
In PA-IC, a popular item and a novice item grab shares of popularity from the natural popularity growth via the PA model, while the novice item
	tries to gain extra popularity via influence cascade in a social network. 
The {\em popularity ratio} is defined as the ratio of the popularity measure between the novice item and the popular item.
We formulate {\em Popularity Ratio Maximization (PRM)} as the problem of selecting seeds in multiple rounds to maximize the popularity ratio in the end.
We analyze the popularity ratio and show that it is monotone but not submodular.
To provide an effective solution, we devise a surrogate objective function and show that empirically it is very close to the original objective function while
	theoretically, it is monotone and submodular.
We design two efficient algorithms, one for the overlapping influence and non-overlapping seeds (across rounds) setting and the other for the non-overlapping influence and
	overlapping seed setting, and further discuss how to deal with other models and problem variants.
Our empirical evaluation further demonstrates that the proposed PRM-IMM method consistently achieves the best popularity promotion compared to other methods. 
Our theoretical and empirical analyses shed light on the interplay between influence maximization and preferential attachment in
social networks.
\end{abstract}

\begin{CCSXML}
<ccs2012>
   <concept>
       <concept_id>10002951.10003227.10003351</concept_id>
       <concept_desc>Information systems~Data mining</concept_desc>
       <concept_significance>500</concept_significance>
       </concept>
   <concept>
       <concept_id>10002951.10003227.10003447</concept_id>
       <concept_desc>Information systems~Computational advertising</concept_desc>
       <concept_significance>300</concept_significance>
       </concept>
   <concept>
       <concept_id>10002951.10003227.10003233.10010519</concept_id>
       <concept_desc>Information systems~Social networking sites</concept_desc>
       <concept_significance>500</concept_significance>
       </concept>
 </ccs2012>
\end{CCSXML}

\ccsdesc[500]{Information systems~Data mining}
\ccsdesc[300]{Information systems~Computational advertising}
\ccsdesc[500]{Information systems~Social networking sites}

\keywords{graph algorithms, preferential attachment, popularity ratio maximization}


\maketitle

\section{Introduction}
Influence maximization (IM) is the problem of finding a subset of nodes (seed nodes) in a social network that could maximize the spread of influence~\cite{kempe03}.
It is a well-studied problem with applications on viral marketing, information propagation monitoring and control, etc (cf. \cite{chen2013information,LiFWT18}).
While most studies treat influence maximization as a stand-alone problem for viral marketing, 
	in this paper, we want to explore the means of using influence maximization
	to boost the popularity of an item (an idea, a product, etc) and surpass a competitor in a natural growth environment.
	
Consider the following hypothetical motivating example.
Alice opened her new restaurant, Caf\'{e} Alice, at a local shopping mall. 
However, a popular restaurant Bob's Kitchen is in the same mall. 
Without promotion, Alice would definitely lose many customers to Bob's Kitchen.
Thus Alice wants to distribute free-dish coupons to selected people, hoping them to propagate the information about Caf\'{e} Alice and attract more people to her restaurant,
	boosting the popularity of her restaurant in a short period of time to catch up or even surpass Bob's Kitchen.
Similar situations arise in online social media, such as a new blogger or podcaster trying to increase popularity through
	promotion in social networks.
Different from pure viral marketing, popularity growth has a natural rich-get-richer effect independent of the social network, meaning that the popular item would naturally attract more customers or users. 
For example, customers typically check nearby restaurants' popularity ratings on mobile apps such as Yelp for restaurant selection, making more people visit the popular restaurant.
This effect certainly gives a hard time for the new restaurant to catch up. 
Thus, incentive promotion by the new restaurant needs to incorporate this factor
	when deciding on viral marketing strategies.

This paper integrates the natural popularity growth with the incentive promotion into one coherent model to solve the above problem.
In particular, the preferential attachment (PA) \cite{yule1925ii,simon1955class,BA99} is a well-known model for the natural rich-get-richer effect of
	popularity growth, while the independent cascade (IC) model \cite{kempe03} is a classical model for influence propagation.
We integrate these two models into the novel PA-IC model to characterize the combined effect of natural popularity growth and incentive promotion.
More specifically, in a multi-round setting, we model the popularity growth of a novice item and a popular item.
We use $d_{t}^p$ and $d_{t}^n$ to denote the popularity of the popular and novice items respectively, at the end of round $t$. 
Each round, $z$ number of customers pick from these two items with probability proportional to their popularity, realizing the preferential attachment model.
Meanwhile, the novice item executes a promotion plan, selecting a set of seeds $S_t$ in round $t$ for $T$ consecutive rounds, and the influence spread generated
	by the seed set $S_t$ would add to the popularity measure of the novice item at the end of round $t$.
We define $r_t = d_{t}^n / d_{t}^p$ as the {\em popularity ratio} in round $t$.
The promotion task of the novice item is defined as the following {\em popularity ratio maximization (PRM)} problem: 
Given (a) the social network and its IC model parameters, (b) the initial
	popularity measures $d_{0}^p$ and $d_{0}^n$, and (c) a promotion budget $k$, find
	$k$ seeds and allocate them into $T$ consecutive rounds such that
	the popularity ratio $r_T$ at the end of round $T$ is maximized.
Depending on whether we allow overlapping seeds (OS) across rounds and whether repeated activation of the same nodes in different rounds are repeatedly counted (overlapping influence, OI), we further consider different settings such as the overlapping influence and non-overlapping seeds (OINS) and the non-overlapping influence and overlapping seeds (NIOS).

For both OINS and NIOS settings, we derive the formula for the popularity ratio and show that it is monotone but not submodular, 
	indicating that direct optimization on this objective function may be complex.
To provide an effective solution, we simplify the objective function and obtain a surrogate function and show that empirically it is very close to the original 
	objective function while theoretically, it is monotone and submodular.
Based on the reverse influence sampling approach ~\cite{BorgsBrautbarChayesLucier} and the Influence Maximization via Martingales(IMM) algorithm~\cite{tang15}, we design PRM-IMM algorithms 
	to solve this surrogate problem for the two settings, and prove their theoretical guarantees.
We further discuss how to handle other models and problem variants, such as when the popular item also conducts promotion, and when the novice item wants to use the minimum budget to
	surpass the popular item in a given time limit.
Our empirical evaluation demonstrates that the proposed PRM-IMM method always achieves the best popularity promotion compared to other methods. 

One important point we demonstrate both analytically and empirically is that the seed allocation of the novice item is tightly dependent on the natural growth of the environment as well as the popularity of the popular item.
This point indicates that the PRM task is quite different from the classical influence maximization --- the promotion of the novice item has to consider the popularity
	growth and its relative position against the popular item, while the classical influence maximization only targets 
	at increasing one's own influence spread or popularity.

In summary, our contributions include: 
	(a) proposing the PA-IC model that integrates preferential attachment with independent cascade propagation model;
	(b) formulating the PRM problem and studying its properties; and
	(c) designing an efficient algorithm to solve the PRM problem, and discussing the extension to other problem variants.
To the best of our knowledge, this is the first study that integrates preferential attachment and an influence propagation model into a coherent popularity growth model
 	and provides an algorithmic study on such an optimization problems.

\section{Related Works}

\paragraph{Preferential Attachment}
Yule firstly considered using the preferential attachment to explain the power-law distribution of flowering plants \cite{yule1925ii}.
A clearer and more general development of how preferential attachment leads to a power law was given by Simon \cite{simon1955class}. 
The Barabási-Albert (BA) model shows that the power-law degree distribution in the real networks can be produced by the combination of growth and preferential attachment \cite{BA99}.
Preferential attachment is now the standard model for the rich-get-richer effect.
%
%

{\em Influence Maximization.}
Domingos and Richardson first study influence maximization (IM) \cite{domingos01, richardson02}.
Then it is formulated as an optimization problem by Kempe et al. \cite{kempe03}, who
	define the famous independent cascade and linear threshold models, showing that influence maximization under these models are NP-hard, and
	connecting the problem to monotone and submodular function maximization.
Influence maximization has been extensively studied (see surveys~\cite{chen2013information,LiFWT18}).
One direction is to improve the scalability of the 
algorithms~\cite{Leskovec2007,ChenWY09,ChenWW10,ChenYZ10,wang2010community,JungHC12,simpath,kim2013scalable,BorgsBrautbarChayesLucier}.
The state-of-the-art scalable solution is based on the reverse influence sampling (RIS) approach first proposed by Borg et al.~\cite{BorgsBrautbarChayesLucier}, and
	later improved and refined by a series of studies~\cite{tang14,tang15,Nguyen_DSSA_2016,chen18,Tang_OPIM_2018}.
Our algorithms are also based on the RIS approach and is adapted from the IMM algorithm of~\cite{tang15}.

Another related direction is competitive influence maximization for multiple items~\cite{BAA11,HeSCJ12,goyal12stoc,chen2013foe,chen2013information,LCL2016}.
However, all these studies focus on the competitive diffusion of multiple items in social networks.
In contrast, the competition in our model stems from the popularity growth dictated by preferential attachment, and influence maximization is only a tool for
	the novice item to increase popularity.
To the best of our knowledge, we are the first to consider popular growth via both the PA effect and network diffusion, 
 and study their nontrivial interplay in an optimization problem.

%
%
%
%

\section{Model and Problem Definition}

In this section, we introduce the PA-IC model, which integrates the preferential attachment (PA) mechanism for popularity growth with the independent cascade (IC) model for influence propagation.
The PA mechanism models show two items, a popular item and a novice item, divide the shares of popularity from the natural growth of the customer base, while 
	the novice item further utilizes the IC model propagation to promote its popularity in order to catch up with the popular item.
Based on the PA-IC model, we define the optimization problem of {\em popularity ratio maximization}, which characterizes how to allocate 
	the promotional budget of the novice item to achieve the best promotion results.


\subsection{PA-IC Model}\label{sec:pa-ic model}
Preferential attachment (PA) \cite{BA99} is a well-known model to characterize the rich-get-richer effect in the growth of networks or popularity.
The basic form of the PA model is as follows.
Let $d^i_t$ be the popularity measure of item $i$ at time $t$.
When a number of $z$ users come at time $t$, each user selects item $i$ with probability proportional to $d^i_t$.
Thus popularity grows proportionally, i.e., $d^i_{t+1} = d^i_t + z \cdot d^i_t / \sum_j d^j_t$, where $z$ is  
the natural growth parameter.
%

Beyond the natural growth of popularity governed by the PA mechanism,
	we would like to incorporate the influence propagation model for popularity promotion.
Independent cascade (IC) model~\cite{kempe03} is a classical diffusion model widely adopted in the information and influence propagation literature, and its parameters can be 
	effectively learned in many applications (e.g. \cite{chen2013information,barbieri2012topic}).
In the IC model, a social network is modeled as a directed graph $G=(V,E,p)$, 
where $V$ is the set of $N = |V|$ nodes representing individuals,  
	$E$ is the set of $M = |E|$ directed edges representing influence relationships between pairs of individuals, 
and $p:E \rightarrow (0,1]$ gives the influence probability on every edge, i.e., $p(u,v)$ is the influence probability on edge $(u,v)\in E$.
Nodes have two states, active and inactive. Nodes that have been activated will always remain active. The propagation process starts from a seed set $S\subseteq V$ in discrete time steps. At step $\tau=0$, only the nodes in $S$ are active. 
At step $\tau \ge 1$, a node $u$ activated at step $\tau-1$ attempts to activate each of its inactive out-neighbor $v$ with success probability $p(u,v)$. 
Propagation terminates when no more nodes are activated. 
A key measure is the {\em influence spread} of the seed set $S$, denoted as $\sigma(S)$, which is the expected number of activated nodes at the end of propagation starting from the seed set $S$.
The IC model has the following equivalent live-edge graph description.
A random {\em live-edge graph} $L$ is sampled from $G=(V,E,p)$ such that every edge $(u,v)\in E$ has an independent probability of $p(u,v)$ to be included in $L$.
Then given a seed set $S$, the set of nodes activated in a stochastic diffusion from $S$ is the set of nodes reachable from $S$ in the live-edge graph $L$, denoted as $\Gamma(S,L)$.
Therefore, we have $\sigma(S)=\E_{L}[|\Gamma(S,L)|]$.

We integrate the PA model with the IC model in the following PA-IC model to 
	characterize both the natural growth and the promotional growth of the popularity.
For simplicity, this paper focuses on two competing items: a popular item and a novice item, since
	all items grow proportionally in the PA model.

The popular item starts with a higher popularity measure, while the novice item starts with a lower one.
If only natural growth is available, the novice item would never catch up with the popular item, and the gap would only be widened due to the rich-get-richer effect of the PA model.
To catch up and surpass the popular item, the novice item needs to employ incentive promotion in the social network to increase its popularity.
The technical description of the PA-IC model to cover the above aspects is given below.
	
In the PA-IC model, we model $T$ rounds of promotion and influence propagation.
Let $d_{t}^p$ and $d_{t}^n$ denote the popularity measure of the popular and novice items respectively, at the end of round $t$.
In each round $t \in [T]$, some customers would naturally select between the novice and popular item. 
We use the natural growth parameter $z$ to denote the number of customers.
These $z$ customers select between novice and popular items according to the PA model.

Besides the natural growth, in each round $t$, the novice item would select a seed set $S_t$ for its promotion, and these seeds will propagate influence throughout the social network $G$ following the IC model.
Note that in round $t$, the IC model may take multiple {\em steps} to propagate the promotion until the propagation ends.
The expected number of influenced nodes $\sigma^{\square}(S_t)$ is also added to the popularity measure of the novice item, where
	the actual form of $\sigma^{\square}(S_t)$ depends on the setting we consider, and it will be explained shortly after Eq.\eqref{eq:popularityeq} below.

The natural growth mechanism and propagation mechanism respectively model two different growth processes of an item's popularity in reality. Therefore, the natural growth mechanism and the propagation mechanism together give the following inductive formulation of the popularity growth of the two items.

\begin{align} \label{eq:popularityeq}
    &d_{t}^p = d_{t-1}^p + \frac{z\cdot d_{t-1}^p}{d_{t-1}^p+d_{t-1}^n}, d_{t}^n = d_{t-1}^n + \sigma^{\square}(S_t) + \frac{z\cdot d_{t-1}^n}{d_{t-1}^p+d_{t-1}^n}.
\end{align}

In Eq.\eqref{eq:popularityeq}, the notation $\sigma^{\square}$ needs to be instantiated to a concrete quantity according to the actual setting used. 
We mainly consider two settings --- the overlapping influence (OI) and
	the non-overlapping influence (NI) settings. 
In the overlapping influence (OI) setting, the influence spread between different rounds are allowed to be overlapped when they are added to the popularity measure.
In this case, $\sigma^{\square}(S)$ is instantiated to $\sigma^{OI}(S)$, and is simply the standard influence spread $\sigma(S)$.
In the non-overlapping Influence (NI) setting, the influence spread of $S_t$ in round $t$ cannot overlap with the influence spread in the previous rounds, and thus
	it is the marginal influence spread of $S_t$ given the seed sets $S_1, \ldots, S_{t-1}$ of the previous rounds. 
Hence, in this case $\sigma^{\square}(S)$ is instantiated to $\sigma^{NI}(S_t|S_1,\ldots, S_{t-1})$, as defined below:
\begin{equation}\label{eq:influence spread(NI)}
    \sigma^{NI}(S_t|S_1\cdots S_{t-1}) = \mathbb{E}_{L_1,\ldots, L_t}\left[\left|\Gamma(S_t, L_t)\setminus \bigcup_{i=1}^{t-1}\Gamma(S_i,L_i)\right| \right].
\end{equation}
where $L_i$, $i\in [t]$ is the random live-edge graph in round $i$.
When the context is clear, we use $\sigma^{NI}(S_t)$ as a shorthand for $\sigma^{NI}(S_t|S_1\cdots S_{t-1})$.

One remark to Eq.\eqref{eq:popularityeq} is that both the natural growth and the social network promotion parts are represented in the expectation form.
One may formulate them as random variables, and take the expectation in the end.
The expectation form is easier to handle, and our empirical evaluation would show that such representation does not lose fidelity
	in terms of the solution quality.

\begin{figure*}[!t]
    \centering
    \includegraphics[scale=0.46]{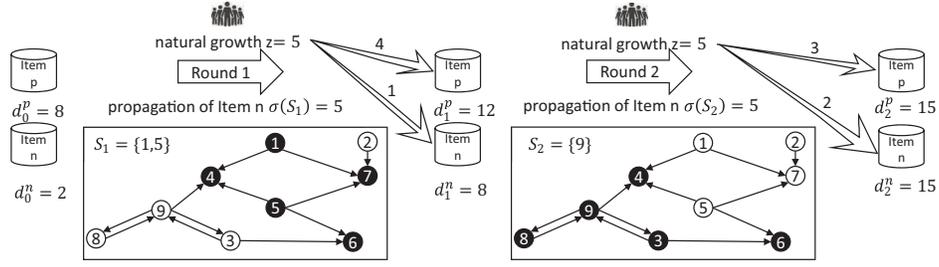}
    \caption{A numerical example of the PA-IC model with the overlapping influence setting. 
    	In this example, we show the changes in popularity measure of two items, within two rounds. In this directed graph, every edge's activation probability is equal to 1. Black nodes is the "active" nodes in each round.}
    \label{fig:example}
\end{figure*}

Another remark concerns the distinction of the natural growth and promotion in our model.
The natural growth follows the preferential attachment and is independent of the social network, while the
	promotion follows the influence diffusion model and is dependent on the social network.
One may say that the natural growth may also generate influence diffusion in social networks.
The reason we do not include the social network diffusion for the natural growth part of the model is that
	(a) the natural growth mainly reflects the rich-get-richer effect and models the behavior that people gets the
	popularity information from online review platforms such as Yelp or direct offline observations (e.g., by
	observing the occupancy of the restaurants) and act accordingly, and this part is not related to social network
	diffusion; and
	(b) inclusion of the influence diffusion for the natural growth part will further complicate the model, making it
	less clear in demonstrating the interaction between the preferential attachment for natural
	growth and the influence diffusion for viral promotion.

\begin{table*}[!t]
	\caption{Four settings of the PRM problem}
	\label{Four settings}
	\begin{tabular}{lcc}\\
		\toprule[2pt]
		&Overlapping Influence&Non-overlapping Influence\\
		\hline
		Overlapping Seeds& OIOS & NIOS \\
		Non-overlapping Seeds& OINS & NINS \\
		\bottomrule[2pt]
	\end{tabular}
\end{table*}

Fig. \ref{fig:example} is a numerical example to illustrate our PA-IC model in the OI setting. 
There are two items in this social network: item p is the popular item with the initial popularity measure $d_0^p = 8$, item n is the novice item with the initial popularity measure $d_0^n = 2$, and the natural growth $z = 5$. All edges in the social network have probability $1$.
In round $1$, according to the PA mechanism, the popularity measure of item p will increase by $4$, and the popularity measure of item n will increase by $1$. At the same time, the novice item chooses the seed set $S_1 = \{1,5\}$ to promote their product and the influence spread $\sigma(S_1) = 5$. So the popularity measure of item n will additionally increase by $5$. At the end of round $1$, $d_1^p = d_0^p + \frac{z\cdot d_0^p}{d_0^p+d_0^n} = 12$, $d_1^n = d_0^n + \frac{z\cdot d_0^n}{d_0^p+d_0^n} + \sigma(S_1)=8$. In round $2$, according to the PA mechanism, the popularity measure of item p will increase by $3$, and the popularity measure of item n will increase by $2$. In this round, item n chooses the seed set $S_2 = \{9\}$ to promote their product, which generates influence spread $\sigma(S_2)=5$. So at the end of round $2$, $d_2^p = d_2^n = 15$, which means that the novice item has caught up with the popular item.

Finally, the model can be further extended to allow different natural growth count $a_t$ for each round $t$, or allow the popular item to also have a viral promotion mechanism.
We defer the discussion of these extensions and their impacts on our algorithm to Section~\ref{sec:variants}.

\subsection{Popularity Ratio Maximization}\label{sec:problem definition}
We define the {\em popularity ratio} between the novice and the popular item at the end of round $t$ as $r_t = d^n_t / d^p_t$.
Without the promotional mechanism, this ratio would not change.
Thus, from the novice item's perspective, it wants to increase this ratio as much as possible and as soon as possible through promotion, but it certainly has budget constraints.

Technically, we model this as a popularity ratio maximization problem under a budget constraint.
We use the pair $(v,t)$ to denote that the novice item selects $v$ in round $t$ as a seed, i.e., $v \in S_t$, and
	we use $\cS = S_1\times \{1\} \cup S_2\times \{2\} \cdots  \cup S_T\times \{T\}$ to represent the overall allocation of seeds over $T$ rounds.
The novice item has a total budget $k$, restricting the total number of seeds that it can select, i.e., $|\cS| \le k$. 
Let the final popularity ratio after $T$ rounds promotion be $r_T(\cS) = d_T^n$/$d_T^p$.

Informally, popularity ratio maximization (PRM) is to find a seed allocation $\cS$ with $|\cS|\le k$ for the novice item 
	so that popularity ratio  $r_T(\cS)$ at the end of round $T$ is maximized.
In the seed allocation $\cS$, the seed set for different round may or may not be allowed to overlap, and both may be reasonable depending on the application scenario.
Thus, together with the overlapping or non-overlapping influence settings, there are four settings as shown in Table~\ref{Four settings}, where OIOS refers to overlapping influence and overlapping seeds and NINS refers to non-overlapping influence and non-overlapping seeds. The other two settings have been defined before.
Formally, we define the PRM problem below based on these settings.


\begin{definition}[Popularity Ratio Maximization] \label{def:PRM}
Given (a) a social network $G=(V,E,p)$, (b) initial popularity measures $d^n_0$ and $d^p_0$ for the novice and popular items respectively, 
	(c) total round number $T$, (d) natural growth count $z$, (e) budget $k$, 
	the task of {\em popularity ratio maximization (PRM)} is to find an optimal seed set allocation $\cS^*$ for the novice item in one of the four settings depicted in Table~\ref{Four settings}, 
	such that the total number of seeds does not exceed $k$, and when the popularity measures evolve according to Eq.\eqref{eq:popularityeq}, the final popularity ratio $r_T(\cS^*)$ is maximized. 
	For the OINS setting, it is
\begin{displaymath}
\cS^* \in \argmax_{\cS \subseteq V\times [T], |\cS|\leq k, S_t\cap S_{t'} = \emptyset \forall t\neq t'} r_T\left(\cS\right).
\end{displaymath}
Other settings can be similarly formulated.
\end{definition}

Note that it may not be wise to use all the budget in the first round since too many seeds could generate redundant activations of the same node, which is only counted once in the popularity measure.
On the other hand, spreading the budget evenly across all $T$ rounds may not be a good choice either, since it would make the popularity growth of the novice item slower and thus harder to catch up with the popular item.
Moreover, even though the objective is to increase the popularity of the novice item, the problem is still tightly related to the popularity of the popular item, because
	the PA mechanism links them together.
Therefore, PRM is non-trivial both in considering seed allocation across multiple rounds and in considering the relationship with the popular item.
Furthermore, PRM needs to consider the impact of different settings.
In the following two sections, we will consider two representative settings --- overlapping influence and non-overlapping seeds (OINS) and non-overlapping influence and overlapping seeds (NIOS).
Other settings can be treated similarly. 

PRM can be extended to other variants, and we will consider a number of them in Section~\ref{sec:variants}.

\section{Results on the OINS Setting}\label{sec:OINS}
We study the OINS setting in this section.
We first derive the exact formula for the objective function of the PRM problem.
We show that the exact objective function is not submodular, implying a rather difficult optimization task.
To tackle this problem, we provide a heuristic simplification of the objective function, 
	resulting in a round-weighted influence objective function $\rho_T(\cS)$, and its
	corresponding surrogate optimization task called {\em round-weighted influence maximization (RWIM)}.
We show that function $\rho_T(\cS)$ is monotone and submodular, which allows us to design
	an efficient approximation algorithm PRM-IMM based on the IMM algorithm~\cite{tang15}. 
The technical novelty of PRM-IMM includes properly defining the pair-wise reverse reachable set (PW-RR set) to 
	estimate the seed set's influence in the different rounds with different weights, 
	and calculating the number of PW-RR sets needed to satisfy the approximate ratio.
Although our approximation guarantee is on the surrogate RWIM algorithm, our experimental evaluation will demonstrate that 
	it also solves the original PRM problem with the best performance compared to other baselines.

\subsection{Objective Function under the OINS Setting}
\label{sec:OINSobj}
The following lemma states the exact formula for the objective function $r^{OI}_T(\cS)$ of PRM under the overlapping influence (OI) setting, which is derived recursively.
\begin{lemma} 
\label{lem:OINSobjective}
For the OINS setting, the popularity ratio function at the end of round $T$ is:
\begin{equation}\label{eq:ratio plus one(OINS)}
    r^{OI}_T(\cS)  = \left(r_0+1\right)\prod_{t=1}^T \left(1+\frac{\sigma(S_t)}{d_0^n+d_0^p+z\cdot t+\sum_{i=1}^{t-1} \sigma(S_i)}\right)-1,
\end{equation}
where $r_0 = d_0^n/d_0^p$ is the initial popularity ratio, $\cS = \bigcup_{t=1}^T S_t \times \{t\}
$.
\end{lemma}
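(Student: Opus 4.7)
The plan is to work with $r_t + 1$ rather than $r_t$ directly, since the factor $r_0+1$ on the right-hand side of \eqref{eq:ratio plus one(OINS)} together with the product structure strongly suggests a telescoping argument. The key algebraic identity I will exploit is
\begin{equation*}
r_t + 1 = \frac{d_t^n}{d_t^p} + 1 = \frac{d_t^n + d_t^p}{d_t^p} = \frac{s_t}{d_t^p},
\end{equation*}
where I set $s_t := d_t^p + d_t^n$. This converts the ratio into a quotient whose numerator and denominator each satisfy a clean one-step recurrence.

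First, I would add the two equations in \eqref{eq:popularityeq} and note that the two PA fractions sum to $z$, giving $s_t = s_{t-1} + z + \sigma(S_t)$ (here $\sigma^{\square} = \sigma$ in the OI setting). Unrolling this recursion yields
\begin{equation*}
s_t \;=\; d_0^n + d_0^p + z\cdot t + \sum_{i=1}^t \sigma(S_i),
\end{equation*}
which already matches the denominator appearing in the stated formula up to an index shift. Next, from the popular-item recurrence I read off $d_t^p / d_{t-1}^p = (s_{t-1}+z)/s_{t-1}$, so
\begin{equation*}
\frac{r_t + 1}{r_{t-1}+1} \;=\; \frac{s_t / d_t^p}{s_{t-1}/d_{t-1}^p} \;=\; \frac{s_t}{s_{t-1}} \cdot \frac{s_{t-1}}{s_{t-1}+z} \;=\; \frac{s_t}{s_{t-1}+z}.
\end{equation*}

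Then I substitute $s_t = s_{t-1} + z + \sigma(S_t)$ in the numerator to obtain
\begin{equation*}
\frac{r_t+1}{r_{t-1}+1} \;=\; 1 + \frac{\sigma(S_t)}{s_{t-1}+z},
\end{equation*}
and use the unrolled expression for $s_{t-1}$ to rewrite the denominator as $d_0^n + d_0^p + z\cdot t + \sum_{i=1}^{t-1}\sigma(S_i)$, which is exactly the per-factor denominator in \eqref{eq:ratio plus one(OINS)}. Telescoping the ratio from $t=1$ to $t=T$ produces $r_T + 1 = (r_0+1)\prod_{t=1}^T \bigl(1+\sigma(S_t)/(d_0^n+d_0^p+z t+\sum_{i<t}\sigma(S_i))\bigr)$, and subtracting $1$ gives the claim.

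There is really no hard step here: the argument is a short telescoping once the trick $r_t+1 = s_t/d_t^p$ is used, and the OINS assumption does not enter beyond guaranteeing $\sigma^{\square}(S_t) = \sigma(S_t)$ (non-overlap of seeds is irrelevant at the population-level recursion since we work with expected influence in each round independently). The only place where care is needed is the index shift between $s_t$ and $s_{t-1}+z$ — one must make sure the $\sigma(S_t)$ term stays in the numerator of the $t$-th factor while the running sum $\sum_{i=1}^{t-1}\sigma(S_i)$ appears in its denominator. Apart from that bookkeeping, the derivation is entirely mechanical.
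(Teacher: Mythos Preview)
Your proof is correct and follows essentially the same approach as the paper: both work with $r_t+1 = (d_t^n+d_t^p)/d_t^p$, use the recurrences in \eqref{eq:popularityeq} to obtain the one-step ratio $(r_t+1)/(r_{t-1}+1) = 1 + \sigma(S_t)/(d_{t-1}^n+d_{t-1}^p+z)$, identify the denominator with $d_0^n+d_0^p+zt+\sum_{i<t}\sigma(S_i)$, and telescope. Your introduction of $s_t$ is a minor notational convenience, but the argument is otherwise the same.
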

\begin{proof}
We show Eq.\eqref{eq:ratio plus one(OINS)} by deriving the recursive formula for $r_t$, $t\ge 1$.
Suppose at the end of round $t-1$ the popularity ratio is $r_{t-1}$, and the popularity measures for the novice and popular
	items are $d_{t-1}^n$ and $d_{t-1}^p$, respectively. 
By definition, we have
\begin{align}
r_t + 1 & = \frac{d_t^n+d_t^p}{d_t^p}  = \frac{d_{t-1}^p+d_{t-1}^n + \sigma^{OI}(S_t) + z}{d_{t-1}^p + z\cdot\frac{d_{t-1}^p}{d_{t-1}^p+d_{t-1}^n}} \label{eq:recursivepopularity} \\
& = \frac{d_{t-1}^p+d_{t-1}^n + \sigma^{OI}(S_t) + z}{d_{t-1}^p+d_{t-1}^n + z} \cdot \frac{d_{t-1}^p+d_{t-1}^n}{d_{t-1}^p} \nonumber \\
& = \left(1+\frac{\sigma^{OI}(S_t)}{d_0^n+d_0^p+z\cdot t+\sum_{i=1}^{t-1} \sigma^{OI}(S_i)}\right) (r_{t-1}+1), \label{eq:sumpopularity}
\end{align}
where the second equality in \eqref{eq:recursivepopularity} is from Eq.\eqref{eq:popularityeq}, and
	Eq.\eqref{eq:sumpopularity} uses the fact that $d_{t-1}^p+d_{t-1}^n$ is the total popularity measure of the two items
	at the end of round $t-1$,
	which is accumulated from $d_0^n+d_0^p$ by $t-1$ rounds of natural growth ($t\cdot z$) and 
	the promotion effect ($\sum_{i=1}^{t-1} \sigma^{OI}(S_i)$).
Eq.~\eqref{eq:ratio plus one(OINS)} can be obtained immediately with the above recursive formula and the fact that
	$\sigma^{OI}(S_i) = \sigma(S_i)$.

\end{proof}

One feature of Eq.~\eqref{eq:ratio plus one(OINS)} is that the popularity ratio depends on 
	the interplay among the parameters of the novice item ($d_0^n$), the popular item ($d_0^p$), the natural growth ($z$) and 
	the total influence effect $\sigma(S_t)$, which makes the seed selection more complicated in the PRM task than the classical
	influence maximization task.

A set function $f: 2^V\rightarrow \R$ is monotone if $f(S)\le f(Q)$ whenever $S\subseteq Q \subseteq V$, and
	$f$ is submodular if, for any $S\subseteq Q \subseteq V$ and any $x \in V\backslash Q$, $f(S\cup \{x\})-f(S)\geq f(Q\cup \{x\}) - f(Q)$. 
Submodularity means that the marginal value of an element with respect to a set $S$ decreases as $S$ grows.
Maximizing a monotone and submodular set function with a cardinality constraint
	can be achieved by a simple greedy algorithm with an approximation ratio
	of $1-1/e$~\cite{NWF78}.
Unfortunately, the exact popularity ratio function in Eq.~\eqref{eq:ratio plus one(OINS)} is not submodular, as shown below.
\begin{restatable}{lemma}{lemnonsubmodular} \label{lemma:submodular of popularity ratio}
The popularity ratio function $r^{OI}_T(\cS)$ is monotone but not submodular. 
\end{restatable}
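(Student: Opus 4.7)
The plan is to decouple the two claims. Monotonicity I would prove via a clean decomposition of the ratio, and non-submodularity via a small explicit counter-example.

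For monotonicity, I would rewrite $r^{OI}_T(\cS)+1 = (d_T^p+d_T^n)/d_T^p$ and analyze numerator and denominator separately. The total popularity telescopes to $d_T^p+d_T^n = d_0^p+d_0^n + zT + \sum_{t=1}^T \sigma(S_t)$, which is coordinate-wise non-decreasing in each $\sigma(S_t)$. Unrolling the PA recursion for the popular item gives $d_T^p = d_0^p \prod_{t=1}^T (1 + z/D_{t-1})$ with $D_{t-1} = d_0^p+d_0^n+z(t-1)+\sum_{i=1}^{t-1}\sigma(S_i)$; any increase in $\sigma(S_s)$ raises $D_{t-1}$ for $t>s$, shrinking the factor $1+z/D_{t-1}$, so $d_T^p$ is coordinate-wise non-increasing. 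A strictly positive non-decreasing numerator divided by a strictly positive non-increasing denominator is non-decreasing, and since the IC influence spread $\sigma$ is itself monotone in its seed set, enlarging $\cS$ only (weakly) enlarges each $\sigma(S_t)$, yielding $r^{OI}_T(\cS) \leq r^{OI}_T(\cS')$ whenever $\cS \subseteq \cS'$.

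For non-submodularity I would give a concrete two-round witness. Take $T=2$, $d_0^p=d_0^n=1$, $z=1$, and two seed candidates $u,v$ with disjoint deterministic reach of size one, so that $\sigma(\{u\})=\sigma(\{v\})=1$ and $\sigma(\{u,v\})=2$. Direct evaluation of Eq.~\eqref{eq:ratio plus one(OINS)} yields $r^{OI}_2(\emptyset)=1$, $r^{OI}_2(\{(v,1)\})=5/3$, $r^{OI}_2(\{(u,2)\})=3/2$, and $r^{OI}_2(\{(v,1),(u,2)\})=11/5$. Thus the marginal gain of adding $(v,1)$ to the empty set is $2/3$, while the marginal gain of adding $(v,1)$ to the strictly larger set $\{(u,2)\}$ is $7/10 > 2/3$, violating submodularity. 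The super-additive cross term that appears when one expands the $T=2$ product in Eq.~\eqref{eq:ratio plus one(OINS)} is what drives this amplification and makes submodularity fail in general.

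The main obstacle is identifying the right vantage point for monotonicity: attacking the product form of Eq.~\eqref{eq:ratio plus one(OINS)} directly by partial differentiation yields an expression whose sign requires a telescoping bound that is not entirely transparent, whereas the numerator--denominator split makes both signs immediate. For non-submodularity the only real question is to exhibit a witness; the multiplicative coupling of rounds essentially forces the issue, and the tiny example above suffices, after one checks that it respects the OINS non-overlapping-seed constraint, which it does since $u \neq v$.
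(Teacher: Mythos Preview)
Your proposal is correct on both counts, but takes a different route from the paper in each half. For monotonicity, the paper computes the partial derivative $\partial r_T/\partial \sigma(S_i)$ directly from the product form of Eq.~\eqref{eq:ratio plus one(OINS)} and shows, after a telescoping simplification, that it is a sum of positive terms. Your numerator--denominator split of $r_T+1=(d_T^n+d_T^p)/d_T^p$ is more elementary: the total-popularity telescoping and the factored form $d_T^p=d_0^p\prod_{t}(1+z/D_{t-1})$ make the signs immediate without any calculus, which is a genuine improvement in transparency. For non-submodularity, the paper also uses an explicit counterexample but a slightly larger one: three isolated nodes $u,v,w$, parameters $d_0^n=1$, $d_0^p=2$, $z=1$, with $\cS=\{(u,1)\}\subset\cQ=\{(u,1),(v,1)\}$ and the added element $b=(w,2)$, yielding marginal gains $0.3125$ versus $0.3214$. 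Your two-node witness with $\cS=\emptyset\subset\cQ=\{(u,2)\}$ and $b=(v,1)$ is smaller and the arithmetic ($2/3$ versus $7/10$) is cleaner; both examples exploit the same multiplicative cross term between rounds that you identify. Either approach establishes the lemma; yours is the more economical of the two.
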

The proof is omitted, and the non-submodularity is due to the multiplication of submodular functions in Eq.\eqref{eq:ratio plus one(OINS)}.

\subsection{Round-Weighted Influence Maximization}
\label{sec:simplify}
Due to the non-submodularity of the popularity ratio function, it is challenging to design a good algorithm for the PRM problem directly.
In this section, we simplify the popularity ratio function to obtain a surrogate function and show that this surrogate function is monotone and submodular.
	
As indicated in the proof sketch of Lemma \ref{lemma:submodular of popularity ratio}, the complication of the original 
	objective function of Eq.\eqref{eq:ratio plus one(OINS)} is the series of multiplications, and thus this is the target of our simplification.
In particular, our simplification consists of two steps: 
	(a) expanding the multiplication series of Eq.\eqref{eq:ratio plus one(OINS)} and only keeping the first-order terms;
	(b) removing the $\sigma(S_1),\ldots, \sigma(S_{T-1})$ in the denominator of each term left after step (a). 
Step (a) decreases the objective value, while step (b) increases the objective value, compensating the effect of step (a).
Our experimental evaluation (Section~\ref{sec:modeljustify}) shows that the surrogate function we develop is reasonably close to 
the original function in all the test cases.
The following is the formula for the surrogate function (ignoring the constant factor $r_0+1$ and the additive terms in Eq.\eqref{eq:ratio plus one(OINS)}):
\begin{displaymath}
    \rho^{OI}_T(\cS) = \frac{\sigma(S_1)}{d_0^n+d_0^p+z} + \frac{\sigma(S_2)}{d_0^n+d_0^p+2z} + \cdots + \frac{\sigma(S_T)}{d_0^n+d_0^p+T\cdot z}.
\end{displaymath}
In the above formula, we can set
\begin{equation}
\label{eq:setweight}
 w_t = \frac{1}{d_0^n+d_0^p+t\cdot z}
\end{equation}
as the weight for $\sigma(S_t)$ for every round $t$, and turn the objective
	function into the following more general objective function:
\begin{equation}\label{eq:weighted im}
    \rho^{OI}_T(\cS)=\sum_{t=1}^{T} {w_{t}\cdot\sigma^{OI}(S_{t})} = \sum_{t=1}^{T} {w_{t}\cdot\sigma(S_{t})}.
\end{equation}
Function $\rho^{OI}_T(\cS)$ means that we need to find an allocation of seeds of $T$ rounds and each round $t \in [T]$ has its own importance
	weight $w_t$, which is non-negative and decreasing, i.e., $w_1 \ge w_2 \ge \cdots \ge w_T \ge 0$.
We call $\rho^{OI}_T(\cS)$ the {\em round-weighted influence function}.
	

The relationship between the popularity ratio function and round-weighted influence function is $r^{OI}_T(\cS)\approx(1+\rho^{OI}_T(\cS))(r_0+1)-1$. 
The surrogate function is summarized below as the round-weighted influence maximization problem.

\begin{definition}[Round-weighted Influence Maximization in the OINS Setting]
Given (a) a social network $G=(V,E,p)$, (b) initial popularity measures $d^n_0$ and $d^p_0$ for the novice and popular items respectively, 
(c) total round number $T$, (d) natural growth count $z$, (e) budget $k$, 
(f) the non-negative and non-increasing weight sequence $w_1, w_2, \ldots, w_T$,
the task of {\em round-weighted influence maximization (RWIM)} in the OINS setting
	is to find an optimal seed set allocation $\cS^*$ for the novice item
	under the constraints that the seeds selected for each round are disjoint, and
	the total number of seeds does not exceed $k$, such that 
	the round-weighted influence $\rho^{OI}_T(\cS^*)$ is maximized. That is,
\begin{displaymath}
\cS^* \in \argmax_{\cS \subseteq V\times [T], |\cS|\leq k, S_t\cap S_{t'} = \emptyset \forall t\neq t'} \rho^{OI}_T\left(\cS\right).
\end{displaymath}
\end{definition}

\begin{lemma} \label{lem:rhoproperty}
For any influence graph $G=(V, E, p)$, under the PA-IC model, the round-weighted influence function 
	$\rho^{OI}_T(\cS)$ defined in Eq.\eqref{eq:weighted im} is monotone and submodular.
\end{lemma}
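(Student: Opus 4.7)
The plan is to reduce the claim to the classical fact that the IC influence spread $\sigma$ is monotone and submodular (Kempe et al.\ 2003), together with the elementary observation that a non-negative linear combination of monotone submodular functions is again monotone and submodular. The slight subtlety is that here the ground set is the product $V\times[T]$ rather than $V$, so I must be careful to interpret the marginal-gain inequality on the correct ground set.

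First I would fix notation. Any seed allocation $\cS\subseteq V\times[T]$ decomposes uniquely as $\cS=\bigcup_{t=1}^{T} S_t\times\{t\}$ where $S_t=\{v:(v,t)\in\cS\}\subseteq V$. For each $t\in[T]$, define the ``slice'' function $f_t:2^{V\times[T]}\to\R$ by $f_t(\cS)=\sigma(S_t)$. Then
\begin{equation*}
\rho^{OI}_T(\cS)=\sum_{t=1}^{T} w_t\,f_t(\cS).
\end{equation*}
Since the weights $w_t$ are non-negative, it suffices to show that each $f_t$ is monotone and submodular on $V\times[T]$.

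For monotonicity of $f_t$, take $\cS\subseteq\cQ\subseteq V\times[T]$; then $S_t\subseteq Q_t$, so the classical monotonicity of $\sigma$ under the IC model yields $f_t(\cS)=\sigma(S_t)\le\sigma(Q_t)=f_t(\cQ)$. For submodularity of $f_t$, pick any $\cS\subseteq\cQ\subseteq V\times[T]$ and any element $(v,s)\in(V\times[T])\setminus\cQ$. If $s\ne t$, then adding $(v,s)$ changes neither $S_t$ nor $Q_t$, so both marginals equal zero and the submodular inequality holds trivially. If $s=t$, then $v\notin Q_t\supseteq S_t$, and the marginal of $(v,t)$ with respect to $\cS$ (resp.\ $\cQ$) equals $\sigma(S_t\cup\{v\})-\sigma(S_t)$ (resp.\ $\sigma(Q_t\cup\{v\})-\sigma(Q_t)$). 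The submodularity of $\sigma$ on $V$ then gives
\begin{equation*}
\sigma(S_t\cup\{v\})-\sigma(S_t)\;\ge\;\sigma(Q_t\cup\{v\})-\sigma(Q_t),
\end{equation*}
which is exactly the submodular inequality for $f_t$.

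Finally, because each $f_t$ is monotone submodular and $w_t\ge 0$, the sum $\rho^{OI}_T=\sum_t w_t f_t$ inherits both properties. I do not expect any real obstacle here: the only step that must be invoked rather than reproved is the monotonicity and submodularity of $\sigma$ under the IC model, which is standard and used throughout the paper. The mild bookkeeping point to be explicit about is that slicing by round turns the round-separated influence objective into an ordinary non-negative combination of monotone submodular functions on the product ground set.
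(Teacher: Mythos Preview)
Your proposal is correct and follows essentially the same approach as the paper: define the per-round slice functions $f_t(\cS)=\sigma(S_t)$ on the ground set $V\times[T]$, verify monotonicity and submodularity of each $f_t$ by the case split $s\ne t$ versus $s=t$ (reducing the latter to the classical submodularity of $\sigma$), and conclude via the non-negative linear combination. The paper's appendix proof is identical in structure and detail.
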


\subsection{Pair-wise Reverse Reachable Sets}\label{sec:PW-RR}
State-of-the-art IM algorithms, including IMM, use the reverse influence sampling (RIS) approach governed by reverse reachable (RR) sets. 
An RR set is a random set of nodes sampled from the graph by (a) first selecting a node $v$ uniformly at random from the graph, and 
(b) simulating the reverse propagation of the model (e.g., IC model) and adding all visited nodes into the RR set. 
The main property of a random RR set R is that: influence spread $\sigma(S) = N \cdot \mathbb{E}\left[\mathbb{I}\{S \cap R\neq \emptyset\}\right]$ for 
any seed set $S$, where $\mathbb{I}$ is the indicator function, $N$ is the number of nodes in graph $G$. 
After finding a large enough number of RR sets, the original influence maximization problem is turned into a $k$-max coverage problem ---
finding a set of $k$ nodes that covers the most number of RR sets, where a set $S$ covers an RR set $R$ means that $S\cap R\neq \emptyset$. 
All RIS algorithms have the following two steps: 
1. Generate a sufficiently large set of random RR sets, called Sampling step.
2. Find $k$ nodes that cover the most number of RR sets, called NodeSelection step.

%
We now introduce our adaptation of the RR set to the PRM setting, particularly, the pair-wise reverse reachable (PW-RR) set.

\textbf{PW-RR set and its generation process.}
Since the RR set is not designed to estimate the influence of nodes in different rounds, we define the pair-wise reverse reachable (PW-RR) set for the PRM problem and denote it as $R^{(t)}$. 
PW-RR can distinguish between different rounds of RR set, and it also corresponds to different weights for different rounds of RR sets so that the round-weighted influence of nodes in different rounds can be calculated.
A (random) pair-wise RR set $R^{(t)}$ is an RR set $R$ rooted at a node picked uniformly at random from $V$, and $t$ is picked uniformly at random from $[T]$,
	and $R^{(t)} = R \times \{t\}$.

\textbf{Property of PW-RR set}
The main property of random PW-RR set $R^{(t)}$'s is that we can use them to estimate the round-weighted influence spread $\rho^{OI}_T(\cS)$.

\begin{lemma}\label{lemma:property of PW-RR}
	For any set of node-round pairs $\cS$, we have
	\begin{equation} \label{eq:rhoidentity}
	\rho^{OI}_T(\cS) = N\cdot T \cdot  \mathbb{E}\left[w_t\cdot \mathbb{I}\left[\cS \cap R^{(t)} \neq \emptyset \right] \right],
	\end{equation}
	where $R^{(t)}$ is a random PW-RR set, and the expectation is taken from the randomness of
		(a) the root of a PW-RR set $R^{(t)}$ uniformly chosen at random,
		(b) the randomly generated RR set from the root, and
		(c) round number $t$ uniformly chosen at random.
\end{lemma}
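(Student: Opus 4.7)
The plan is to reduce the identity to the classical reverse-reachable-set identity $\sigma(S) = N\cdot \mathbb{E}[\mathbb{I}\{S\cap R \neq \emptyset\}]$ applied round by round, together with a direct computation of the expectation over the uniformly random round index $t$ that is built into the PW-RR set.

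First I would condition on the random round index. By the construction of $R^{(t)}$, the round $t$ is drawn uniformly from $[T]$, and given $t$, $R^{(t)} = R\times\{t\}$ where $R$ is a standard random RR set in $G$. Hence by the tower property,
\begin{align*}
\mathbb{E}\!\left[w_t\cdot \mathbb{I}\{\cS\cap R^{(t)}\neq\emptyset\}\right]
= \sum_{t=1}^{T}\Pr[\text{round}=t]\cdot w_t\cdot \mathbb{E}\!\left[\mathbb{I}\{\cS\cap (R\times\{t\})\neq\emptyset\}\right]
= \frac{1}{T}\sum_{t=1}^{T} w_t\cdot \mathbb{E}\!\left[\mathbb{I}\{\cS\cap (R\times\{t\})\neq\emptyset\}\right].
\end{align*}

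Next I would unfold the intersection event. A pair $(v,t') \in \cS \cap (R\times\{t\})$ exists if and only if $t'=t$ and $v\in R$; since $\cS\cap (V\times\{t\}) = S_t\times\{t\}$, the event $\{\cS\cap (R\times\{t\}) \neq \emptyset\}$ is equivalent to $\{S_t\cap R \neq \emptyset\}$. Therefore
\begin{equation*}
\mathbb{E}\!\left[\mathbb{I}\{\cS\cap (R\times\{t\})\neq\emptyset\}\right] = \mathbb{E}\!\left[\mathbb{I}\{S_t\cap R \neq \emptyset\}\right] = \frac{\sigma(S_t)}{N},
\end{equation*}
where the last equality is the standard RR-set identity recalled just before the lemma.

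Plugging this back in and multiplying by $N\cdot T$ gives
\begin{equation*}
N\cdot T\cdot \mathbb{E}\!\left[w_t\cdot \mathbb{I}\{\cS\cap R^{(t)}\neq \emptyset\}\right] = \sum_{t=1}^{T} w_t\cdot \sigma(S_t) = \rho^{OI}_T(\cS),
\end{equation*}
which matches the definition in Eq.\eqref{eq:weighted im}. There is no real obstacle here: the only subtlety is being careful that the two sources of randomness in the PW-RR set (the root of the RR set, and the round index $t$) are independent, so that the conditional expectation given $t$ reduces exactly to the standard RR-set expectation for $S_t$, and the uniform $1/T$ factor cancels with the external factor of $T$.
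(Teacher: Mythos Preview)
Your proof is correct and follows essentially the same approach as the paper: condition on the uniformly random round $t$, reduce the event $\{\cS\cap R^{(t)}\neq\emptyset\}$ to $\{S_t\cap R\neq\emptyset\}$, apply the standard RR-set identity $\sigma(S_t)=N\cdot\mathbb{E}[\mathbb{I}\{S_t\cap R\neq\emptyset\}]$, and collect the factors. If anything, you are slightly more explicit than the paper in justifying the step $\{\cS\cap (R\times\{t\})\neq\emptyset\}\Leftrightarrow\{S_t\cap R\neq\emptyset\}$.
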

\begin{proof}
	The lemma is shown by the following:
	\begin{align}
	& \mathbb{E}\left[w_t\cdot \mathbb{I}\left[\cS \cap R^{(t)} \neq \emptyset \right] \right] 
	= \frac{1}{T} \sum_{t=1}^{T} w_t \cdot \mathbb{E}\left[ \I \{S_t \cap R \neq \emptyset \} \right]  \label{eq:enumeratet}\\
	&=\frac{1}{T} \sum_{t=1}^{T} w_t \cdot \frac{1}{N} \sigma\left(S_t\right) = \frac{1}{N \cdot T} \rho^{OI}_T(\cS), \label{eq:oldidentity}
	\end{align}
	where the $R$ in Eq.\eqref{eq:enumeratet} is a random RR set as defined at the beginning of this section, the first equality
	in \eqref{eq:oldidentity} is by the property of the RR set as given at the beginning of this section, and the second equality in \eqref{eq:oldidentity} is
	by Eq.\eqref{eq:weighted im}.

\end{proof}
Lemma~\ref{lemma:property of PW-RR} reflects the overlapping influence (OI), and it is clearly seen in Eq.\eqref{eq:oldidentity} where influence spread of different rounds are summed up,
	allowing overlapping influence.

\subsection{Popularity Ratio Maximization IMM} \label{sec:PRMIMM}
We could base the key formula of Eq.\eqref{eq:rhoidentity} to estimate the overlapping influence spread $\rho^{OI}_T(\cS)$.
Let $\cR$ be a collection of $\theta$ PW-RR sets we generated independently, $\cR = \{R^{(t_1)}_1, R^{(t_2)}_2, \ldots, R^{(t_\theta)}_\theta\}$.
Define random variable $Y^{\cR}_i(\cS) = w_{t_i}\cdot \mathbb{I}\left[\cS \cap R^{(t_i)}_i \neq \emptyset \right]$.
We can estimate $\rho^{OI}_T(\cS)$ using the following estimator $\hat{\rho}^{OI}_T(\cS,\cR)$:
\begin{displaymath}
    \hat{\rho}^{OI}_T\left(\cS, \cR\right)=\frac{N \cdot T}{\theta} \sum_{i=1}^{\theta} Y_i^{\cR}(\cS).
\end{displaymath}

Estimator $\hat{\rho}^{OI}_T(\cS,\cR)$ can be viewed as a weighted coverage function
on a bipartite graph with $V\times [T]$ on the one side and PW-RR sets $\cR$ on the other side.

Our PRM-IMM algorithm follows the IMM structure~\cite{tang15} and consists of two components:
PRM-NodeSelection for selecting top $k$ pairs from the above bipartite graph, and PRM-Sampling for generating enough PW-RR set samples.
We now explain the main adaption of these two components to our PRM settings.



\begin{algorithm}[t]
\caption{PRM-NodeSelection-OINS}
\label{alg:nodeselection}
\leftline{\textbf{Input:} PW-RR sets $\cR$, budget $k$, weights $(w_t)_{t\in[T]}$ (Eq.\eqref{eq:setweight})}
\leftline{\textbf{Output:} seed set $\hat{\cS}^g$}
\begin{algorithmic}[1]
\STATE $RR[(v,t)] = \{R\in \cR | (v,t) \in R\}$, $c[(v,t)]= w_t \cdot |RR[(v,t)]|$, $\forall (v,t)\in V\times [T]$ 
	/* $RR[(v,t)]$'s and $c[(v,t)]$'s can be constructed during the generation of $\cR$ */
\STATE for all $R\in\cR, t \in [T], covered[R]= {\it false}$
\STATE $\cS=\emptyset$ /* the element of $\cS$ is $(v,t)$ */
\FOR{$i=1$ to $k$}        
    \STATE \label{line:argmax1} 
    	$(v, t)=\operatorname{argmax}_{(v, t) \in\left(V \backslash \cS_{\text {node}}\right) \times[k]}
    c[(v,t)]$      \label{line 1}
    \STATE /*$\cS_{node}$ is the node set of $\cS$ */   
    \STATE $\cS = \cS \cup \{(v,t)\}$ 
    \FOR{$R \in RR[(v,t)] \wedge covered[R]== {\it false} $} 
        \STATE $covered[R] = true$
        \FOR{$(u,t) \in R \wedge u \neq v $}
            \STATE $c[(u,t)] = c[(u,t)] - w_t$
        \ENDFOR
    \ENDFOR
\ENDFOR
\RETURN $\hat{\cS}^g = \cS $
\end{algorithmic}
\end{algorithm}

\textbf{PRM-NodeSelection-OINS}
Algorithm ~\ref{alg:nodeselection} provides the pseudocode for the PRM-NodeSelection under the OINS setting,
which is patterned from the similar algorithms in~\cite{tang15,chen20}.
It mainly implements a greedy selection of $k$ pairs to cover the PW-RR sets.
Several points worth mentioning are:
(a) the greedy algorithm works on PW-RR sets for $(v,t)$ pairs, as described in Section~\ref{sec:PW-RR};
(b) the overlapping influence (OI) aspect is reflected in the PW-RR set design and Lemma~\ref{lemma:property of PW-RR}; and
(c) the non-overlapping seed (NS) aspect is reflected in line~\ref{line:argmax1}, where nodes already selected, $\cS_{\text {node}}$, are removed from
further consideration in seed selection.

Due to the NS requirement, the feasible solutions for the bipartite coverage problem actually form a partition matroid.\footnote{A matroid $(E,\cI)$ on a finite ground element set $E$ is a collection $\cI$ of subsets of $E$,
	each called an independent set, with the following two properties: (a) a subset of an independent set is still an independent set; and (b) for two independent sets
	$I_1,I_2 \in \cI$, if $|I_1| < |I_2|$, then there is at least one element $e\in I_2 \setminus I_1$ such that $I_1 \cup \{e\}$ is also an independent set.
	A partition matroid is a matroid where the ground set $E$ is partitioned into disjoint sets $E_1, E_2, \ldots, E_s$, and each $E_i$ has a capacity constraint $b_i$ such that 
	for all $I \in \cI$, $|I \cap E_i| \le d_i$.
}
The greedy algorithm on the partition matroid gives a $1/2$ approximation of the optimal solution.\footnote{We are aware that there is a polynomial-time algorithm that achieves $(1-1/e)$-approximation for submodular maximization under a matroid constraint~\cite{CalinescuCPV11}, but the algorithm is highly inefficient, and thus we choose the simple greedy algorithm for our implementation.
}

\textbf{PRM-Sampling}
The sampling procedure also follows the IMM structure (Algorithm \ref{alg:PRM-IMM}), in that it uses iterative halving to estimate a lower bound $LB$ of the optimal solution $\OPT$, and then use
	$LB$ to determine the final number of PW-RR sets needed.
The parameters $\alpha,\beta$ are defined below:

\begin{equation} \label{eq:alphabeta}
	\alpha=\sqrt{l \ln N+\ln 4}~,\quad \beta=\sqrt{\frac{1}{2} \cdot\left(\ln \binom{N}{k} + \alpha^2 +k \ln T\right)}.
\end{equation}

There are a few differences compared to the classical IMM.
First, the $\theta_i$ (line~\ref{line:theta}) and $\tilde{\theta}$ (line~\ref{line:tildetheta}) have additional coefficients $w_1$ and $T$ in the numerator. 
Factor $T$ is because we allocate seeds in $T$ rounds.
Factor $w_1$ is technically because random variable $Y^{\cR}_i(\cS) / w_1$ is bounded within $[0,1]$, so that we can apply
	the Chernoff bound on it to obtain the desired result.
Second, the parameter $\beta$ also adds an extra $k\ln T$, because in our problem, the number of possible solutions is upper bounded
	by $\binom{N}{k}\cdot T^k$, which means that $k$ nodes are selected among $N$ nodes
	and each selected node may be placed in any of the $T$ rounds.
Of course, our approximation guarantee is also changed from $1-1/e-\varepsilon$ to $1/2 -\varepsilon$, because we are dealing with a more general partition matroid constraint.
The regeneration of PW-RR sets in line~\ref{line:regenerateRRsets} ensures that all PW-RR sets in $\cR$ are mutually independent \cite{chen18}.

\begin{algorithm}[t]
\caption{PRM-IMM-OINS algorithm}
\label{alg:PRM-IMM}
\leftline{\textbf{Input:}
directed graph $G=(V,E)$, round number $T$, budget $k$,} 
\leftline{weights $(w_t)_{t\in[T]}$ (Eq.\eqref{eq:setweight}), approximate ratio parameter $\varepsilon$,}
\leftline{Error probability parameter $l$.}
\leftline{\textbf{Output:}  $\hat{\cS}^{g}$}
\begin{algorithmic}[1]
\STATE $\cR=\emptyset, LB=w_1, \varepsilon' =\sqrt{2}, \theta_0 = 0$  
\FOR{$i=1$ to $\left\lfloor\log_{2}N \right\rfloor -1 $}
\STATE $x_i = \frac{\sum_{t=1}^{T} w_t \cdot N}{2^i}$ \label{alg:PRM-IMM 1}

\STATE \label{line:theta}
	$\theta_i = \left\lceil\frac{w_1\cdot N\cdot T\left(2+\frac{2}{3} \varepsilon^{\prime}\right)\left(\ln\log_{2}N+2\beta^2\right)}{\varepsilon^{\prime^{2}} x_{i}}\right\rceil$
	
\STATE independently generate $\theta_i - \theta_{i-1}$ PW-RR sets into $\cR$: 
\STATE $\cS$ = PRM-NodeSelection-OINS$(\cR,k, (w_t)_{t\in[T]})$
\IF{$\hat{\rho}_T\left(\cS,\cR\right) \geq \left(1+\varepsilon'\right)x_i$} 
\STATE $LB = \frac{\hat{\rho}_T\left(\cS,\cR\right)}{1+\varepsilon'}$ 
\STATE \textbf{break} 
\ENDIF
\ENDFOR
\STATE \label{line:tildetheta} 
	$\tilde{\theta}=\left\lceil \frac{2 w_1\cdot N\cdot T\left(\frac{1}{2} \alpha+\beta\right)^{2}}{L B \cdot \varepsilon^{2}}\right\rceil$ /* $\alpha$ and $\beta$ is defined in Eq.\eqref{eq:alphabeta}.*/
\STATE \label{line:regenerateRRsets} clear the $\cR$, and regenerate $\tilde{\theta}$ PW-RR sets into $\cR$
\STATE $\hat{\cS}^{g}$ = PRM-NodeSelection-OINS$(\cR,k, (w_t)_{t\in[T]})$
\STATE return $\hat{\cS}^{g}$

\end{algorithmic}

\end{algorithm}

The following theorem summarizes the correctness and time complexity of our algorithm PRM-IMM.
\begin{theorem}\label{theorem:main result}
	Let $\hat{\cS}^{g}$ be the output of PRM-IMM-OINS and $\cS^*$ be the optimal solution of round-weighted influence maximization problem. 
	We have that with the probability at least $(1-\frac{1}{N^l})$, 
	\begin{displaymath}
	\rho^{OI}_T(\hat{\cS}^{g}) \geq \left(\frac{1}{2}-\varepsilon \right)\rho^{OI}_T(\cS^{*}),
	\end{displaymath}
	where $\varepsilon > 0$, $l > 0$. 
	The expected running time of PRM-IMM-OINS is $O((k+l)(M+N)T\log (N\cdot T)/\varepsilon^2)$. 
\end{theorem}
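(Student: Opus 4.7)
The plan is to follow the two-phase analysis of the IMM algorithm of Tang et al.~\cite{tang15}, but adapted along three axes: (i) the objective is the weighted round-weighted influence $\rho^{OI}_T$ instead of the unweighted $\sigma$, (ii) the feasible region is a partition matroid (one capacity per round with global budget $k$) rather than a uniform matroid, and (iii) the universe of feasible solutions has size $\binom{N}{k} T^k$, which must enter the union bound. These three changes explain exactly the modifications visible in Eq.~\eqref{eq:alphabeta} (the extra $k\ln T$ inside $\beta$) and in the $\theta_i,\tilde\theta$ formulas (the extra $w_1$ and $T$ factors), and they also explain the degradation of the approximation ratio from $1-1/e-\varepsilon$ to $1/2-\varepsilon$.

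\paragraph{Step 1: Concentration of $\hat\rho^{OI}_T$.}
Fix any feasible allocation $\cS$ with $|\cS|\le k$. By Lemma~\ref{lemma:property of PW-RR}, $\E[Y^{\cR}_i(\cS)] = \rho^{OI}_T(\cS)/(NT)$, and since $w_t \le w_1$ and the indicator is $\{0,1\}$-valued, $Y^{\cR}_i(\cS)/w_1 \in [0,1]$. I would apply a multiplicative Chernoff bound to the i.i.d.\ sum $\sum_i Y^{\cR}_i(\cS)/w_1$, which is exactly why the extra $w_1$ factor appears in $\theta_i$ and $\tilde\theta$: it rescales the variables so that the $[0,1]$-bounded Chernoff inequality applies. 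The number of feasible allocations is at most $\binom{N}{k}T^k$, so a union bound together with the Chernoff bound yields the desired uniform concentration, and the $k\ln T$ contribution inside $\beta$ is precisely what absorbs the $T^k$ factor.

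\paragraph{Step 2: Lower bound estimation loop and choice of $\tilde\theta$.}
I would argue, in the style of~\cite{tang15}, that the doubling loop produces, with failure probability at most $1/N^l$ (split equally across the $\lfloor\log_2 N\rfloor-1$ iterations), a value $LB$ satisfying $LB \le \rho^{OI}_T(\cS^*)$. Given $LB$, the size $\tilde\theta$ of the final PW-RR collection $\cR$ is chosen so that two events hold simultaneously with high probability: (a) $\hat\rho^{OI}_T(\cS^*,\cR) \ge (1-\varepsilon_1)\rho^{OI}_T(\cS^*)$, and (b) for every feasible $\cS$, $\hat\rho^{OI}_T(\cS,\cR) \le \rho^{OI}_T(\cS) + \varepsilon_2\cdot\rho^{OI}_T(\cS^*)$. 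The split of $\varepsilon$ into two parts via $\alpha$ and $\beta$ (Eq.~\eqref{eq:alphabeta}) is the standard trick; the novelty in our setting is that (b) must hold over the partition-matroid feasible region, handled by the $k\ln T$ term.

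\paragraph{Step 3: From sample-approximate to truly approximate.}
Algorithm~\ref{alg:nodeselection} performs a greedy maximization of the set function $\cS \mapsto \hat\rho^{OI}_T(\cS,\cR)$ under a partition matroid constraint. This sample function is a weighted coverage function (hence monotone and submodular), so by the classical $1/2$-approximation of greedy for monotone submodular maximization under a matroid constraint~\cite{NWF78}, $\hat\rho^{OI}_T(\hat{\cS}^g,\cR) \ge \tfrac{1}{2}\hat\rho^{OI}_T(\cS^*,\cR)$. Chaining this with events (a) and (b) of Step~2 gives
\begin{align*}
\rho^{OI}_T(\hat{\cS}^g) &\ge \hat\rho^{OI}_T(\hat{\cS}^g,\cR) - \varepsilon_2 \rho^{OI}_T(\cS^*) \\
&\ge \tfrac{1}{2}\hat\rho^{OI}_T(\cS^*,\cR) - \varepsilon_2 \rho^{OI}_T(\cS^*) \\
&\ge \left(\tfrac{1}{2}-\varepsilon\right)\rho^{OI}_T(\cS^*),
\end{align*}
after choosing $\varepsilon_1,\varepsilon_2$ appropriately so that $\tfrac{1}{2}(1-\varepsilon_1)-\varepsilon_2 \ge \tfrac{1}{2}-\varepsilon$. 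Tracking the failure probabilities of the doubling loop and of events (a)+(b) and applying a final union bound yields the $1-1/N^l$ success probability.

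\paragraph{Step 4: Running time.}
The cost is dominated by sampling $O(\tilde\theta)$ PW-RR sets plus the greedy selection. A single PW-RR set is an ordinary RR set with an attached round index, so its expected generation cost is the standard $O(\mathrm{EPT})$ bound known to be $O((k+l)(M+N)\log N/(N\cdot\OPT))$ in IMM~\cite{tang15}, and substituting $\tilde\theta = O(w_1 N T k\log(NT)/(LB\varepsilon^2))$ together with the standard $LB = \Theta(\OPT/\text{polylog})$ relationship gives the claimed $O((k+l)(M+N)T\log(NT)/\varepsilon^2)$. The main obstacle I expect here, and indeed the only nontrivial new calculation, is verifying that the scaling relation between $LB$ and the expected PW-RR width still holds after introducing the weights $w_t$ and the round dimension; this reduces to noting that the objective has been multiplicatively shifted by a uniform factor at most $w_1$, which cancels with the $w_1$ that was inserted in $\tilde\theta$ for the Chernoff rescaling in Step~1. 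The hard part, conceptually, is making the partition-matroid union bound and the greedy-with-$1/2$-approximation fit together cleanly; once that is in place, everything else is a bookkeeping adaptation of the IMM argument.
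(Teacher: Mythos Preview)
Your proposal is correct and tracks the paper's proof closely: Chernoff on the rescaled variables $Y_i/w_1$, a union bound over $\binom{N}{k}T^k$ feasible allocations, the doubling loop to certify $LB\le\OPT$, and the $1/2$-approximation of greedy over the (truncated) partition matroid. Two small corrections: the partition matroid has capacity one \emph{per node} (each $v$ may appear in at most one round), not ``one capacity per round''; and your $\mathrm{EPT}$ expression in Step~4 is garbled---$\mathrm{EPT}=\frac{M}{N}\E_v[\sigma(\{v\})]$, and the bound you need is $\E_v[\sigma(\{v\})]\le \OPT/w_1$ (since placing any single node in round~$1$ is feasible), which is exactly what makes the $w_1$ cancel as you correctly anticipate.

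One substantive omission: your chaining in Step~3 applies the concentration event (b) to $\hat{\cS}^g$, which depends on $\cR$, so you need the final sample collection to be independent of the random $\tilde\theta$ produced by the doubling loop. The paper handles this by \emph{regenerating} $\cR$ from scratch after $LB$ is fixed (line~\ref{line:regenerateRRsets}); without this step the events $\{\tilde\theta=\theta\}$ and $\{\hat\rho(\cdot,\cR[\theta])$ concentrates$\}$ are correlated and the final union bound does not factor. You should make this independence explicit when you write the proof.
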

The proof of the theorem follows the analysis structure of IMM~\cite{tang15}, and utilizes the key result in Lemma~\ref{lemma:property of PW-RR}.

We can see that PRM-IMM-OINS provides the approximation guarantee while still runs in time near linear to the graph size.
Compared to the original IMM, the running time is increased by a factor of $T$,  caused by the spreading of search of seeds across $T$ rounds rather than in one round.

\section{Results on the NIOS Setting}

We now turn to the non-overlapping influence with overlapping seeds (NIOS) setting.
The NI setting causes important changes to the objective function and the algorithm design, as we will describe in this section.

\subsection{Objective Function under the NIOS Setting}
Similar to Eq.\eqref{eq:ratio plus one(OINS)} for the OINS setting, we have 
\begin{align}\nonumber
    r^{NI}_T(\cS)  = \left(r_0+1\right)\prod_{t=1}^T \left(1+\frac{\sigma^{NI}(S_t)}{d_0^n+d_0^p+z\cdot t+\sum_{i=1}^{t-1} \sigma^{NI}(S_i) }\right)-1.
\end{align}
The main difference is that it uses the marginal influence  $\sigma^{NI}(S_t)$ as defined in Eq.\eqref{eq:influence spread(NI)}, 
	because influence cannot be overlapping under NIOS.
With a similar heuristic simplification, we obtain the round-weighted influence function with $w_t$ defined in \eqref{eq:setweight}:
\begin{align}\label{eq:weighted im NIOS}\nonumber
    \rho^{NI}_T(\cS)=\sum_{t=1}^{T} {w_{t}\cdot\sigma^{NI}(S_t)}.
\end{align}
One can show that $\rho^{NI}_T(\cS)$ has the following equivalent form:
\begin{equation}\label{eq:weighted im NIOS2}
    \rho^{NI}_T(\cS)=\sum_{v\in V}{\E_{L_1,\ldots, L_T} \left[\max_{t\in [T]}w_t \cdot \I\{v\in \Gamma(S_t,L_t)\} \right]}.
\end{equation}
This is because for any node $v\in V$, it may be influenced by some pair $(u,t) \in \cS$ in round $t$, 
	and when multiple seed-round pairs occur, the one with the largest weight among these pairs is taken to calculate the final influence to $v$.
Based on Eq.\eqref{eq:weighted im NIOS2}, one can show that $\rho^{NI}_T(\cS)$ is monotone and submodular.

\subsection{Efficient Algorithm for NIOS setting}
The PRM-IMM-NIOS algorithm follows the same structure as the PRM-IMM-OINS algorithm, but due to the non-overlapping influence requirement and the round-based
	weights nature, the RR set and the node selection procedure has significant difference, as we now explain.
	
For the RR set, for each randomly selected root $v\in V$, we need to generate $T$ RR sets rooted at $v$, denoted as $R_{v,t},t\in [T]$ in pair notation. 
The multi-round RR set (MR-RR set) is the union of all $T$ RR sets, as $R_v = \bigcup_{t=1}^{T} R_{v,t}$.
This is because in the NI setting a node $v$ is only influenced once in all $T$ rounds.
The MR-RR set is similar to the one used in~\cite{sun2018MRIM}, but here different rounds have different weights, leading to a more complicated node selection
procedure.

In the PRM-NodeSelection-NIOS procedure (Algorithm~\ref{alg:nodeselection NIOS}), we maintain a variable $t_R$ for each MR-RR set $R$, and it denotes
	the lowest round number $t$ in which the root $v_R$ of $R$ has been influenced by some already selected seed nodes.
Initially, $t_R = T+1$, since no seed has been selected and thus no node is influenced.
We also use variable $c[(u,t)]$ to store the remaining marginal influence of pair $(u,t)$ on all MR-RR sets.
Variables $t_R$ and $c[(u,t)]$ are used to keep track of the marginal influence to $v_R$ when some additional pair $(v,t)$ is selected as a seed
	(lines~\ref{line:update1}--~\ref{line:update2}):
	when $t < t_R$, this means that the root $v_R$ is now influenced at the lowest round $t$, and thus
	for any other pair $(u,t') \in R$ with $t' < t_R$, if $t' \ge t$, it no long generates any marginal influence to $v_R$
	(the marginal influence is reduced by $w_{t'} - w_{t_R}$, line~\ref{line:zeromarginal});
	and if $t' < t$, the remaining marginal influence $(u,t')$ could generate to $v_R$ through $R$ is $w_{t'} - w_t$
	(the marginal influence is reduced by $w_t-w_{t_R}$, line~\ref{line:positivemarginal}).
This makes this node selection procedure more complicated than the OINS setting.
Note that in line~\ref{line:overlapseeds}, we select the next seed from $V \times[k]\backslash \cS$, reflecting that we allow overlapping seeds in different rounds.

\begin{algorithm}[t]
\caption{PRM-NodeSelection-NIOS}
\label{alg:nodeselection NIOS}
\leftline{\textbf{Input:} the set of MR-RR sets $\cR$, seeds budget $k$,}
\leftline{weights $(w_t)_{t\in[T]}$ (Eq.\eqref{eq:setweight}). particularly, the value of $w_{T+1}$ is $0$.}
\leftline{\textbf{Output:} $k$ size seed set $\hat{\cS}^g$.}
\begin{algorithmic}[1]
\STATE $RR[(v,t)] = \{R\in \cR | (v,t) \in R\}$, $c[(v,t)]= w_t \cdot |RR[(v,t)]|$, $\forall (v,t)\in V\times [T]$ 
/* $RR[(v,t)]$'s and $c[(v,t)]$'s can be constructed during the generation of $\cR$ */
\STATE for all $R\in\cR, t_R = T+1$; $w_{T+1} = 0$
\STATE $\cS=\emptyset$ /* the element of $\cS$ is $(v,t)$ */
\FOR{$i=1$ to $k$}        
    \STATE $(v, t)=\operatorname{argmax}_{(v, t) \in V \times[k]\backslash \cS} \label{line:overlapseeds}
    c[(v,t)]$      
    \STATE $\cS = \cS \cup (v,t)$ 
    \FOR{$R \in RR[(v,t)] \wedge  t < t_R $} \label{line:update1}
        \FOR{$(u,t') \in R \wedge (u,t') \neq (v,t) \wedge t' < t_R$}
            \IF{$t' \geq t$}
                \STATE $c[(u,t')] = c[(u,t')] -(w_{t'}-w_{t_R})$ \label{line:zeromarginal}
            \ELSE
                \STATE $c[(u,t')] = c[(u,t')] -(w_{t} - w_{t_R})$ \label{line:positivemarginal}
            \ENDIF
        \ENDFOR
        \STATE $t_R = t$
    \ENDFOR \label{line:update2}
\ENDFOR
\RETURN $\hat{\cS}^g = \cS$
\end{algorithmic}
\end{algorithm}

In the sampling part, similar to OINS setting, we can iteratively guess the lower bound $x_i$ by halving and verifying through NodeSelection. 
We only need to modify parameter $\beta$ to 
$\sqrt{\frac{1}{2} \cdot\left(\ln \binom{NT}{k}+\alpha^{2}\right)}$,
and the $\binom{NT}{k}$ is due to selecting $k$ pairs from all $NT$ pairs in the overlapping seeds setting.
All other aspects of sampling part remains the same, and thus we ignore the pseudocode here.
For the other two problems mentioned in Section \ref{sec:problem definition}, NINS and OIOS, we can plug in different components in OINS and NIOS to achieve the goal,
	and we omit the details here.

\section{Other Variants of PRM Problem} \label{sec:variants}
In this section, we show that our object function analysis and algorithm design are robust in that they can be adapted to a number of variants
	of the PRM problem.
\subsection{Per-round Budget}
In the PRM definition (Definition~\ref{def:PRM}), the budget $k$ is over all $T$ rounds.
One variant is that every round has its own budget $k$, similar to~\cite{sun2018MRIM}.
It also has four settings as in Table~\ref{Four settings}.
Among them, the OIOS setting is the same as solving the traditional IM problem and then using the same seed set for all rounds.
Other settings can be solved following the similar technique as the overall budget case, but when selecting next node-round pair as a seed, it needs to respect the
	per-round budget rather than the overall budget.
\subsection{Variable Natural Growth Count}
\label{sec:variable z}
We can relax the assumption that the natural growth count $z$ is constant for all rounds to allow a 
	natural growth vector $\bz = [z_1, z_2, \cdots, z_T]$ with $z_t$ denoting the natural customer count in round $t$.
Then, similar to the derivation process in Section~\ref{sec:OINSobj}, 
we can obtain the formula for $r^{OI}_T(\cS)$ as follows:
\begin{equation*}
	r^{OI}_T(\cS)  = \left(r_0+1\right)\prod_{t=1}^T \left(1+\frac{\sigma(S_t)}{d_0^n+d_0^p+\sum_{i=1}^{t}z_i+\sum_{i=1}^{t-1} \sigma(S_i)}\right)-1.
\end{equation*}
The derivation of the surrogate function $\rho^{OI}_T(\cS)$ and the rest algorithm design is the same, and we just need to redefine the weight 
$w_t = 1 / (d_0^n+d_0^p+\sum_{i=1}^{t}z_i)$.

\subsection{Popular Item Engages in Promotion} \label{sec:popularPromotion}
Our PRM problem considers passive popular item that does not engage in promotion.
When popular item also engages in promotion, the situation is more complicated.
Here we assume that the promotion of the popular item in all rounds are known, and the promotional result on round $t$ is given as $p_t$, $t\in [T]$.

The case of unknown popular item promotions would lead the problem into the game-theoretic setting, which is beyond the scope of this paper.
When $p_1, \ldots, p_T$ are known, in the OINS setting, the close form of the objective function $r^{OI}_T(\cS)$ is not easily derived as in Lemma~\ref{lem:OINSobjective}.
Instead we can obtain its upper and lower bound as follows.
\begin{align}
	&r_{T}\le \left(r_{0}+1\right) \prod_{t=1}^{T}\left(1+\frac{\sigma\left(S_{t}\right)}{d_{0}^{n}+d_{0}^{p}+z t+\sum_{i=1}^{t-1} (\sigma(S_{i}) + p_{i})}\right)-1, \label{eq:upperratio}\\
	&r_{T} \ge \left(r_{0}+1\right) \prod_{t=1}^{T}\left(1+\frac{\sigma\left(S_{t}\right)-p_{t}}{d_{0}^{n}+d_{0}^{p}+z t+\sum_{i=1}^{t-1} (\sigma(S_{i})+ p_{i})+p_{t}}\right)-1. \label{eq:lowerratio}
\end{align}
Eq.\eqref{eq:upperratio} can be viewed as treating $p_t$ as the additional natural growth in round $t$, and thus disfavor of the popular item and in favor of the novice item.
Eq.\eqref{eq:lowerratio} works when $d_p^t \ge d_n^t$, i.e. popular item is still the majority in popularity.
These bounds lead to two weight settings:
\begin{equation*}
	\overline{w}_t = \frac{1}{d_0^n+d_0^p+t\cdot z + \sum_{i=1}^t p_i},\quad \underline{w}_t = \frac{1}{d_0^n+d_0^p+t\cdot z+ \sum_{i=1}^t p_i + p_t}.
\end{equation*}
Then following the sandwich approximation strategy~\cite{LCL2016}, we can apply the two sets of weights to the corresponding PRM-IMM algorithm, obtain two seed allocations, and compare their results
	to find the better one.
The approximation ratio for the surrogate objective would have an extra factor of $\max_{t\in T} \underline{w}_t / \overline{w}_t$.

\subsection{Two Other Optimization Objectives}\label{sec:solve two variants}
Beyond the fundamental problem of maximizing the popularity ratio, it is natural to investigate two other variants of the PRM problem in surpassing the competitor:  
minimizing seed budget and minimizing the number of rounds needed. 

\begin{itemize}
    \item Seed Minimization: Given a round budget $T$, minimize the number of seeds in $\cS$ such that the final popularity ratio is at least $1$, i.e., $r_T(\cS)\ge 1$.
    \item Round Minimization: Given a seed budget $k$, minimize the round number $T$ such that the final popularity ratio is at least $1$, i.e., $r_T(\cS)\ge 1$.
\end{itemize}
Note that $r_T(\cS)\ge 1$ means that the novice item has successfully caught up or surpassed the popular item at the end of the promotional period.
For the seed minimization problem, we first find the range of minimum budget by doubling the budget from $1$, and then determine the minimum budget that would make $r_T \ge 1$ by 
	a binary search on the budget, and for each budget tried, we use the PRM-IMM algorithms developed.
Same binary search idea can be applied to the round minimization problem. 

\begin{table}
    \caption{Datasets}
    \label{dataset}
    \begin{tabular}{p{1.2cm}p{0.4cm}p{0.7cm}p{0.8cm}p{0.6cm}p{0.45cm}p{0.55cm}p{0.55cm}}\\
     \toprule[2pt]
      & DM & LastFM & FX01/02 & HEPT  & Yelp & DBLP & LJ \\
     \hline
      nodes& 0.6K & 1.8K & 29.3K & 27.7K & 110K & 650K & 4800K \\
      edges& 3.3K & 12.7K& 212.6K & 352.8K  & 950K & 1900K & 42800K \\
     avg.degree & 4.96 & 13.44 & 7.24 & 12.70 & 15.92 & 3.04 & 8.83 \\
     \bottomrule[2pt]
\end{tabular}
\end{table}

\begin{table}
    \caption{Parameter Settings}
    \label{parameter setting}
    \begin{tabular}{p{0.2cm}p{0.3cm}p{0.7cm}p{0.45cm}p{0.45cm}p{0.55cm}p{0.55cm}p{0.55cm}p{0.6cm}}\\
    \toprule[2pt]
            & DM  & LastFM & FX01 & FX02& HEPT  & Yelp & DBLP & LJ \\
     \hline
      $r_0$ & 0.25 & 0.067 & 0.13  & 0.067 & 0.25 & 0.02 & 0.025  & 0.025 \\
      $d_0$ & 900  & 1600 & 1700  & 1600 & 5000 & 51000    & 20500 & 205000 \\
      $z$   & 10   & 100  & 50    & 100 & 150 & 2000     & 2500 & 25000 \\
     \bottomrule[2pt]
\end{tabular}
\end{table}

\subsection{ Multiple Items}
\label{sec:multiItem}
In this section, we discuss the PRM problem when there are multiple items in the system.

\textbf{The Novice Item View Point, with Multiple Promotional Items}
We first focus on the view point from the targeted novice item, which is trying to increase its popularity compared with other items. Other competing items may or may not have their own promotions. 
We denote other items as items $1, \ldots, s$, and the popularity of item $j\in [s]$ at the end of round $t$ is $d_t^j$.

We do not need to distinguish between popular and novice items among other items, and only need to denote whether an item conducts promotion or not. 
Each item $j \in [s]$ may conduct promotion $p_t^j$ in round $t$, and $p_t^j=0$ means no promotion.
The popularity ratio of our targeted novice item, $r_t$, is defined as $r_t = d_t^n /(\sum_{j=1}^s d_t^j)$. 
The following theorem demonstrates that, even in the multiple-item case, we are able to derive the upper and lower bounds for the popularity ratio $r_T$, similar to the results presented in Section~\ref{sec:popularPromotion}.
\begin{theorem}\label{theorem:multi-item promotion}
	In the setting of multiple items with promotions, we can bound $r_T$ as follows: 
	\begin{align}
	& r_T \le \left(r_{0}+1\right) \prod_{t=1}^{T}\left(1+\frac{\sigma\left(S_{t}\right)}{\zeta}\right)-1, \label{eq:multi-item upperratio}\\ 
	& r_T \ge \left(r_{0}+1\right) \prod_{t=1}^{T}\left(1+\frac{\sigma\left(S_{t}\right)-\sum_{j=1}^{s}p_{t}^{j}}{\zeta+\sum_{j=1}^{s}p_{t}^{j}}\right)-1, \label{eq:multi-item lowerratio}
	\end{align}
 where $\zeta = d_{0}^{n}+\sum_{j=1}^{s}(d_{0}^{j})+z t+\sum_{i=1}^{t-1} (\sigma(S_{i}) + \sum_{j=1}^{s}p_{i}^{j})$,
 	and Inequality~\eqref{eq:multi-item lowerratio} holds when $d_t^n \le \sum_{j=1}^s d_t^j$, for $t <T$. 
\end{theorem}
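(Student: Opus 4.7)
The plan is to reduce the multi-item setting to the single-competitor setting of Section~\ref{sec:popularPromotion} and then mirror the recursive telescoping used in the proof of Lemma~\ref{lem:OINSobjective}. Let $A_t := \sum_{j=1}^s d_t^j$ and $P_t := \sum_{j=1}^s p_t^j$ denote, respectively, the aggregate competitor popularity and the aggregate competitor promotion in round $t$. Summing the per-item update $d_t^j = d_{t-1}^j(1 + z/D_{t-1}) + p_t^j$ over $j$ gives $A_t = A_{t-1}(1+z/D_{t-1}) + P_t$, which has exactly the same form as the single-popular-item dynamics with promotion $P_t$. The popularity ratio becomes $r_t = d_t^n/A_t$, and the hypothesis $d_t^n \le \sum_j d_t^j$ translates to $r_t \le 1$. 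Consequently, the multi-item bounds \eqref{eq:multi-item upperratio}--\eqref{eq:multi-item lowerratio} reduce to the single-item bounds \eqref{eq:upperratio}--\eqref{eq:lowerratio} under the substitution $d_t^p \mapsto A_t$ and $p_t \mapsto P_t$.

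To execute the underlying single-item argument, I would first derive the recursive identity
\begin{equation*}
  r_t + 1 \;=\; (r_{t-1}+1) \cdot \frac{\zeta + \sigma(S_t) + P_t}{\zeta + (r_{t-1}+1)\,P_t},
\end{equation*}
with $\zeta = D_{t-1}+z$, by substituting $A_t = A_{t-1}\,\zeta/D_{t-1} + P_t$ and $A_{t-1} = D_{t-1}/(r_{t-1}+1)$ into $r_t+1 = D_t/A_t$. For the upper bound, the inequality $(r_{t-1}+1)P_t \ge P_t$ shrinks the denominator, so the per-round factor is at most $(\zeta+\sigma(S_t)+P_t)/(\zeta+P_t) \le 1 + \sigma(S_t)/\zeta$; telescoping over $t = 1, \ldots, T$ produces \eqref{eq:multi-item upperratio}.

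The lower bound is more delicate. I would use the hypothesis $r_t \le 1$ to upper bound $r_{t-1}+1$ in the denominator of the per-round factor, then reduce the desired inequality, by cross-multiplication, to a statement of the form $P_t\bigl[\zeta(1-r_{t-1}) + P_t - r_{t-1}\sigma(S_t)\bigr] \ge 0$. The main obstacle is exactly this nonnegativity check: the naive bound $r_{t-1}+1 \le 2$ is not tight enough on its own; instead one has to extract from the round-$t$ hypothesis $r_t \le 1$ a sharper constraint on $r_{t-1}$ in terms of $\sigma(S_t)$ and $P_t$ (via the recursion above), which closes the gap in the \emph{novice-dominated} regime $\sigma(S_t) \ge P_t$. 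Once this per-round comparison is verified in the single-competitor case of Section~\ref{sec:popularPromotion}, the reduction in the first paragraph immediately lifts it verbatim to the multi-item theorem, and telescoping completes the proof.
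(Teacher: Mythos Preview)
Your reduction to a single aggregate competitor (with $A_t=\sum_j d_t^j$ and $P_t=\sum_j p_t^j$) and the recursive identity
\[
r_t+1=(r_{t-1}+1)\,\frac{\zeta+\sigma(S_t)+P_t}{\zeta+(r_{t-1}+1)P_t},\qquad \zeta=D_{t-1}+z,
\]
are exactly what the paper does, and your upper-bound step $(r_{t-1}+1)P_t\ge P_t$ followed by $\frac{\zeta+\sigma+P_t}{\zeta+P_t}\le 1+\frac{\sigma}{\zeta}$ matches the appendix verbatim.

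Where you diverge is the lower bound. The paper does \emph{not} use any sharpened constraint coming from $r_t\le 1$; it simply plugs $r_{t-1}+1\le 2$ into the denominator to get
\[
r_t+1\;\ge\;(r_{t-1}+1)\,\frac{\zeta+\sigma(S_t)+P_t}{\zeta+2P_t}
\;=\;(r_{t-1}+1)\left(1+\frac{\sigma(S_t)-P_t}{\zeta+2P_t}\right),
\]
and telescopes. In other words, the appendix derivation actually lands on the denominator $\zeta+2P_t$, not $\zeta+P_t$; the extra $P_t$ in the stated bound is a discrepancy between the theorem text and its proof. Your ``obstacle'' --- that $r_{t-1}+1\le 2$ is not tight enough --- only arises because you are trying to hit $\zeta+P_t$ exactly. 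If you aim for what the proof actually proves, the naive bound suffices in one line. Your proposed fix (deducing a sharper bound on $r_{t-1}$ from $r_t\le 1$) also has a loose end: at the last step $t=T$ the hypothesis only gives $r_{T-1}\le 1$, not $r_T\le 1$, so that refinement is unavailable precisely where you would need it.
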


The implication of the above theorem is that, from the perspective of the targeted novice item, all other items can be treated as one single composite item ---
	as long as we know the combined promotional effect of all other items, namely $\sum_{j=1}^{s}p_{t}^{j}$, we can treat all other items as one single item
	and compute the promotion to the novice item in the same way as described in Section~\ref{sec:popularPromotion}.
Moreover, it is easy to verify that if none of the other items conduct promotion, i.e. $p_t^j=0$ for all $t$'s and $j$'s, we can treat all other items as
	a single composite item and derive the exact formula for $r_T$ as in Lemma~\ref{lem:OINSobjective}. 
The main reason for the above property is that the preferential attachment mechanism remains the proportional allocation whether the novice item is facing one other item or many other items.

This further extends our understanding of popularity ratio maximization in cases where multiple items (popular or not, having promotions or not) co-exist in the system.

\textbf{The Game Theoretic Point of View}
	The above is from one novice item's view point, on how to allocate seeds to maximize its popularity ratio when other items' promotions are known.
	When items' promotions are unknown to one another, we need to study the dynamic of the overall system, and we enter the game-theoretic setting.
	Here, each item's strategy is the allocation of seeds over $T$ rounds under its budget, and its utility is its popularity ratio.
	The algorithm we designed before can be viewed as the algorithm for each individual item (player) to achieve its best response, or approximate best response.
	Collectively, we can study the best-response dynamic and the equilibrium behavior.
	Alternatively, one may also map our basic PA-IC model into the Stackelberg game model~\cite{FT93}, where the novice item and the popular make moves one after another based on the other side's previous moves. 
	However, the full study of the game-theoretic aspect under our PA-IC model and the PRM problem setting
	would be too complicated to be included in this paper, and we will leave the detailed formulation and the technical study as a future
	work item.

\section{Experiments}
In this section, we empirically validate our algorithms on eight real-world networks.

\label{sec:experiments}
\subsection{Experiment Setup}

\textbf{Data description}
We apply eight real-world network datasets in our experiments, with basic statistics shown in Table \ref{dataset}.
The DM dataset is a collaboration network where nodes represent the data-mining researchers and edges represent co-authorships~\cite{TangSWY09}. 
The FX01/02 datasets are from the Flixster, a network of American social movie discovery services \cite{barbieri2012topic}. 
Each node is a user and a directed edge from node $u$ to $v$ is formed if $v$ rates one movie shortly after $u$ does so on the same movie.
FX01/02 represent two networks on two different movie topics~\cite{barbieri2012topic}, 
The NetHEPT dataset is an academic collaboration network from the “High Energy Physics Theory” section of arXiv from 1991 to 2003, where nodes represent the authors and each edge represents one paper co-authored by two nodes~\cite{ChenWW10}. 
LastFM is a UK-based music site that focuses on online radio and music communities \footnote{http://www.last.fm/}. We use the friend relationship on LastFM to build a user relationship social network.
The Yelp dataset is released by yelp officially\footnote{http://www.yelp.com}. Yelp.com is the largest review site in the United States. We use the friend relationship data in Yelp to build a user's social network.
DBLP is another academic collaboration network extracted from the online archive DBLP (dblp.uni-trier.de) and used for influence studies \cite{WCW12}.
LiveJournal, denoted as LJ, is a free online community with almost 10 million members; a significant fraction of these members are highly active. It is obtained from Stanford’s SNAP project \cite{snapnets}. 

\begin{figure*}
	\centering
	\includegraphics[scale=0.32]{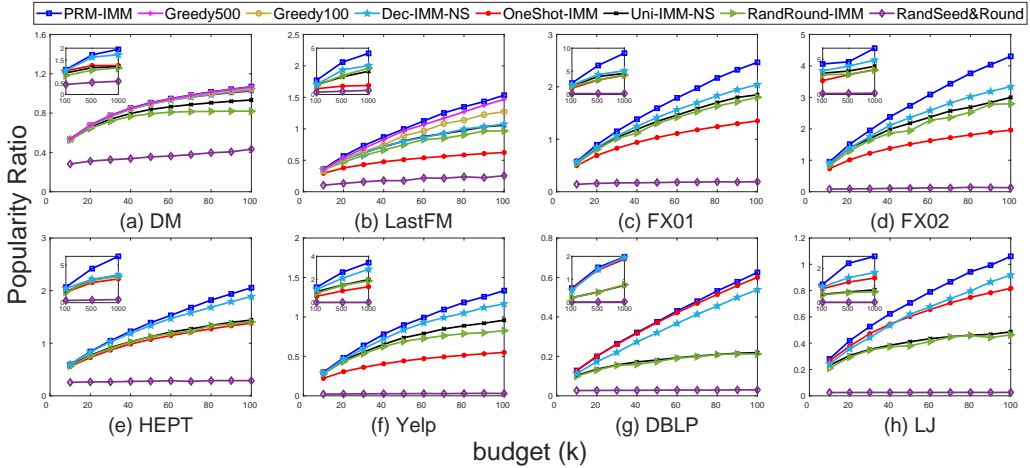}
	\caption{The popularity ratio vs. budgets for different algorithms in OINS setting.}
	\label{fig:ratio}
\end{figure*}

\textbf{Metrics}
We consider the following three metrics, reflecting what one could consider in practice:

\begin{itemize}[leftmargin = 9pt]

    \item \textbf{Popularity ratio.}
    The popularity ratio at the end of promotions.
    \item \textbf{Surpass budget.} Given a fixed round $T$, 
    	the minimum budget needed for the novice item to surpass the popular item.
    \item \textbf{Surpass time.} Given a fixed budget $k$, 
    	the minimum round number when the novice item surpasses the popular item.
\end{itemize}

\textbf{Parameter setup and reproducibility}
For DM and FX datasets, we use the edge weights learned in their respective studies~\cite{TangSWY09,barbieri2012topic}.
For the remaining datasets, we use the following “Weighted Cascade” policy adopted in \cite{kempe03} and often used in other influence maximization studies (e.g. \cite{ChenWY09,WCW12,tang15}): 
for each edge $(u,v)$, $p(u,v)$ is set to be $\frac{1}{N_{in}(v)}$, where $N_{in}(v)$ is the in-degree of $v$. 
Every PRM instance is also determined by the parameters $r_0,d_0,z$.
We set proper parameters for each dataset, as shown in Table~\ref{parameter setting}, to reflect different scenarios and match with the network size. Specifically, the settings of $d_0$ and $z$ are mainly related to the size of the network. When the network size is large, promotion may generate large popularity from the social network. Thus, to match the natural growth parameter and to avoid promotion dominating the natural growth, we need to adjust $d_0$ and $z$ accordingly to compensate for the large popularity generated by promotion. Parameter $r_0$ further covers different scenarios among the datasets.

We further vary these parameters and study their effects (Fig.\ref{fig:parameter analysis popularity ratio} and Fig.\ref{fig:parameter analysis max round}). 
Following \cite{tang15}, we set the error probability parameter $l = 1$, approximation ratio parameter $\varepsilon = 0.1$ in PRM-IMM.
For implementation details see the source code and proof appendix in this repository\footnote{https://github.com/Complex-data/PRM}.

\textbf{Baselines}
There is no existing baseline for the newly proposed PRM problem. Thus, we compare with the following reasonable
	heuristic baselines.

\begin{itemize}[leftmargin = 9pt]
\item \textbf{OneShot-IMM}: The largest $k$ nodes are picked by the IMM algorithm, and all are placed in round $1$. 

\item \textbf{Uni-IMM-NS/Uni-IMM-OS}: The NS version  selects the largest $k$ nodes by IMM and places them in all rounds separately with
	equal size in the order of the greedy selection. The OS version simply repeats the first round seeds in later rounds (for NIOS tests).

\item \textbf{RandRound-IMM}: Pick the largest $k$ nodes by the IMM algorithm and put them into $T$ rounds randomly. 

\item \textbf{RandSeed\&Round}: Randomly select $k$ nodes and randomly place these nodes in each round.

\item \textbf{Dec-IMM-NS/Dec-IMM-OS}: The NS version picks the largest $k$ nodes by IMM and places seed nodes in descending order. The method of generating the decreasing sequence is to use one-fifth of the budget in the first round and one-fifth of the remaining budget in each subsequent round. The OS version repeats the seeds in the first round with the same budget as the NS version.
\item \textbf{Greedy}: Directly apply the greedy algorithm, using simulations to estimate marginal gains and do not use the RIS approach.

\end{itemize}

The first five baselines are all heuristics with no theoretical guarantee.
Algorithm Greedy has the theoretical guarantee as PRM-IMM, but it is very slow in running simulations, and we will only show its result on the two small datasets DM and LastFM.

\subsubsection{Platform}

We implement all algorithms in C++, compiled in Visual Studio 2019, and run our tests on a computer with 3.6GHz Intel(R) Core(TM) i9-9900K CPU (16 cores), 128G memory, and Windows 10 professional (64 bits).

\subsection{Results}
\textbf{OINS main results.} \label{sec:PRM main results}
Fig.~\ref{fig:ratio} shows the result of popularity ratio vs. budget on all datasets and all algorithms in the OINS setting.
The results are averaged over 10 independent runs.
Note that the y-axis corresponds to the original objective function (Eq. \eqref{eq:ratio plus one(OINS)}), not the surrogate function (Eq. \eqref{eq:weighted im}).
It is obvious that PRM-IMM outperforms the baseline algorithms with a large margin in many cases. 
More specifically, RandomSeed\&Round is much worse than other algorithms, 
	showing that randomly giving out promotions should be avoided if possible.
Uniform-IMM and RandomRound-IMM have similar performance since both spread the seeds more or less evenly among the $T$ rounds, and in many cases such even spreading results in poor performance.
OneShot-IMM is at the opposite extreme, only allocating seeds in the first round. 
It has unstable performance, and in most cases it performs poorly, except for the DM and DBLP datasets.

Dec-IMM in general performs better than the OneShot/Uniform\\/RandomRound versions, but it still has a clear gap with our algorithm PRM-IMM.
This is because all these heuristics cannot adjust to different network structure and diffusion characteristics of the networks (such as whether important seeds are likely to have overlapping influence), while our PRM-IMM automatically balances the early promotion and the influence redundancy among seeds.

\begin{table}
	\caption{Running Time (OINS setting, $T=100$, $k=100$, results in seconds, average over 5 runs)}
	\label{tab:time(OINS)}
	\begin{tabular}{c|cccc}
		\toprule[2pt]
		& IMM & PRM-IMM & Greedy100 & Greedy500 \\
		\hline
		DM& 0.067 & 13.7 & 397.8& 3489 \\
		LastFM& 1.24  & 82.86 & 11976 & 53681 \\
		FX01& 4.07  & 437 & - & - \\
		FX02& 3.42  & 432 & - & - \\
		HEPT& 2.39  & 399 & - & - \\
		Yelp& 44.36 & 2959 & - & - \\
		DBLP& 47.46  & 1176 & - & - \\
		LJ& 371  & 15026 & - & - \\
		\bottomrule[2pt]
	\end{tabular}
\end{table}

\begin{table}
	\caption{Running Time (OINS setting, $k=500,1000$ results in seconds, average over 5 runs)}
	\label{tab:time(large k)}
	\begin{tabular}{c|cc|cc}
		\toprule[2pt]
  & \multicolumn{2}{c|}{$k=500$}  & \multicolumn{2}{c}{$k=1000$}  \\
        \hline
		& IMM & PRM-IMM & IMM & PRM-IMM \\
		\hline
		DM& 0.13 & 1.16 & 0.13& 1.98 \\
		LastFM& 0.54  & 48.59 & 0.61 & 48.59 \\
		FX01& 11.47  & 138.62 & 16.51 & 229.9 \\
		FX02& 10.52  & 137.52 & 16.25 & 253.48 \\
		HEPT& 15.11  & 67.83 & 13.79 & 106.93 \\
		Yelp& 20.36 & 1639.34 & 24.51 & 2715 \\
		DBLP& 332.46  & 1317.18 & 383.73 & 1939.36 \\
		LJ& 2505  & 5607 & 4251 & 10760 \\
		\bottomrule[2pt]
	\end{tabular}
\end{table}

Greedy should perform equally well as PRM-IMM, since PRM-IMM is just a more efficient way of implementing Greedy.
But Greedy is very time consuming --- as shown in Table \ref{tab:time(OINS)}, Greedy with 100 and 500 simulations per estimation is already several orders of magnitude slower than PRM-IMM, and it takes Greedy500 almost 15 hours to run on the 1.8K node LastFM network.
Thus Greedy is not scalable and we cannot run it on other datasets except for DM and LastFM.
Even for LastFM, Fig.~\ref{fig:ratio}(b) shows that Greedy500 and Greedy100 are not as competitive as PRM-IMM, due to the insufficient number of simulations.

The top left corner of each subfigure in Fig.\ref{fig:ratio} shows the result of different algorithms under larger seed  budget of $k=500,1000$. 
	It can be seen that PRM-IMM still performs better than other baselines for larger $k$'s.

Table \ref{tab:time(OINS)} shows that the running time of the algorithms with $k=100$. 
Note that Random Seed\&Round is a trivial baseline with almost instant running time, but its quality is far worse than all others as reported in Fig.\ref{fig:ratio},
	and thus we do not include it in the running time table.
For all other baselines except Greedy, they call IMM to select seeds first, and then do various simple allocation of seeds over rounds.
Thus their running time is dominated by the IMM running time, which is reported in the table.
The result shows that (a) Greedy100/Greedy500 is two to three orders of magnitude slower than PRM-IMM, and thus is not scalable at all;
(b) PRM-IMM is around 100 times slower than the pure IMM algorithm (except for DBLP and LJ), which aligns with our theoretical analysis showing that our running time has an extra $T$ factor.
Note that for the two large datasets DBLP and LJ, we reduce its round input $T$ to save time, since the results from smaller budgets show that only the first few rounds get allocated for seeds.

Table \ref{tab:time(large k)} shows the running time of PRM-IMM vs. IMM with $k=500,1000$. 
For this test, we optimize for $T$, meaning that we set $T$ to a small value, and if the resulting allocation of PRM-IMM does not use all rounds, we stop
	the algorithm; otherwise, we double the $T$ value and retry the algorithm. 
We can see that PRM-IMM can still scale to larger seed budgets and large networks.

Overall, our OINS results clearly show that our PRM-IMM algorithm
	outperforms all simple heuristics, while it runs much more efficiently than the greedy algorithm.
PRM-IMM achieves a good balance between quality and efficiency. 

\begin{figure}[t]
	\centering
	\includegraphics[scale=0.4]{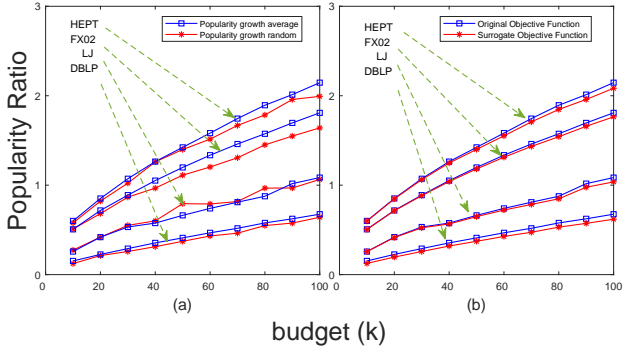}
	\caption{Model justification. (a) Comparing between the random growth (red) and average growth (blue); (b) Comparing between the original objective
		function and the surrogate function.}
	\label{fig:simulation_difference}
\end{figure}

\textbf{Model justification} \label{sec:modeljustify}
We now justify two choices we made in our model formulation.
First, Eq.\eqref{eq:popularityeq} is based on the average popularity growth in each round rather than the random popularity growth, as we discussed in Section \ref{sec:pa-ic model}.
We compare the effect of the average growth with the random growth, using the seed allocation selected by PRM-IMM with the budget ranging from $10$ to $100$.
Fig. \ref{fig:simulation_difference}(a) shows that the results of the two effects are very close in four datasets (the other four have similar results), 
justifying our use of the average effect in Eq.\eqref{eq:popularityeq}.

Next, we compare the original objective function of Eq. \eqref{eq:ratio plus one(OINS)} with the surrogate round-weighted influence function of Eq. \eqref{eq:weighted im}. 
We use seed allocations selected by PRM-IMM for this comparison.
Fig. \ref{fig:simulation_difference}(b) shows that the two functions agree with each other very well, with average derivation less than 5\% for four datasets (and the other four have
	similar results). 
Hence, using the surrogate function is a reasonable choice, even though our PRM is a heuristic with respect to the original objective.

\begin{figure}
	\centering
	\includegraphics[scale=0.40]{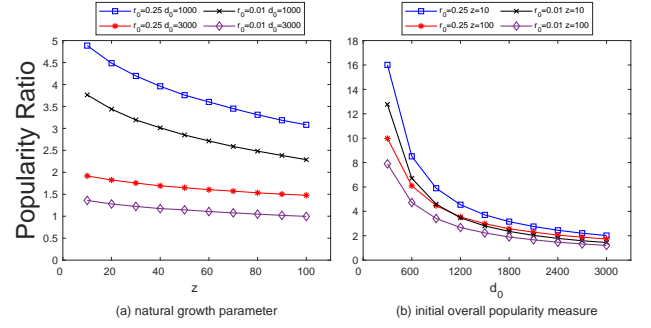}
	\caption{Parameter analysis for PRM under FX01.}
	\label{fig:parameter analysis popularity ratio}
\end{figure}

\begin{figure}
	\centering
	\includegraphics[scale=0.40]{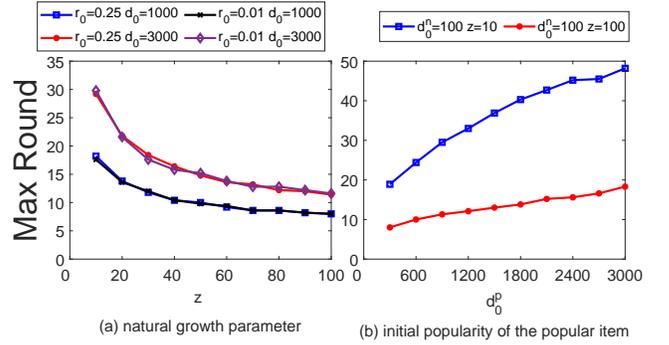}
	\caption{Different parameter's impact on seeds allocation, in the FX01 dataset.}
	\label{fig:parameter analysis max round}
\end{figure}

\textbf{Parameter analysis in PRM}\label{sec:Parameter analysis in PRM}
\begin{figure*}[t]
	\centering
	\includegraphics[scale=0.32]{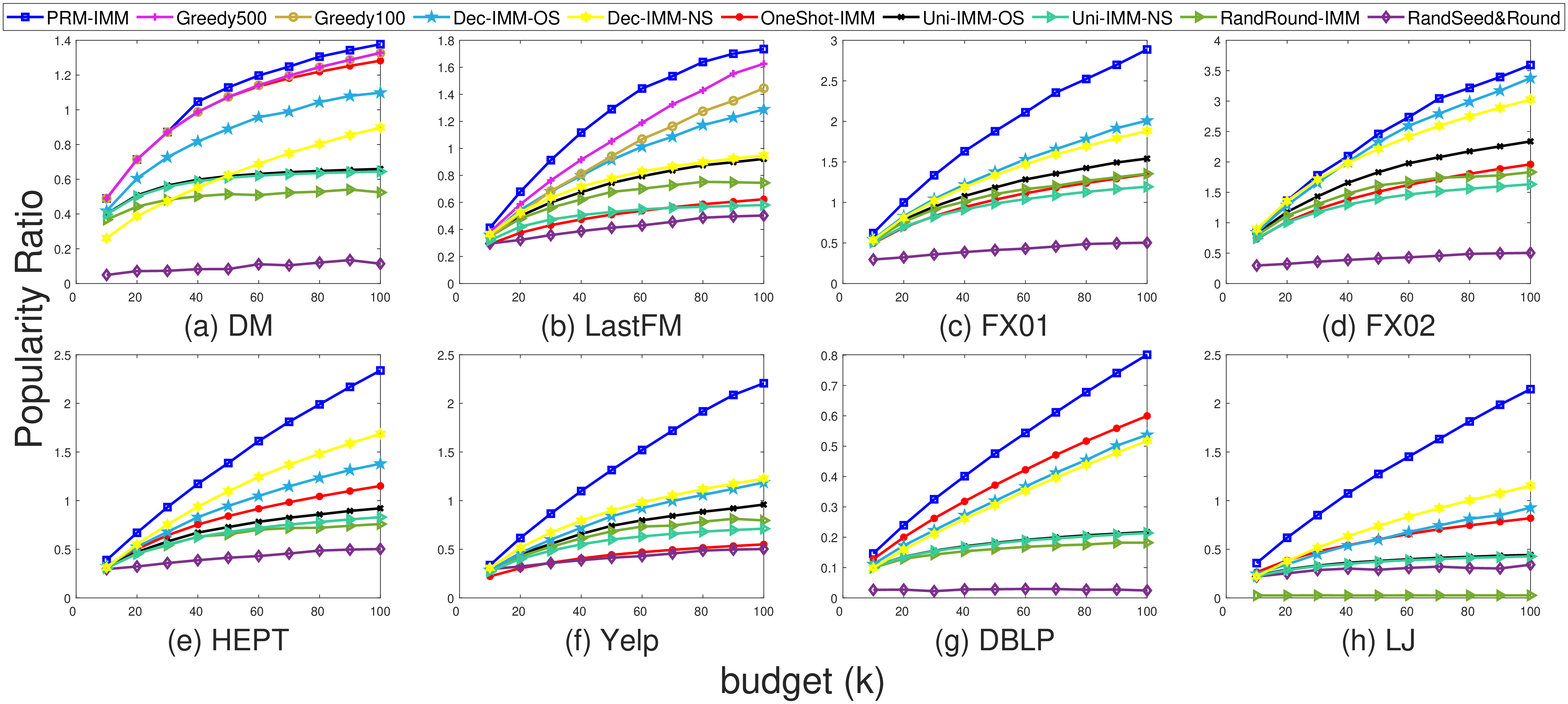}
	\caption{The popularity ratio vs. budgets for different algorithms in NIOS setting}
	\label{fig:ratio NIOS}
\end{figure*}
We now analyze the effect of the initial total popularity measure $d_0$ and natural growth count $z$ on the PRM-IMM algorithm.
We run PRM-IMM to select 100 seeds for this analysis.
Fig. \ref{fig:parameter analysis popularity ratio} shows the result for FX01, and results on other datasets are similar.

Fig. \ref{fig:parameter analysis popularity ratio}(a) shows that when $z$ increases, the popularity ratio decreases, indicating that
	the larger the natural growth of the popularity measure, the more difficult it is for the novice item to catch up with the popular item. 
Fig. \ref{fig:parameter analysis popularity ratio}(b)  also shows the decreasing trend with respect to $d_0$.
This is because when $d_0$ is larger while $r_0$ remains the same, the gap between the popular and novice item becomes larger and it is more difficult for the noice item to catch up.

\begin{table}
	\caption{Running Time (NIOS setting, $T=100$, $k=100$, results in seconds, average over 5 runs)}
	\label{tab:time(NIOS)}
	\begin{tabular}{ccccc}\\
		\toprule[2pt]
		& IMM & PRM-IMM & greedy-100 & greedy-500 \\
		\hline
		DM& 0.067 & 342.8 & 1274.8& 3489 \\
		LastFM& 1.24  & 1258 & 51034 & 249656 \\
		FX01& 4.07  & 4396 & - & - \\
		FX02& 3.42  & 6163 & - & - \\
		HEPT& 2.39  & 8056 & - & - \\
		Yelp& 44.36 & 39582 & - & - \\
		DBLP& 47.46  & 1395 & - & - \\
		LJ& 371  & 76396 & - & - \\
		\bottomrule[2pt]
	\end{tabular}
\end{table}

Among the parameters, $z$ and $d^p_0$ (initial popularity of the popular item) are not directly related to the novice item.
Thus, we further investigate how $z$ and $d^p_0$ would affect the seed allocation choice of the novice item in Fig. \ref{fig:parameter analysis max round}.
Y-axis ``max round'' means the maximum number of rounds used in the allocation.
Fig. \ref{fig:parameter analysis max round} shows that when $z$ increases, the max round decreases, and when $d^p_0$ increases, the max round also increases.
This shows the interesting adaptive property of PRM-IMM: when the natural growth count per round increases, the algorithm tries to put more seeds earlier, otherwise natural growth in each round
	will cause more people going to the popular item and making it harder to catch up; but if only the initial popularity of the popular item increases, the algorithm can use more rounds to achieve a
	better catch-up result in the end.
\begin{figure*}[t]
	\centering
	\includegraphics[scale=0.25]{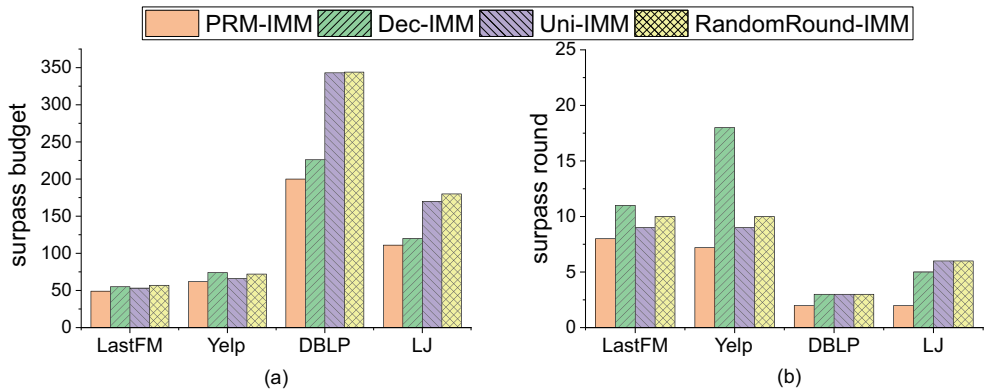}
	\caption{(a) Fix round $T=20$, find the smallest budget to make the popularity ratio exceed $1$. 
		(b) Fix budget (LastFM, $k=55$; Yelp, $k=75$; DBLP, $k=230$;  LJ, $k=125$), make the popularity ratio exceed $1$ in as few rounds as possible.} 
	\label{fig:beating}
\end{figure*}

\textbf{Results on NIOS}
The popularity ratio and running time results for NIOS are shown in Fig.~\ref{fig:ratio NIOS} and Table~\ref{tab:time(NIOS)}, respectively.
For this test, we further include baselines Uni-IMM-OS and Dec-IMM-OS to consider seed overlaps.
Overall, the results are similar to those of OINS, and our PRM-IMM algorithm always performs the best among all baselines, and it runs significantly faster than the simple greedy algorithm.
The performance gap between our algorithm and other baselines is even larger in many cases compared to OINS.
This is perhaps because the NIOS setting demands more sophisticated seed allocation. 
For example, we notice that seed overlaps vary with the networks.
The OS specific baselines in many cases perform not as well as their NS counterparts, indicating that it is not easy to adjust proper overlaps among seeds.
Compared to OINS, our PRM-IMM algorithm runs slower, and this is mainly because each MR-RR set for NIOS is much larger than the PW-RR set for OINS.

\begin{figure*}
	\centering
	\includegraphics[scale=0.30]{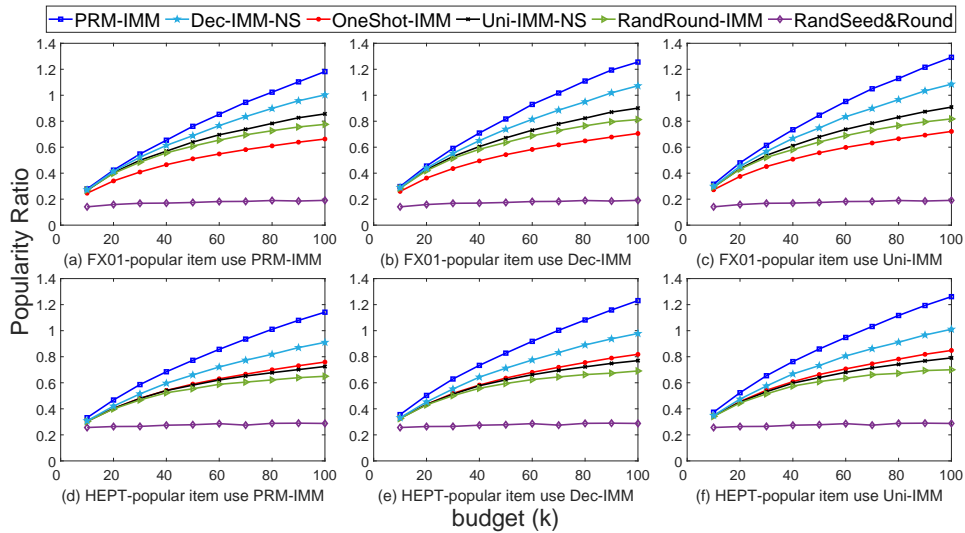}
	\caption{ The popularity ratio vs. budgets for different algorithms when the popular item has promotion. 
 }
	\label{fig:ratio popular promotion}
\end{figure*}
\textbf{Alternative Objectives.}\label{sec:PRM variants}
Finally, we evaluate the two alternative objectives discussed in Section \ref{sec:solve two variants}. 
Since RandomSeed\&Round is clearly worse than other algorithms, while OneShot-IMM cannot assign seeds to multiple rounds, we do not include them as baselines in this part.
We apply the idea in Section \ref{sec:solve two variants} to conduct these experiments for the OINS setting.

Fig.\ref{fig:beating} shows the results on four datasets, and the results on the other four datasets are similar.
Fig.\ref{fig:beating}(a) shows the result for the Seed Minimization problem, 
    in which we fix $T=20$, and then use different algorithms to find the minimum budget that can make the popular ratio exceed 1. 
Fig.\ref{fig:beating}(b) shows the result for the Round Minimization problem.
For this problem, we need to set different budgets for different datasets based on our result from Fig.\ref{fig:beating}(a) to balance the influence of $T$ and $k$.

The results in both figures show that our algorithm can achieve the best results compared to all other baselines.

\textbf{ Popular item having promotion.}\label{sec:PRM variants}
In this section, we evaluate different algorithms in the situation when the novice item  knows the promotion plan of the popular item.
We assume that the popular item uses a budget of $30$ and uses three different strategies: PRM-IMM, Dec-IMM, Uni-IMM, to select seeds. Then the novice item uses PRM-IMM and other baselines to select seed nodes and evaluate the final popularity ratio. The results on FX01 and HEPT dataset are shown in Fig.\ref{fig:ratio popular promotion}.
We can see that even when the popular item has promotions, our PRM-IMM still performs well and beats all the baselines.
Moreover, the end popularity ratio increases when the popular item promotion moves from PRM-IMM algorithm to Dec-IMM and then Uni-IMM, suggesting that 
	the novice item can achieve better promotion result when the popular item uses a less competitive promotion strategy.

\section{Future Work}

One interesting future direction is to study adaptive strategies, where the seed selection of later rounds depends on the actual promotional results of the earlier rounds.
Another direction is to consider the delayed effect of promotion, where social network promotion may take multiple rounds to generate popularity growth.
One may also look into the model where the promotion of the popular item and the novice item also competes in the social network.



\begin{thebibliography}{36}


\ifx \showCODEN    \undefined \def \showCODEN     #1{\unskip}     \fi
\ifx \showDOI      \undefined \def \showDOI       #1{#1}\fi
\ifx \showISBNx    \undefined \def \showISBNx     #1{\unskip}     \fi
\ifx \showISBNxiii \undefined \def \showISBNxiii  #1{\unskip}     \fi
\ifx \showISSN     \undefined \def \showISSN      #1{\unskip}     \fi
\ifx \showLCCN     \undefined \def \showLCCN      #1{\unskip}     \fi
\ifx \shownote     \undefined \def \shownote      #1{#1}          \fi
\ifx \showarticletitle \undefined \def \showarticletitle #1{#1}   \fi
\ifx \showURL      \undefined \def \showURL       {\relax}        \fi
\providecommand\bibfield[2]{#2}
\providecommand\bibinfo[2]{#2}
\providecommand\natexlab[1]{#1}
\providecommand\showeprint[2][]{arXiv:#2}

\bibitem[Barab{\'a}si and Albert(1999)]%
        {BA99}
\bibfield{author}{\bibinfo{person}{Albert-L{\'a}szl{\'o} Barab{\'a}si} {and}
  \bibinfo{person}{R{\'e}ka Albert}.} \bibinfo{year}{1999}\natexlab{}.
\newblock \showarticletitle{Emergence of Scaling in Random Networks}.
\newblock \bibinfo{journal}{\emph{Science}} (\bibinfo{year}{1999}).
\newblock


\bibitem[Barbieri et~al\mbox{.}(2012)]%
        {barbieri2012topic}
\bibfield{author}{\bibinfo{person}{Nicola Barbieri}, \bibinfo{person}{Francesco
  Bonchi}, {and} \bibinfo{person}{Giuseppe Manco}.}
  \bibinfo{year}{2012}\natexlab{}.
\newblock \showarticletitle{Topic-Aware Social Influence Propagation Models}.
  In \bibinfo{booktitle}{\emph{ICDM}}.
\newblock


\bibitem[Borgs et~al\mbox{.}(2014)]%
        {BorgsBrautbarChayesLucier}
\bibfield{author}{\bibinfo{person}{Christian Borgs}, \bibinfo{person}{Michael
  Brautbar}, \bibinfo{person}{Jennifer~T. Chayes}, {and}
  \bibinfo{person}{Brendan Lucier}.} \bibinfo{year}{2014}\natexlab{}.
\newblock \showarticletitle{Maximizing Social Influence in Nearly Optimal
  Time}. In \bibinfo{booktitle}{\emph{SODA}}.
\newblock


\bibitem[Budak et~al\mbox{.}(2011)]%
        {BAA11}
\bibfield{author}{\bibinfo{person}{Ceren Budak}, \bibinfo{person}{Divyakant
  Agrawal}, {and} \bibinfo{person}{Amr El~Abbadi}.}
  \bibinfo{year}{2011}\natexlab{}.
\newblock \showarticletitle{Limiting the Spread of Misinformation in Social
  Networks}. In \bibinfo{booktitle}{\emph{WWW}}.
\newblock


\bibitem[C{\u{a}}linescu et~al\mbox{.}(2011)]%
        {CalinescuCPV11}
\bibfield{author}{\bibinfo{person}{Gruia C{\u{a}}linescu},
  \bibinfo{person}{Chandra Chekuri}, \bibinfo{person}{Martin P{\'{a}}l}, {and}
  \bibinfo{person}{Jan Vondr{\'{a}}k}.} \bibinfo{year}{2011}\natexlab{}.
\newblock \showarticletitle{Maximizing a Monotone Submodular Function Subject
  to a Matroid Constraint}.
\newblock \bibinfo{journal}{\emph{{SIAM} J. Comput.}} (\bibinfo{year}{2011}).
\newblock


\bibitem[Chen(2018)]%
        {chen18}
\bibfield{author}{\bibinfo{person}{Wei Chen}.} \bibinfo{year}{2018}\natexlab{}.
\newblock \showarticletitle{An Issue in the Martingale Analysis of the
  Influence Maximization Algorithm {IMM}}. In \bibinfo{booktitle}{\emph{Lecture
  Notes in Computer Science}}.
\newblock


\bibitem[Chen(2020)]%
        {chen20}
\bibfield{author}{\bibinfo{person}{Wei Chen}.} \bibinfo{year}{2020}\natexlab{}.
\newblock \bibinfo{booktitle}{\emph{Network Diffusion Models and Algorithms for
  Big Data (in Chinese)}}.
\newblock \bibinfo{publisher}{Post and Telecom Press}.
\newblock


\bibitem[Chen et~al\mbox{.}(2013)]%
        {chen2013information}
\bibfield{author}{\bibinfo{person}{Wei Chen}, \bibinfo{person}{Laks V.~S.
  Lakshmanan}, {and} \bibinfo{person}{Carlos Castillo}.}
  \bibinfo{year}{2013}\natexlab{}.
\newblock \bibinfo{booktitle}{\emph{Information and Influence Propagation in
  Social Networks}}.
\newblock


\bibitem[Chen et~al\mbox{.}(2010a)]%
        {ChenWW10}
\bibfield{author}{\bibinfo{person}{Wei Chen}, \bibinfo{person}{Chi Wang}, {and}
  \bibinfo{person}{Yajun Wang}.} \bibinfo{year}{2010}\natexlab{a}.
\newblock \showarticletitle{Scalable Influence Maximization for Prevalent Viral
  Marketing in Large-Scale Social Networks}. In
  \bibinfo{booktitle}{\emph{KDD}}.
\newblock


\bibitem[Chen et~al\mbox{.}(2009)]%
        {ChenWY09}
\bibfield{author}{\bibinfo{person}{Wei Chen}, \bibinfo{person}{Yajun Wang},
  {and} \bibinfo{person}{Siyu Yang}.} \bibinfo{year}{2009}\natexlab{}.
\newblock \showarticletitle{Efficient Influence Maximization in Social
  Networks}. In \bibinfo{booktitle}{\emph{KDD}}.
\newblock


\bibitem[Chen et~al\mbox{.}(2010b)]%
        {ChenYZ10}
\bibfield{author}{\bibinfo{person}{Wei Chen}, \bibinfo{person}{Yifei Yuan},
  {and} \bibinfo{person}{Li Zhang}.} \bibinfo{year}{2010}\natexlab{b}.
\newblock \showarticletitle{Scalable Influence Maximization in Social Networks
  under the Linear Threshold Model}. In \bibinfo{booktitle}{\emph{ICDM}}.
\newblock


\bibitem[Domingos and Richardson(2001)]%
        {domingos01}
\bibfield{author}{\bibinfo{person}{Pedro Domingos} {and} \bibinfo{person}{Matt
  Richardson}.} \bibinfo{year}{2001}\natexlab{}.
\newblock \showarticletitle{Mining the Network Value of Customers}. In
  \bibinfo{booktitle}{\emph{KDD}}.
\newblock


\bibitem[Fudenberg and Tirole(1993)]%
        {FT93}
\bibfield{author}{\bibinfo{person}{D. Fudenberg} {and} \bibinfo{person}{J.
  Tirole}.} \bibinfo{year}{1993}\natexlab{}.
\newblock \bibinfo{booktitle}{\emph{Game Theory}}.
\newblock \bibinfo{publisher}{MIT Press}.
\newblock


\bibitem[Goyal et~al\mbox{.}(2011)]%
        {simpath}
\bibfield{author}{\bibinfo{person}{Amit Goyal}, \bibinfo{person}{Wei Lu}, {and}
  \bibinfo{person}{Laks V.~S. Lakshmanan}.} \bibinfo{year}{2011}\natexlab{}.
\newblock \showarticletitle{{SIMPATH:} An Efficient Algorithm for Influence
  Maximization under the Linear Threshold Model}. In
  \bibinfo{booktitle}{\emph{ICDM}}.
\newblock


\bibitem[Goyal and Kearns(2012)]%
        {goyal12stoc}
\bibfield{author}{\bibinfo{person}{Sanjeev Goyal} {and}
  \bibinfo{person}{Michael Kearns}.} \bibinfo{year}{2012}\natexlab{}.
\newblock \showarticletitle{Competitive Contagion in Networks}. In
  \bibinfo{booktitle}{\emph{STOC}}.
\newblock


\bibitem[He et~al\mbox{.}(2012)]%
        {HeSCJ12}
\bibfield{author}{\bibinfo{person}{Xinran He}, \bibinfo{person}{Guojie Song},
  \bibinfo{person}{Wei Chen}, {and} \bibinfo{person}{Qingye Jiang}.}
  \bibinfo{year}{2012}\natexlab{}.
\newblock \showarticletitle{Influence Blocking Maximization in Social Networks
  under the Competitive Linear Threshold Model}. In
  \bibinfo{booktitle}{\emph{SDM}}.
\newblock


\bibitem[Jung et~al\mbox{.}(2012)]%
        {JungHC12}
\bibfield{author}{\bibinfo{person}{Kyomin Jung}, \bibinfo{person}{Wooram Heo},
  {and} \bibinfo{person}{Wei Chen}.} \bibinfo{year}{2012}\natexlab{}.
\newblock \showarticletitle{{IRIE:} Scalable and Robust Influence Maximization
  in Social Networks}. In \bibinfo{booktitle}{\emph{ICDM}}.
\newblock


\bibitem[Kempe et~al\mbox{.}(2003)]%
        {kempe03}
\bibfield{author}{\bibinfo{person}{David Kempe}, \bibinfo{person}{Jon
  Kleinberg}, {and} \bibinfo{person}{\'{E}va Tardos}.}
  \bibinfo{year}{2003}\natexlab{}.
\newblock \showarticletitle{Maximizing the Spread of Influence through a Social
  Network}. In \bibinfo{booktitle}{\emph{KDD}}.
\newblock


\bibitem[Kim et~al\mbox{.}(2013)]%
        {kim2013scalable}
\bibfield{author}{\bibinfo{person}{Jinha Kim}, \bibinfo{person}{Seung{-}Keol
  Kim}, {and} \bibinfo{person}{Hwanjo Yu}.} \bibinfo{year}{2013}\natexlab{}.
\newblock \showarticletitle{Scalable and parallelizable processing of influence
  maximization for large-scale social networks?}. In
  \bibinfo{booktitle}{\emph{ICDE}}.
\newblock


\bibitem[Leskovec and Krevl(2014)]%
        {snapnets}
\bibfield{author}{\bibinfo{person}{Jure Leskovec} {and} \bibinfo{person}{Andrej
  Krevl}.} \bibinfo{year}{2014}\natexlab{}.
\newblock \bibinfo{title}{{SNAP Datasets}: {Stanford} Large Network Dataset
  Collection}.
\newblock \bibinfo{howpublished}{\url{http://snap.stanford.edu/data}}.
\newblock


\bibitem[Leskovec et~al\mbox{.}(2007)]%
        {Leskovec2007}
\bibfield{author}{\bibinfo{person}{Jure Leskovec}, \bibinfo{person}{Mary
  McGlohon}, \bibinfo{person}{Christos Faloutsos}, \bibinfo{person}{Natalie~S.
  Glance}, {and} \bibinfo{person}{Matthew Hurst}.}
  \bibinfo{year}{2007}\natexlab{}.
\newblock \showarticletitle{Patterns of Cascading Behavior in Large Blog
  Graphs}. In \bibinfo{booktitle}{\emph{SDM}}.
\newblock


\bibitem[Li et~al\mbox{.}(2013)]%
        {chen2013foe}
\bibfield{author}{\bibinfo{person}{Yanhua Li}, \bibinfo{person}{Wei Chen},
  \bibinfo{person}{Yajun Wang}, {and} \bibinfo{person}{Zhi-Li Zhang}.}
  \bibinfo{year}{2013}\natexlab{}.
\newblock \showarticletitle{Influence Diffusion Dynamics and Influence
  Maximization in Social Networks with Friend and Foe Relationships}. In
  \bibinfo{booktitle}{\emph{WSDM}}.
\newblock


\bibitem[Li et~al\mbox{.}(2018)]%
        {LiFWT18}
\bibfield{author}{\bibinfo{person}{Yuchen Li}, \bibinfo{person}{Ju Fan},
  \bibinfo{person}{Yanhao Wang}, {and} \bibinfo{person}{Kian-Lee Tan}.}
  \bibinfo{year}{2018}\natexlab{}.
\newblock \showarticletitle{Influence maximization on social graphs: A survey}.
\newblock \bibinfo{journal}{\emph{IEEE Transactions on Knowledge and Data
  Engineering}} (\bibinfo{year}{2018}).
\newblock


\bibitem[Lu et~al\mbox{.}(2016)]%
        {LCL2016}
\bibfield{author}{\bibinfo{person}{Wei Lu}, \bibinfo{person}{Wei Chen}, {and}
  \bibinfo{person}{Laks Lakshmanan}.} \bibinfo{year}{2016}\natexlab{}.
\newblock \showarticletitle{From Competition to Complementarity: Comparative
  Influence Diffusion and Maximization}. In \bibinfo{booktitle}{\emph{VLDB}}.
\newblock


\bibitem[Nemhauser et~al\mbox{.}(1978)]%
        {NWF78}
\bibfield{author}{\bibinfo{person}{G.~L. Nemhauser}, \bibinfo{person}{L.~A.
  Wolsey}, {and} \bibinfo{person}{M.~L. Fisher}.}
  \bibinfo{year}{1978}\natexlab{}.
\newblock \bibinfo{booktitle}{\emph{An analysis of the approximations for
  maximizing submodular set functions}}.
\newblock \bibinfo{publisher}{Mathematical Programming}.
\newblock


\bibitem[Nguyen et~al\mbox{.}(2016)]%
        {Nguyen_DSSA_2016}
\bibfield{author}{\bibinfo{person}{Hung~T. Nguyen}, \bibinfo{person}{My~T.
  Thai}, {and} \bibinfo{person}{Thang~N. Dinh}.}
  \bibinfo{year}{2016}\natexlab{}.
\newblock \showarticletitle{Stop-and-Stare: Optimal Sampling Algorithms for
  Viral Marketing in Billion-Scale Networks}. In
  \bibinfo{booktitle}{\emph{SIGMOD}}.
\newblock


\bibitem[Richardson and Domingos(2002)]%
        {richardson02}
\bibfield{author}{\bibinfo{person}{Matthew Richardson} {and}
  \bibinfo{person}{Pedro Domingos}.} \bibinfo{year}{2002}\natexlab{}.
\newblock \showarticletitle{Mining Knowledge-Sharing Sites for Viral
  Marketing}. In \bibinfo{booktitle}{\emph{KDD}}.
\newblock


\bibitem[Simon(1955)]%
        {simon1955class}
\bibfield{author}{\bibinfo{person}{Herbert~A Simon}.}
  \bibinfo{year}{1955}\natexlab{}.
\newblock \showarticletitle{On a class of skew distribution functions}.
\newblock \bibinfo{journal}{\emph{Biometrika}} (\bibinfo{year}{1955}).
\newblock


\bibitem[Sun et~al\mbox{.}(2018)]%
        {sun2018MRIM}
\bibfield{author}{\bibinfo{person}{Lichao Sun}, \bibinfo{person}{Weiran Huang},
  \bibinfo{person}{Philip~S. Yu}, {and} \bibinfo{person}{Wei Chen}.}
  \bibinfo{year}{2018}\natexlab{}.
\newblock \showarticletitle{Multi-Round Influence Maximization}. In
  \bibinfo{booktitle}{\emph{KDD}}.
\newblock


\bibitem[Tang et~al\mbox{.}(2009)]%
        {TangSWY09}
\bibfield{author}{\bibinfo{person}{Jie Tang}, \bibinfo{person}{Jimeng Sun},
  \bibinfo{person}{Chi Wang}, {and} \bibinfo{person}{Zi Yang}.}
  \bibinfo{year}{2009}\natexlab{}.
\newblock \showarticletitle{Social Influence Analysis in Large-Scale Networks}.
  In \bibinfo{booktitle}{\emph{KDD}}.
\newblock


\bibitem[Tang et~al\mbox{.}(2018)]%
        {Tang_OPIM_2018}
\bibfield{author}{\bibinfo{person}{Jing Tang}, \bibinfo{person}{Xueyan Tang},
  \bibinfo{person}{Xiaokui Xiao}, {and} \bibinfo{person}{Junsong Yuan}.}
  \bibinfo{year}{2018}\natexlab{}.
\newblock \showarticletitle{Online Processing Algorithms for Influence
  Maximization}. In \bibinfo{booktitle}{\emph{SIGMOD}}.
\newblock


\bibitem[Tang et~al\mbox{.}(2015)]%
        {tang15}
\bibfield{author}{\bibinfo{person}{Youze Tang}, \bibinfo{person}{Yanchen Shi},
  {and} \bibinfo{person}{Xiaokui Xiao}.} \bibinfo{year}{2015}\natexlab{}.
\newblock \showarticletitle{Influence Maximization in Near-Linear Time: A
  Martingale Approach}. In \bibinfo{booktitle}{\emph{SIGMOD}}.
\newblock


\bibitem[Tang et~al\mbox{.}(2014)]%
        {tang14}
\bibfield{author}{\bibinfo{person}{Youze Tang}, \bibinfo{person}{Xiaokui Xiao},
  {and} \bibinfo{person}{Yanchen Shi}.} \bibinfo{year}{2014}\natexlab{}.
\newblock \showarticletitle{Influence Maximization: Near-Optimal Time
  Complexity Meets Practical Efficiency}. In
  \bibinfo{booktitle}{\emph{SIGMOD}}.
\newblock


\bibitem[Wang et~al\mbox{.}(2012)]%
        {WCW12}
\bibfield{author}{\bibinfo{person}{Chi Wang}, \bibinfo{person}{Wei Chen}, {and}
  \bibinfo{person}{Yajun Wang}.} \bibinfo{year}{2012}\natexlab{}.
\newblock \showarticletitle{Scalable influence maximization for independent
  cascade model in large-scale social networks}.
\newblock \bibinfo{journal}{\emph{Data Mining and Knowledge Discovery}}
  (\bibinfo{year}{2012}).
\newblock


\bibitem[Wang et~al\mbox{.}(2010)]%
        {wang2010community}
\bibfield{author}{\bibinfo{person}{Yu Wang}, \bibinfo{person}{Gao Cong},
  \bibinfo{person}{Guojie Song}, {and} \bibinfo{person}{Kunqing Xie}.}
  \bibinfo{year}{2010}\natexlab{}.
\newblock \showarticletitle{Community-Based Greedy Algorithm for Mining Top-K
  Influential Nodes in Mobile Social Networks}. In
  \bibinfo{booktitle}{\emph{KDD}}.
\newblock


\bibitem[Yule(1925)]%
        {yule1925ii}
\bibfield{author}{\bibinfo{person}{George~Udny Yule}.}
  \bibinfo{year}{1925}\natexlab{}.
\newblock \showarticletitle{II.鈥擜 mathematical theory of evolution, based on
  the conclusions of Dr. JC Willis, FR S}.
\newblock \bibinfo{journal}{\emph{Philosophical transactions of the Royal
  Society of London. Series B, containing papers of a biological character}}
  (\bibinfo{year}{1925}).
\newblock


\end{thebibliography}

\clearpage

\appendix

\section{Notations}
Table.\ref{tab:notations} shows all terms' notations in this paper.

    \begin{table*}
\caption{Notations}
\label{tab:notations}
\begin{tabular}{|c|c|}
\hline
Notation & Description       \\
\hline
$G=(V,E)$ & a social network $G$ with a node set $V$ and an edge set $E$ \\
\hline
$N,M$  &  the numbers of nodes and edges in $G$, respectively         \\
\hline
$d_t^n$ & the popularity measure of Novice item at the end of round $t$   \\
\hline
$d_t^p$ & the popularity measure of Popular item at the end of round $t$   \\
\hline
$k$ & number of seeds to be selected \\
\hline
$\tau$ & time step index, each time step is one step in one round of IC model propagation  \\
\hline
$t$ & round index, each round is a promotional round  \\
\hline
$T$ & Total rounds  \\
\hline
$S_t$&seed set of round $t$ \\
\hline
$\cS = \bigcup_{t=1}^T S_t \times \{t\}$& pair set, a pair $(u,t)$ is a node $u$ at round $t$ \\
\hline
$\cS^*$& optimal pair set \\
\hline
$r_T(\cS)$& the popularity ratio at the end of round $T$ \\
\hline
$r_0$& the initial popularity ratio \\
\hline
$p(u,v)$& the probability of node $u$ active $v$ \\
\hline
$z$& total natural growth of popularity measure \\
\hline
$\sigma(S)$ & the influence spread of seed set $S$ \\
\hline
$w=(w_1,w_2...w_T)$ & the weight vector, $w_t$ is the weight of round $t$ \\
\hline
$\rho_T(\cS)$ & weighted overlapping influence spread \\
\hline
$\rho^{(t)}(S_t)$ & weighted overlapping influence spread of round $t$\\
\hline
$R, R(v)$ & one RR set, $R(v) $ is the RR set rooted at node $v$  \\
\hline
$R^{(t)}, R^{(t)}(v)$ & one PW-RR set, $R^{(t)}= R\times \{t\}$, $R^{(t)}(v) $ is the RR set rooted at pair $(v,t)$  \\
\hline
$\cR$ & the set of pair-wised RR set  \\

 \hline
$\theta$ & the number of PW-RR sets that need to be generated \\
\hline
$l$ & Error probability parameter \\
\hline
$\varepsilon$ & Approximate ratio parameter \\
\hline
$b,\cQ,i,j,x_i,\alpha,\beta,OPT,\delta,LB,Pr,\omega,\Omega$ & parameters in derivation \\
\hline

\end{tabular}
\end{table*}

\section{Proof of Monotonicity and Non-submodularity of Popularity Ratio Function (lemma~\ref{lemma:submodular of popularity ratio})}
\begin{proof}
It is proved that the expected influence spread set function $\sigma$ is monotone and submodular, we don't show the proof process here. So we only need to prove that the popularity ratio function is monotone with respect to the function $\sigma(S_i),0<i<T$. We can prove the monotone of the popularity ratio function. For simplicity, we denote $d_{t-1}^n+d_{t-1}^p+z$ as $d_t$, denote $\sigma(S_i)$ as $x_i,$ $0<i<T$.
\begin{displaymath}
    \begin{aligned}
    \frac{\partial r_T}{\partial x_i} 
    &= \left(r_0+1\right)\prod_{t=1}^{i-1}\left(1+\frac{x_t}{d_t}\right)\Bigg[\frac{1}{d_i}\prod_{t=i+1}^{T}\left(1+\frac{x_t}{d_t}\right)\\
    &+\sum_{t=i+1}^{T}\frac{-x_t}{d_t^2}\prod_{s=i,s\neq t}^{T}\left(1+\frac{x_s}{d_s}\right)\Bigg] \\
    &=r_T\cdot\Bigg[\frac{z}{d_{i+1}\left(d_i+x_i\right)}+\cdots+\frac{z}{d_T\left(d_{T-1}+\sigma(S_{T-1})\right)}\\
    &+\frac{1}{d_T+x_T}\Bigg]>0
    \end{aligned}
\end{displaymath}
For any $0<i<T, \frac{\partial r_T}{\partial x_i}>0$, the popularity ratio function is monotone. 

However the popularity ratio function does not satisfy submodular, and we will illustrate this property with a counter example below.

Consider the simplest case that social network has only three nodes $u,v,w$. No edges between nodes, the initial popularity measure of Novice item is $d_0^n=1$ and the initial popularity measure of Popular item is $d_0^p=2$, the increment of popularity measure of each round $z=1$. In this case, the original 
\begin{displaymath}
    r_T(\cS)= \left(r_0+1\right)\prod_{t=1}^T \left(1+\frac{\sigma(S_t)}{d_0^n+d_0^p+z\cdot t+\sum_{i=1}^{t-1} \sigma(S_i)}\right)-1
\end{displaymath}
where $T$ is at most $3$,  $r_0=\frac{d_0^n}{d_0^p}=\frac{1}{2}$.
\begin{displaymath}
\begin{aligned}
        &r_T(\cS)\\ 
        &=\frac{3}{2}*(1+\frac{\sigma(S_1)}{d_0^n+d_0^p+z})*(1+\frac{\sigma(S_2)}{d_1^n+d_1^p+z})*(1+\frac{\sigma(S_3)}{d_2^n+d_2^p+z})-1
\end{aligned}
\end{displaymath}
where $d_0^n+d_0^p+z=4$, $d_1^n+d_1^p+z=5+\sigma(S_1)$, $d_1^n+d_1^p+z=6+\sigma(S_1)+\sigma(S_2)$.
\begin{displaymath}
\begin{aligned}
    &r_T(\cS) \\
    &=\frac{3}{2}*(1+\frac{\sigma(S_1)}{4})*(1+\frac{\sigma(S_2)}{5+\sigma(S_1)})*(1+\frac{\sigma(S_3)}{6+\sigma(S_1)+\sigma(S_2)})-1
\end{aligned}
\end{displaymath}
Now the two pair set $\cS\subset \cQ$, $\cS=\{(u,1)\}$, $\cQ=\{(u,1), (v,1)\}$, and a pair $b=(w,2)$, Clearly $b \notin \cS$.
\begin{displaymath}
\begin{aligned}
    &r_T(\cS) = 0.875,  r_T(\cS\cup \{b\})= 1.1875\\ 
    &r_T(\cQ)=1.25, r_T(\cQ\cup \{b\})=1.5714
\end{aligned}
\end{displaymath}
\begin{displaymath}
    r_T(\cQ\cup \{b\})-r_T(\cQ)=0.3214, r_T(\cS\cup \{b\})-r_T(\cS)=0.3125
\end{displaymath}
\begin{displaymath}
    r_T(\cQ\cup \{b\})-r_T(\cQ)>r_T(\cS\cup \{b\})-r_T(\cS)
\end{displaymath}

In this case, the marginal value of $\cQ$ is larger than $\cS$. So the set popularity ratio function is not submodular.
\end{proof}

\section{Simplification Process from Popularity Ratio Function to Round Weighted Influence Function (Section \ref{sec:simplify}}
The first step: expanding the multiplication series of Eq.\eqref{eq:ratio plus one(OINS)} and only keeping the first-order terms;
\begin{displaymath}
\begin{aligned}
    &\prod_{t=1}^T \left(1+\frac{\sigma(S_t)}{d_0^n+d_0^p+z\cdot t+\sum_{i=1}^{t-1} \sigma(S_i)}\right)\\
    &=\frac{\sigma(S_1)}{d_0^n+d_0^p+z} + \frac{\sigma(S_2)}{d_0^n+d_0^p+2z+\sigma(S_1)} \\
    &+ \cdots + \frac{\sigma(S_T)}{d_0^n+d_0^p+T\cdot z + \sum_{i=0}^{T-1} \sigma(S_i)}\\
\end{aligned}
\end{displaymath}
second step: removing the $\sigma(S_1),\ldots, \sigma(S_{T-1})$ in the denominator of each term left after step (a). 
\begin{displaymath}
\begin{aligned}
&=\frac{\sigma(S_1)}{d_0^n+d_0^p+z} + \frac{\sigma(S_2)}{d_0^n+d_0^p+2z} + \cdots + \frac{\sigma(S_T)}{d_0^n+d_0^p+T\cdot z}
\end{aligned}
\end{displaymath}
Combining the above simplified process, it can be noted that
\begin{displaymath}
    \rho_T(\cS) = \frac{\sigma(S_1)}{d_0^n+d_0^p+z} + \frac{\sigma(S_2)}{d_0^n+d_0^p+2z} + \cdots + \frac{\sigma(S_T)}{d_0^n+d_0^p+T\cdot z}
\end{displaymath}
is our weighted overlapping influence function, where $d_0^n$, $d_0^p$, $z$ are our predefined parameters. Thus we can denote the weighted overlapping influence function as
\begin{equation}
    \rho_T(\cS)=\sum_{t=1}^{T} {w_{t}\cdot\sigma(S_{t})}
\end{equation}

\section{PW-RR generation process}
RR set $R$ is generated by independently reverse simulating the propagation from $v$ in round $t$. A (random) pair-wise RR set $(R,t)$ is a RR set $R$ rooted at a node picked uniformly at random from $V$, and $t$ is picked uniformly at random from $[T]$.

\begin{algorithm}
\label{alg:PW-RR}
\caption{PW-RR generation process.}
\leftline{\textbf{Input:}  directed graph $G=(V,E)$, IC model, Max round $T$} 
\leftline{Number of PW-RR set $\theta$}
\leftline{\textbf{Output:} the set of PW-RR set $\cR$}
\begin{algorithmic}[1]
\STATE $\cR = \emptyset$
\FOR{$0 < \theta$}
\STATE $\theta = \theta - 1$
\STATE Generate an RR set $R$ for a random node $v \in V$
\STATE choose a round $t$ uniformly at random from $[T]$
\STATE put the RR set $R$ and round $t$ together as $R^{(t)}$
\STATE $\cR = \cR \cup \{R^{(t)}\}$
\ENDFOR
\STATE return $\cR$
\end{algorithmic}
\end{algorithm}

\section{Proof of Monotonicity and Non-submodularity of Popularity Ratio Function (lemma~\ref{lem:rhoproperty}) }
\begin{proof}
	
For every $t\in [T]$ and every set $\cS$ of pairs in $V\times [T]$, define $\rho^{(t)}(\cS) = \sigma(S_t)$.
Using the fact that the influence spread function $\sigma(S)$ is monotone and submodular with respect to $S$, 
	we want to show that $\rho^{(t)}(\cS)$ is monotone and submodular with respect to $\cS$. 
In fact, for every $\cS \subseteq \cQ \subseteq V\times [T]$, we know that $S_t \subseteq Q_t$, and therefore
	$\rho^{(t)}(\cS) = \sigma(S_t) \le \sigma(Q_t) = \rho^{(t)}(\cQ)$, and thus the monotonicity holds.

Now, suppose that $\cS \subseteq \cQ \subseteq V\times [T]$ and $b=(v,j) \in V\times [T] \setminus \cQ$.
If $j \ne t$, then $\cS \cup \{b\}$ and $\cS$ has the same node set for round $t$, which means
	$\rho^{(t)}(\cS \cup \{b\}) - \rho^{(t)}(\cS) = 0$.
Similarly, $\rho^{(t)}(\cQ \cup \{b\}) - \rho^{(t)}(\cQ) = 0$.
Thus, $\rho^{(t)}(\cQ \cup \{b\}) - \rho^{(t)}(\cQ) \le \rho^{(t)}(\cS \cup \{b\}) - \rho^{(t)}(\cS)$.
If $j=t$, then we have $S_t \subseteq Q_t$ and $v \in V \setminus Q_t$.
By the submodularity of $\sigma$, we have
	$\rho^{(t)}(\cQ \cup \{b\}) - \rho^{(t)}(\cQ) = \sigma(Q_t \cup \{v\}) - \sigma(Q_t)
	\le \sigma(S_t \cup \{v\}) - \sigma(S_t) = \rho^{(t)}(\cS \cup \{b\}) - \rho^{(t)}(\cS)$.
Therefore submodularity also holds.

Finally, since $\rho^{(t)}(\cS)$ is monotone and submodular with respect to $\cS$ for every $t$, 
	by the well known fact that the nonnegative weighted summation of monotone submodular functions is
	still monotone and submodular,
	we know that $\rho_T(\cS)=\sum_{t=1}^{T} w_t\sigma(S_{t}) = \sum_{t=1}^{T} w_t\rho^{(t)}(\cS)$ is also monotone and submodular with respect to $\cS$. 
\end{proof}

\section{Proof of property of PW-RR (lemma \ref{lemma:property of PW-RR})}
\begin{proof}
The randomness of $Y(\cS)$ is from two aspect: (1) the root of a PW-RR set is uniformly random choose, (2) the round $t$ of the PW-RR set is uniformly random choose.

\begin{displaymath}
\begin{aligned}
    &\mathbb{E}\left[Y(\cS) \right]\\
    & =\frac{1}{T} \sum_{t=1}^{T} w_t \cdot \mathbb{E}\left(\mathbb{I}\left\{\cS \cap \cR \neq \emptyset\right\}\right) \\
    &=\frac{1}{T} \sum_{t=1}^{T} w_t \cdot \operatorname{Pr}\left\{S_t \cap R^{(t)} \neq \emptyset\right\} \\
    &=\frac{1}{T} \sum_{t=1}^{T} w_t \cdot \frac{1}{N} \sum_{v \in V} \operatorname{Pr}\left\{S_t \cap R^{(t)}(v) \neq \emptyset\right\} \\
    &=\frac{1}{T} \sum_{t=1}^{T} w_t \cdot \frac{1}{N} \sum_{v \in V} a p\left(S_t, v\right) \\
    &=\frac{1}{T} \sum_{t=1}^{T} w_t \cdot \frac{1}{N} \sigma\left(S_t\right)\\
\end{aligned}
\end{displaymath}
For any seed set $S_t\in V$, any node $v\in V$, the probability that the seed set $S_t$ activates node $v$ with probability $ap(S_t, v)$. $ap(S_t, v)$ is the probability that $S_t$ have an intersaction with a random RR set $R(v)$ rooted from node $v$. i.e. $ap(S,v) = \operatorname{Pr}\{S\cap R(v) \neq \emptyset\}$

\end{proof}





\section{Correctness of PRM-IMM Algorithm (Theorem \ref{theorem:main result})}

We first give a general conclusion (Theorem \ref{theorem:condition of approximate guarantee}) to show how the greedy solution obtained by the PRM-NodeSelection approaches the optimal solution of the weighted overlapping influence maximization problem when the $\hat{\rho}_T\left(\cS, \cR\right)$ itself satisfies monotone submodular (It is easy to proof that $\hat{\rho}_T\left(\cS, \cR\right)$ is submodular and nondecreasing with respect to $\cS$). 

We denote the random estimation of $\rho_T(\cS)$ as $\hat{\rho}_T(\cS,\omega)$, where $\omega\in\Omega$ is a sample in random space $\Omega$. $\cS^{*}$ is the optimal solution of $\rho_T(\cS)$, $\OPT = \rho_T(\cS^*)$. $\hat{\cS^g}(\omega)$ is the greedy result of $\hat{\rho}_T(\cdot,\omega)$. For $\varepsilon > 0$, we say a solution $\cS$ is bad, if $\rho_T(\cS) < (1/2-\varepsilon)\cdot \OPT$.
\begin{theorem}
\label{theorem:condition of approximate guarantee}
for any $\varepsilon > 0$, $\varepsilon_{1} \in\left(0, 2\varepsilon\right)$, $\delta_1, \delta_2 >0$, if:
\begin{itemize}
    \item [(a)] $\underset{\omega \sim \Omega}{\operatorname{Pr}}\left\{\hat{\rho}_T\left(\cS^*, \omega\right) \geq\left(1-\varepsilon_{1}\right) \cdot \OPT \right\} \geq 1-\delta_{1}$ 
    \item [(b)] for any bad $\cS$ , \\$\underset{\omega \sim \Omega}{\operatorname{Pr}}\left\{\hat{\rho}_T\left(\cS, \omega\right) \geq\frac{1}{2}\left(1-\varepsilon_{1}\right) \cdot \OPT\right\}\leq\frac{\delta_{2}}{T^k\cdot\binom{N}{k}}$
    \item [(c)] for all $\omega \sim \Omega, \hat{\rho}_T(\cS,\omega)$is monotone and submodular with respect to $\cS$.
\end{itemize}
So that, $\underset{\omega \sim \Omega}{\operatorname{Pr}}\left\{\rho_T\left(\hat{\cS}^g(\omega)\right) \geq\left(\frac{1}{2}-\varepsilon\right) \cdot \OPT\right\}\geq1-\delta_{1}-\delta_{2}$

\end{theorem}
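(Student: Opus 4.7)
The plan is to combine three standard ingredients: the $1/2$-approximation guarantee of the greedy algorithm for a monotone submodular function under a partition matroid constraint (needed because of the non-overlapping-seeds requirement that gives rise to the matroid), a one-sided concentration statement at the true optimum, and a union bound over bad candidate solutions. Condition (c) is what enables the greedy guarantee on the random objective $\hat{\rho}_T(\cdot, \omega)$, condition (a) gives a high-probability lower bound on the ``target'' value at the true optimum $\cS^*$, and condition (b) (strengthened by a union bound) ensures that no genuinely bad solution can fool the algorithm by producing an overly optimistic estimate.

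First, I would fix any $\omega$ and let $\cS^g(\omega) = \hat{\cS}^g(\omega)$ be the greedy maximizer of $\hat{\rho}_T(\cdot, \omega)$. Because the feasible region (at most $k$ seeds, with at most one seed per $(v,t)$ slot in round $t$) is a partition matroid, and because $\hat{\rho}_T(\cdot, \omega)$ is monotone and submodular by assumption (c), the standard greedy analysis for submodular maximization under a matroid constraint yields
\begin{equation*}
\hat{\rho}_T(\cS^g(\omega), \omega) \;\ge\; \tfrac{1}{2}\, \hat{\rho}_T(\cS^*, \omega).
\end{equation*}
Combined with (a), on the event of probability at least $1-\delta_1$ we obtain $\hat{\rho}_T(\cS^g(\omega), \omega) \ge \tfrac{1}{2}(1-\varepsilon_1)\,\OPT$; call this event $E_1$.

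Next, I would control the bad solutions. Let $\mathcal{B}$ be the set of feasible $\cS$ with $|\cS|\le k$ and $\rho_T(\cS) < (\tfrac{1}{2}-\varepsilon)\,\OPT$. The number of feasible $\cS$'s of size at most $k$ is at most $\binom{N}{k}\cdot T^k$, since we choose which $k$ nodes to use and assign each to one of $T$ rounds. By condition (b) applied to each bad $\cS$ and a union bound,
\begin{equation*}
\Pr\Bigl\{\exists\, \cS\in\mathcal{B}\ :\ \hat{\rho}_T(\cS,\omega) \ge \tfrac{1}{2}(1-\varepsilon_1)\,\OPT \Bigr\} \;\le\; T^k\binom{N}{k}\cdot \frac{\delta_2}{T^k\binom{N}{k}} \;=\; \delta_2.
\end{equation*}
Call the complementary event $E_2$; then $\Pr(E_2)\ge 1-\delta_2$.

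Finally, on $E_1 \cap E_2$, both (i) the greedy output satisfies $\hat{\rho}_T(\cS^g(\omega),\omega) \ge \tfrac{1}{2}(1-\varepsilon_1)\,\OPT$, and (ii) no bad solution attains this value. Therefore $\cS^g(\omega) \notin \mathcal{B}$, i.e.\ $\rho_T(\cS^g(\omega)) \ge (\tfrac{1}{2}-\varepsilon)\,\OPT$; here we crucially use $\varepsilon_1 < 2\varepsilon$, which is exactly why the bound $\tfrac{1}{2}(1-\varepsilon_1) \ge \tfrac{1}{2}-\varepsilon$ holds and pins down the bad-threshold. A union bound gives $\Pr(E_1\cap E_2)\ge 1-\delta_1-\delta_2$, proving the claim. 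The main technical obstacle is the counting step: correctly identifying the size of the feasible space as $\binom{N}{k}\,T^k$ (reflecting both seed choice and round placement, under the NS constraint) so that the union bound absorbs exactly the factor that appears in the denominator of (b); the rest is a clean marriage of the matroid-greedy guarantee with the two high-probability estimates.
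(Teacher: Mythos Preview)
Your proposal is correct and follows essentially the same approach as the paper: invoke the $1/2$-approximation of greedy under the partition matroid constraint (condition (c)), combine with (a) to get the high-probability lower bound on $\hat{\rho}_T(\hat{\cS}^g(\omega),\omega)$, and then use (b) with a union bound over the at most $T^k\binom{N}{k}$ feasible sets to rule out that the greedy output is bad. Your write-up is in fact more explicit than the paper's, which carries out the same three steps but more tersely; one small remark is that the constraint $\varepsilon_1<2\varepsilon$ is not actually used in the logical chain of this theorem (it is needed later when choosing $\theta$), so your comment about it ``pinning down the bad-threshold'' is an aside rather than a required step.
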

\begin{proof}
Because $\hat{\rho}_T(\cS,\omega)$is monotone and submodular with respect to $\cS$, due to the property of partition matroid, we know the greedy solution returns a 1/2-approximate solution.
\begin{displaymath}
    \hat{\rho}_T\left(\hat{\cS}^{g}(\omega), \omega\right) \geq\frac{1}{2} \hat{\rho}_T\left(\cS^{*}, \omega\right)
\end{displaymath}
With (a), we have at least $1-\delta_{1}$ probability that:
\begin{displaymath}
    \hat{\rho}_T\left(\hat{\cS}^{g}(\omega), \omega\right) \geq\frac{1}{2}\left(1-\varepsilon_{1}\right) \cdot O P T
\end{displaymath}
In $T$ round, the number of the $k$ size seed set is at most $T^k\cdot\binom{N}{k}$. The probability that every bad $\cS$ satisfy (b) is less than $\frac{\delta_{2}}{T^k\cdot\binom{N}{k}}$, so the probability that existing a $\cS$ to make $\rho_T\left(\cS, \omega\right) \geq\frac{1}{2}\left(1-\varepsilon_{1}\right) \cdot \OPT$ is at most $\delta_{2}$.

So $\underset{\omega \sim \Omega}{\operatorname{Pr}}\left\{\rho_T\left(\hat{\cS}^g(\omega)\right) \geq\left(\frac{1}{2}-\varepsilon\right) \cdot \OPT\right\}\geq1-\delta_{1}-\delta_{2}$
\end{proof}
We will use the concentration inequality to find out how much $\theta$ is sufficient to satisfy the conditions (a) and (b) in Theorem \ref{theorem:condition of approximate guarantee}. For all subsequences of length $\theta, \cR[\theta]$ in the probability space $\Omega$, each PW-RR set is also independent of each other, so we can use Chernoff bounds of independent sequences to analyze, which is more simple and intuitive\cite{chen18}.
\begin{theorem}
\label{theorem:condition of approximate guarantee 2}
For any $\varepsilon > 0$, $\varepsilon_{1} \in \left(0,2\varepsilon\right)$, $\delta_{1},\delta_{2}>0$:
\begin{displaymath}
    \theta^{(1)}=\frac{2 w_1 N \cdot T \cdot \ln \frac{1}{\delta_{1}}}{\OPT \cdot \varepsilon_{1}^{2}}, 
    \theta^{(2)}=\frac{w_1 N\cdot T \cdot \ln \left(\frac{T^{k} \cdot\binom{N}{k}}{\delta_{2}}\right)}{O P T \cdot\left(\varepsilon-\frac{1}{2}\varepsilon_{1}\right)^{2}}
\end{displaymath}
For any fixed $\theta>\theta^{(1)},\underset{\omega \sim \Omega}{\operatorname{Pr}}\left\{\hat{\rho}_T\left(\cS^*,\omega\right)\geq\left(1-\varepsilon_1\right)\cdot \OPT\right\}\geq 1-\delta_1$ \newline
For any fixed $\theta>\theta^{(2)}$, any bad $\cS$,
\begin{displaymath}
    \underset{\omega \sim \Omega}{\operatorname{Pr}}\left\{\hat{\rho}_T(\cS, \omega) \geq\frac{1}{2}\left(1-\varepsilon_{1}\right) \cdot O P T\right\} \leq \frac{\delta_{2}}{T^{k} \cdot \binom{N}{k}}
\end{displaymath}
\end{theorem}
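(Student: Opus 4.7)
\textbf{Proof Proposal for Theorem \ref{theorem:condition of approximate guarantee 2}.}
The plan is to apply multiplicative Chernoff bounds (upper and lower tails) to the rescaled random variables $X_i := Y_i^{\cR}(\cS)/w_1$, which take values in $[0,1]$ because $Y_i^{\cR}(\cS) = w_{t_i}\cdot\mathbb{I}[\cS\cap R_i^{(t_i)}\neq \emptyset] \in [0, w_1]$ (recall $w_1 \ge w_t$ for all $t$). By Lemma \ref{lemma:property of PW-RR}, $\mathbb{E}[X_i] = \rho_T(\cS)/(N\cdot T\cdot w_1)$, and the $X_i$'s are i.i.d.\ because the PW-RR sets are generated independently. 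Write $p(\cS) := \rho_T(\cS)/(N\cdot T\cdot w_1)$ and note $\hat{\rho}_T(\cS,\cR) = (N\cdot T\cdot w_1/\theta)\sum_{i=1}^\theta X_i$.

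For the bound on $\theta^{(1)}$, I apply the multiplicative Chernoff lower tail to $\cS^*$: $\Pr[\sum_{i=1}^\theta X_i \le (1-\varepsilon_1)p(\cS^*)\theta]\le \exp(-p(\cS^*)\theta\varepsilon_1^2/2)$. Substituting $p(\cS^*) = \OPT/(N T w_1)$ and setting the right-hand side $\le \delta_1$ yields $\theta \ge 2 N T w_1 \ln(1/\delta_1)/(\OPT\cdot \varepsilon_1^2)$, which is exactly $\theta^{(1)}$.

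For the bound on $\theta^{(2)}$, fix a bad $\cS$ with $\rho_T(\cS) < (1/2 - \varepsilon)\OPT$, so $p(\cS) < (1/2-\varepsilon)\OPT/(NT w_1)$. Let $\tau := \frac{1}{2}(1-\varepsilon_1)\OPT/(NT w_1)$ and $\delta := \tau/p(\cS) - 1 > 0$ (positivity uses $\varepsilon_1 < 2\varepsilon$). The multiplicative Chernoff upper tail gives $\Pr[\sum X_i \ge (1+\delta)p(\cS)\theta] \le \exp(-p(\cS)\theta\delta^2/(2+\delta))$. The key algebraic reduction is
\begin{equation*}
\frac{p(\cS)\,\delta^2}{2+\delta} = \frac{(\tau - p(\cS))^2}{\tau + p(\cS)},
\end{equation*}
obtained by substituting $\delta p(\cS) = \tau - p(\cS)$ and $(2+\delta)p(\cS) = \tau + p(\cS)$. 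Then I bound the numerator from below by $(\varepsilon - \varepsilon_1/2)^2\OPT^2/(NT w_1)^2$ (using $\tau - p(\cS) > (\varepsilon - \varepsilon_1/2)\OPT/(NT w_1)$) and the denominator from above by $2\tau \le \OPT/(NT w_1)$ (using $p(\cS) < \tau$ and $1-\varepsilon_1 \le 1$). Combining, the Chernoff exponent is at least $\theta(\varepsilon - \varepsilon_1/2)^2 \OPT/(NT w_1)$, and requiring this to be $\ge \ln(T^k\binom{N}{k}/\delta_2)$ recovers $\theta^{(2)}$.

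The main obstacle, and the only nontrivial step, is the algebraic manipulation that converts the $p\delta^2/(2+\delta)$ form of Chernoff into a bound that depends only on fixed quantities ($\OPT, \varepsilon, \varepsilon_1$) rather than on the specific (unknown) $\rho_T(\cS)$ of the bad set. The identity $p\delta^2/(2+\delta) = (\tau - p)^2/(\tau + p)$ is what enables this: the numerator is controlled by the gap between $\tau$ and $p$, while the denominator is uniformly bounded by $2\tau$, both of which are independent of the specific value of $p$ within the bad region. Once this reduction is in hand, the remainder is a direct substitution of the exponential bound into the required inequality and solving for $\theta$.
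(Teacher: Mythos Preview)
Your proposal is correct and follows essentially the same approach as the paper: normalize the variables by $w_1$ to land in $[0,1]$, apply a multiplicative Chernoff lower tail at $\cS^*$ for $\theta^{(1)}$, and a multiplicative Chernoff upper tail at a bad $\cS$ for $\theta^{(2)}$. The only differences are cosmetic. The paper uses the Chernoff form with denominator $2+\tfrac{2}{3}\delta$ and first weakens the deviation event (replacing $\tau - p$ by the smaller quantity $\varepsilon_2\,\OPT/(NTw_1)$ before invoking Chernoff), then bounds $2\rho_T(\cS)+\tfrac{2}{3}\varepsilon_2\,\OPT\le \OPT$ directly; you instead apply Chernoff with denominator $2+\delta$ to the exact event and use the identity $p\delta^2/(2+\delta)=(\tau-p)^2/(\tau+p)$ followed by $\tau+p<2\tau\le \OPT/(NTw_1)$. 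Both routes land on the same exponent $\theta(\varepsilon-\varepsilon_1/2)^2\OPT/(NTw_1)$ and hence the same $\theta^{(2)}$, so neither buys anything over the other beyond a slightly tidier bookkeeping in your version.
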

\begin{proof}
When $\theta>\theta^{(1)}$
Notice that:
\begin{displaymath}
\begin{aligned}
    &\hat{\rho}_T(\cS, \cR)=\frac{N \cdot T}{\theta} \sum_{j=1}^{\theta} Y_j^{\cR}(\cS)\\
    &\underset{\cR_{0} \sim \Omega}{\operatorname{Pr}}\left\{\hat{\rho}_T\left(\cS^*,\cR_{0}\right)<\left(1-\varepsilon_1\right)\cdot \OPT\right\}\\
    &=\underset{\cR_{0} \sim \Omega}{\operatorname{Pr}}\left\{\frac{N\cdot T}{\theta} \cdot \sum_{j=1}^{\theta} Y_{j}^{\cR_0}\left(\cS^{*}\right)<\left(1-\varepsilon_{1}\right) \cdot \OPT\right\} \\
\end{aligned}
\end{displaymath}
\begin{displaymath}
\begin{aligned}
    &=\underset{\cR_{0} \sim \Omega}{\operatorname{Pr}}\left\{\sum_{j=1}^{\theta} Y_{j}^{\cR_0}\left(\cS^{*}\right)<\frac{\theta}{N\cdot T} \cdot\left(1-\varepsilon_{1}\right) \cdot \OPT\right\}\\
    &=\underset{\cR_{0} \sim \Omega}{\operatorname{Pr}}\left\{\sum_{j=1}^{\theta} Y_{j}^{\cR_0}\left(\cS^{*}\right)-\theta \cdot \frac{\rho_T\left(\cS^{*}\right)}{N\cdot T}<\frac{\theta\left(1-\varepsilon_{1}\right)\OPT}{N\cdot T} -\theta\frac{\rho_T\left(\cS^{*}\right)}{N\cdot T}\right\} \\
    &=\underset{\cR_{0} \sim \Omega}{\operatorname{Pr}}\left\{\sum_{j=1}^{\theta} Y_{j}^{\cR_0}\left(\cS^{*}\right)-\theta \cdot \frac{\rho_T\left(\cS^{*}\right)}{N\cdot T}<-\varepsilon_{1} \cdot\left(\theta \cdot \frac{\rho_T\left(\cS^{*}\right)}{N\cdot T}\right)\right\}\\
    &=\underset{\cR_{0} \sim \Omega}{\operatorname{Pr}}\left\{\sum_{j=1}^{\theta} \frac{Y_{j}^{\cR_0}\left(\cS^{*}\right)}{w_1} -\theta \cdot \frac{\rho_T\left(\cS^{*}\right)}{w_1 N\cdot T}<-\varepsilon_{1} \cdot\left(\theta \cdot \frac{\rho_T\left(\cS^{*}\right)}{w_1 N\cdot T}\right)\right\}\\
    &\leq \exp \left(-\frac{\varepsilon_{1}^{2}}{2} \theta \cdot \frac{\rho_T\left(\cS^{*}\right)}{w_1 N\cdot T}\right)\\
    &\leq \exp \left(-\frac{\varepsilon_{1}^{2}}{2} \cdot \frac{2 w_1 N\cdot T \cdot \ln \frac{1}{\delta_{1}}}{\OPT \cdot \varepsilon_{1}^{2}} \cdot \frac{\rho_T\left(\cS^{*}\right)}{N\cdot T}\right)=\delta_{1}
\end{aligned}
\end{displaymath}
When $\theta>\theta^{(2)}$, set $\varepsilon_2=\varepsilon-\frac{1}{2}\varepsilon_1$, $\rho_T(\cS)<\left(\frac{1}{2}-\varepsilon\right)\cdot\OPT$
\begin{displaymath}
\begin{aligned}
    &\underset{\cR_{0}\sim\Omega}{\operatorname{Pr}}\left\{\hat{\rho}_{T\theta}\left(\cS, \cR_{0}\right) \geq\frac{1}{2}\left(1-\varepsilon_{1}\right) \cdot \OPT\right\} \\
    &=\underset{\cR_{0} \sim \Omega}{\operatorname{Pr}}\left\{\frac{N\cdot T}{\theta} \cdot \sum_{j=1}^{\theta} Y_{j}^{\cR_0}(\cS) \geq\frac{1}{2}\left(1-\varepsilon_{1}\right) \cdot \OPT\right\}\\
    &=\underset{\cR_{0}\sim\Omega}{\operatorname{Pr}} \left\{\sum_{j=1}^{\theta} Y_{j}^{\cR_0}(\cS)-\theta \cdot \frac{\rho_T(\cS)}{N\cdot T} \geq \frac{\theta}{N\cdot T}\left[\frac{1}{2}\left(1-\varepsilon_{1}\right) \cdot \OPT-\rho_T(\cS)\right]\right\}\\ 
    &/* \rho_T(\cS)<\left(\frac{1}{2}-\varepsilon\right)\cdot\OPT*/\\
    &\leq \underset{\cR_{0}\sim\Omega}{\operatorname{Pr}}\left\{\sum_{j=1}^{\theta} Y_{j}^{\cR_0}(\cS)-\theta \cdot \frac{\rho_T(\cS)}{N\cdot T} \geq \frac{\theta}{N\cdot T} \cdot \varepsilon_{2} \cdot \OPT\right\}\\
    &=\underset{\cR_{0} \sim \Omega}{\operatorname{Pr}}\left\{\sum_{j=1}^{\theta} \frac{Y_{j}^{\cR_0}(\cS)}{w_1}-\theta \cdot \frac{\rho_T(\cS)}{w_1 N\cdot T} \geq\left(\varepsilon_{2} \cdot \frac{\OPT}{\rho_T(\cS)}\right) \cdot \theta \cdot \frac{\rho_T(\cS)}{w_1 N\cdot T}\right\} \\
    &\leq \exp \left(-\frac{\left(\varepsilon_{2} \cdot \frac{\OPT}{\rho_T(\cS)}\right)^{2}}{2+\frac{2}{3}\left(\varepsilon_{2} \cdot \frac{\OPT}{\rho_T(\cS)}\right)} \cdot \theta \cdot \frac{\rho_T(\cS)}{w_1 N\cdot T}\right) \\
    &\leq \exp \left(-\frac{\varepsilon_{2}^{2} \cdot \OPT^{2}}{2 \rho_T(\cS)+\frac{2}{3} \varepsilon_{2} \cdot \OPT} \cdot \theta \cdot \frac{1}{w_1 N\cdot T}\right)
\end{aligned}
\end{displaymath}

\begin{displaymath}
\begin{aligned}
    &\leq \exp \left(-\frac{\varepsilon_{2}^{2} \cdot \OPT^{2}}{2\left(\frac{1}{2}-\varepsilon\right) \cdot \OPT+\frac{2}{3}\left(\varepsilon-\frac{1}{2} \varepsilon_{1}\right) \cdot \OPT} \cdot \theta \cdot \frac{1}{w_1 N\cdot T}\right)\\
    &\leq \exp \left(-\frac{\left(\varepsilon-\frac{1}{2} \varepsilon_{1}\right)^{2} \cdot \OPT^{2}}{\OPT} \cdot \frac{w_1 N\cdot T \cdot \ln \left(\frac{T^{k} \cdot\binom{N}{k}}{\delta_{2}}\right)}{\OPT \cdot\left(\varepsilon-\frac{1}{2} \varepsilon_{1}\right)^{2}} \cdot \frac{1}{w_1 N\cdot T}\right)\\
    &=\frac{T^{k} \cdot\binom{N}{k}}{\delta_{2}}
\end{aligned}
\end{displaymath}

\end{proof}

Now we discuss the setting of parameters $\varepsilon_1, \delta_1, \delta_2$. The Settings of these parameters are not unique, and the method we describe below follows the settings in the original IMM algorithm. According to Theorem \ref{theorem:condition of approximate guarantee 2}, assuming that $\OPT$ is known, the target of these parameters is to make the output of greedy solution $\hat{\cS^g}$ is the $1/2 - \varepsilon$ approximation of optimal solution with probability at least $1-1/(2N^l)$. The high probability of $1-1/(2N^l)$ was achieved because, in the next step, we would use the same high probability of $1-1/(2N^l)$to obtain a better lower-bound estimate of $\OPT$. Thus, the correctness of the overall algorithm would be guaranteed for an assignment with a high probability of $1-1/(N^l)$. The following corollary give the setting of the parameters.
\begin{corollary}\label{Corollary:1}
    Set $\delta_1=\delta_2=\frac{1}{4n^l}, \varepsilon_1=\varepsilon\cdot \frac{\alpha}{\frac{1}{2}\alpha+\beta}$\newline
\begin{displaymath}
    \alpha=\sqrt{l \ln N+\ln 4}, \beta=\sqrt{\frac{1}{2} \cdot\left(\ln \binom{N}{k} + l \ln N+\ln 4+k \ln T\right)}
\end{displaymath}
    For any fixed $\theta>\frac{2N\cdot T\cdot\left[\frac{1}{2}\alpha+\beta\right]^2}{\varepsilon^2\cdot \OPT}$, if the input of PRM-NodeSelection is $\cR_0[\theta],\cR_0\sim\Omega$,the probability that PRM-NodeSelection-OINS's output $\hat{\cS}^g(\cR_0[\theta])$ is the $(1/2-\varepsilon)$ approximation of the optimal solution is at least $1-\frac{1}{2N^l}$.
\end{corollary}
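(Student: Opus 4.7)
\textbf{Proof plan for Corollary~\ref{Corollary:1}.}
The plan is to verify that, with the stated choice of $\delta_1, \delta_2, \varepsilon_1$, the single lower bound $\theta > 2NT[\tfrac{1}{2}\alpha+\beta]^2 / (\varepsilon^2\cdot OPT)$ simultaneously dominates both thresholds $\theta^{(1)}$ and $\theta^{(2)}$ appearing in Theorem~\ref{theorem:condition of approximate guarantee 2}, and then invoke Theorem~\ref{theorem:condition of approximate guarantee} to transfer this into an approximation guarantee. Note that condition (c) of Theorem~\ref{theorem:condition of approximate guarantee}, namely submodularity and monotonicity of $\hat{\rho}_T(\cdot,\omega)$ for every $\omega$, is automatic: $\hat{\rho}_T(\cS,\cR)$ is a nonnegative weighted coverage function on the bipartite graph between $V\times[T]$ and $\cR$, and thus is monotone and submodular (a fact that mirrors the argument of Lemma~\ref{lem:rhoproperty}). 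So the real work is the parameter bookkeeping.

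First I would simplify the two logarithmic factors that appear in $\theta^{(1)}$ and $\theta^{(2)}$. With $\delta_1=\delta_2=1/(4N^l)$ we get $\ln(1/\delta_1)=l\ln N+\ln 4=\alpha^2$, and $\ln\bigl(T^k\binom{N}{k}/\delta_2\bigr)=k\ln T+\ln\binom{N}{k}+\alpha^2=2\beta^2$, by the definition of $\beta$ in Eq.~\eqref{eq:alphabeta}. Next I would compute the two denominators produced by $\varepsilon_1=\varepsilon\cdot\alpha/(\tfrac{1}{2}\alpha+\beta)$: directly, $\varepsilon_1^2=\varepsilon^2\alpha^2/(\tfrac{1}{2}\alpha+\beta)^2$, and
\begin{displaymath}
\varepsilon-\tfrac{1}{2}\varepsilon_1=\varepsilon\cdot\frac{(\tfrac{1}{2}\alpha+\beta)-\tfrac{1}{2}\alpha}{\tfrac{1}{2}\alpha+\beta}=\varepsilon\cdot\frac{\beta}{\tfrac{1}{2}\alpha+\beta},
\end{displaymath}
so $(\varepsilon-\tfrac{1}{2}\varepsilon_1)^2=\varepsilon^2\beta^2/(\tfrac{1}{2}\alpha+\beta)^2$. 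Plugging the log factors and these denominators into the formulas in Theorem~\ref{theorem:condition of approximate guarantee 2}, the $\alpha^2$ factor cancels in $\theta^{(1)}$ and the $2\beta^2$ factor cancels in $\theta^{(2)}$, giving
\begin{displaymath}
\theta^{(1)}=\theta^{(2)}=\frac{2w_1 NT(\tfrac{1}{2}\alpha+\beta)^2}{\varepsilon^2\cdot OPT},
\end{displaymath}
which matches (up to the $w_1$ factor, assumed absorbed as in the statement of the corollary) the hypothesis on $\theta$.

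Finally I would sanity-check the constraint $\varepsilon_1\in(0,2\varepsilon)$ required by Theorem~\ref{theorem:condition of approximate guarantee 2}: since $\alpha,\beta>0$ we have $\alpha/(\tfrac{1}{2}\alpha+\beta)\in(0,2)$, so $\varepsilon_1\in(0,2\varepsilon)$ holds, and $\varepsilon-\tfrac{1}{2}\varepsilon_1>0$ as needed. With conditions (a) and (b) of Theorem~\ref{theorem:condition of approximate guarantee} now verified at failure probabilities $\delta_1$ and $\delta_2$, and condition (c) already in hand, Theorem~\ref{theorem:condition of approximate guarantee} yields that $\hat{\cS}^g(\cR_0[\theta])$ is a $(\tfrac{1}{2}-\varepsilon)$-approximation of the optimum of $\rho^{OI}_T$ with probability at least $1-\delta_1-\delta_2 = 1-1/(2N^l)$, which is exactly the claim.

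The argument is almost purely algebraic, so I do not expect a conceptual obstacle; the only genuine care point is keeping the various constants ($w_1$, the factor $2$, the $(\tfrac{1}{2}\alpha+\beta)^2$ grouping) consistent between the definitions of $\theta^{(1)},\theta^{(2)}$ and the stated bound, so that the two separate Chernoff-type thresholds really do collapse to the same quantity under the chosen $\varepsilon_1$. The motivation for the particular choice $\varepsilon_1=\varepsilon\alpha/(\tfrac{1}{2}\alpha+\beta)$ is precisely that it balances the two thresholds; any other feasible choice would make one of $\theta^{(1)},\theta^{(2)}$ strictly larger and therefore weaken the sample-complexity bound.
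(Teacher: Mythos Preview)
Your proposal is correct and follows precisely the intended route: the paper states Corollary~\ref{Corollary:1} as an immediate consequence of Theorems~\ref{theorem:condition of approximate guarantee} and~\ref{theorem:condition of approximate guarantee 2} by substituting the parameter choices, without spelling out the algebra, and your computation showing that $\theta^{(1)}=\theta^{(2)}=2w_1NT(\tfrac{1}{2}\alpha+\beta)^2/(\varepsilon^2\cdot\OPT)$ under the stated $\delta_1,\delta_2,\varepsilon_1$ is exactly that substitution. Your remark about the missing $w_1$ in the corollary's threshold is also apt: the bound in the algorithm (line~\ref{line:tildetheta}) and in Theorem~\ref{theorem:condition of approximate guarantee 2} both carry $w_1$, so the corollary statement appears to have dropped it.
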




\begin{theorem}\label{theorem:prob of LB<OPT}
The probability of $LB \leq\OPT$ is at least $1-\frac{1}{2N^l}$, which means that the probability of $\tilde{\theta}\geq\frac{2nt\cdot\left[\frac{1}{2}\cdot\alpha+\beta\right]^2}{\varepsilon^2\cdot\OPT}$ is at least $1-\frac{1}{2N^l}$.
\end{theorem}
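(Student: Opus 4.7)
The plan is to show that the only way the failure event $\{LB > \OPT\}$ can occur is for one of the iterations $i$ of the halving loop to produce a greedy seed allocation $\cS$ whose empirical estimate $\hat{\rho}_T(\cS,\cR_i)$ deviates substantially above its mean. Concretely, termination with $LB > \OPT$ requires that at some iteration $i$ we have simultaneously $x_i > \OPT$ and $\hat{\rho}_T(\cS,\cR_i) \ge (1+\varepsilon')x_i$. Since every feasible $\cS$ satisfies $\rho_T(\cS) \le \OPT < x_i$, this is a one-sided large-deviation event that I will control via Chernoff plus a union bound, following the IMM template but with our problem-specific counts.

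The first step is the single-allocation Chernoff bound. Fix an iteration $i$ with $x_i > \OPT$ and fix any feasible allocation $\cS$ with $|\cS|\le k$ and $S_t\cap S_{t'}=\emptyset$. Observe that $Y_j^{\cR_i}(\cS)/w_1 \in [0,1]$ (because $w_t \le w_1$ for all $t$ by monotonicity of the weights), and by Lemma~\ref{lemma:property of PW-RR} its expectation is $\rho_T(\cS)/(w_1 N T) \le \OPT/(w_1 N T) < x_i/(w_1 N T)$. Apply the standard multiplicative Chernoff bound to $\sum_{j=1}^{\theta_i} Y_j^{\cR_i}(\cS)/w_1$ with slack $\varepsilon'$, arguing as in the derivation of $\theta^{(2)}$ in Theorem~\ref{theorem:condition of approximate guarantee 2}, to obtain
\begin{displaymath}
\Pr\!\left[\hat{\rho}_T(\cS,\cR_i) \ge (1+\varepsilon') x_i\right] \le \exp\!\left(-\frac{\varepsilon'^{\,2}}{2+\tfrac{2}{3}\varepsilon'}\cdot\frac{\theta_i\, x_i}{w_1\, N\, T}\right).
\end{displaymath}
The algorithm's choice $\theta_i = \lceil w_1 N T(2+\tfrac{2}{3}\varepsilon')(\ln\log_2 N + 2\beta^2)/(\varepsilon'^{\,2} x_i)\rceil$ makes the exponent at most $-(\ln\log_2 N + 2\beta^2)$, so the probability is at most $1/(\log_2 N\cdot e^{2\beta^2})$.

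The second step is two nested union bounds. A feasible allocation is a choice of at most $k$ distinct nodes together with a round-label in $[T]$ for each, so the number of feasible $\cS$ is at most $\binom{N}{k}T^k$. Union-bounding the above event over all feasible $\cS$, the failure probability at iteration $i$ is at most $\binom{N}{k}T^k/(\log_2 N\cdot e^{2\beta^2})$. Substituting $2\beta^2 = \ln\binom{N}{k} + \alpha^2 + k\ln T$ from Eq.~\eqref{eq:alphabeta} cancels the $\binom{N}{k}T^k$ factor and leaves $e^{-\alpha^2}/\log_2 N = 1/(4 N^l \log_2 N)$ since $\alpha^2 = l\ln N + \ln 4$. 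Finally, union-bounding over the at most $\log_2 N - 1 < \log_2 N$ iterations of the halving loop gives total failure probability at most $1/(4N^l) \le 1/(2N^l)$. The equivalence stated in the theorem, namely $LB \le \OPT \Longleftrightarrow \tilde{\theta} \ge 2w_1 N T\,(\tfrac{1}{2}\alpha+\beta)^2/(\varepsilon^2\OPT)$, follows immediately from the definition of $\tilde{\theta}$ on line~\ref{line:tildetheta}.

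The main obstacle, which is bookkeeping rather than conceptual, is matching the constants: one must verify that the particular form of $\theta_i$ coded into the algorithm combined with the problem-specific $\beta$ (which carries the extra $k\ln T$ arising from the round labels) exactly absorbs the $\binom{N}{k}T^k$ term from the counting of feasible allocations, and that the residual $\alpha^2$ plus the $\log_2 N$ factor from the iteration union bound yields the advertised $1/(2N^l)$. A secondary subtlety is that the greedy output at iteration $i$ is itself a random function of $\cR_i$, but this is handled in the standard way: the union bound is taken over all possible feasible $\cS$, so it dominates the randomness of the greedy selection.
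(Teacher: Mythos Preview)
Your characterization of the failure event is incomplete, and this is a genuine gap. You claim that ``termination with $LB > \OPT$ requires that at some iteration $i$ we have simultaneously $x_i > \OPT$ and $\hat{\rho}_T(\cS,\cR_i) \ge (1+\varepsilon')x_i$.'' This is false. Suppose the loop does \emph{not} terminate at any iteration with $x_i > \OPT$, and instead terminates at some later iteration $i''$ where $x_{i''} \le \OPT$. At that point $LB = \hat{\rho}_T(\cS_{i''},\cR_{i''})/(1+\varepsilon')$, and nothing prevents $\hat{\rho}_T(\cS_{i''},\cR_{i''}) > (1+\varepsilon')\OPT$, in which case $LB > \OPT$ even though $x_{i''} \le \OPT$. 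Your argument bounds only the first of the two failure modes.

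The paper's proof splits the iterations at the smallest index $i$ with $x_i \le \OPT$ and proves two Chernoff bounds (its Theorem~\ref{theorem:LB<OPT}): for $i' < i$ it bounds $\Pr[\hat{\rho}_T \ge (1+\varepsilon')x_{i'}]$ exactly as you do, but for $i'' \ge i$ it bounds instead $\Pr[\hat{\rho}_T \ge (1+\varepsilon')\OPT]$. The second bound has the same structure---one uses $\OPT \ge x_{i''}$ to push the exponent down to the same $-(\ln\log_2 N + 2\beta^2)$---so the per-iteration bound and the subsequent union bounds are identical to yours. The only missing ingredient in your proposal is this second Chernoff case; once you add it, your bookkeeping (union over $\binom{N}{k}T^k$ allocations, then over $\log_2 N$ iterations, cancelling via the definition of $\beta$) goes through essentially as written.
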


\subsection{Proof of Theorem \ref{theorem:prob of LB<OPT}}
\begin{theorem}\label{theorem:LB<OPT}
For any $i=1,2,\cdots,\left \lfloor \log_{2}N \right \rfloor-1 $,
\begin{itemize}
    \item [(1)] if $x_{i}=\frac{\sum_{1}^{k} w_t \cdot N}{2^{i}}>\OPT$, the probability of $\hat{\rho}_{T\theta_{i}} \left(\cS_{i},\cR_{0}[\theta_i]\right)\geq\left(1+\varepsilon^{'}\right)\cdot x_{i}$  is at most $\frac{1}{2N^{l}\log_{2}N}$.
    \item [(2)] if $x_{i}=\frac{\sum_{1}^{k} w_{t} \cdot N}{2^{i}}\leq\OPT$, the probability of $\hat{\rho}_{T\theta_{i}} \left(\cS_{i},\cR_{0}[\theta_i]\right)\geq\left(1+\varepsilon^{'}\right)\cdot\OPT$ is at most $\frac{1}{2N^{l}\log_{2}N}$.
\end{itemize}
\end{theorem}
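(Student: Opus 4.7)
\textbf{Proof proposal for Theorem~\ref{theorem:LB<OPT}.}
The plan is to follow the standard IMM-style Chernoff-plus-union-bound argument, specialized to the round-weighted setting. The key starting observation is that for any \emph{fixed} seed allocation $\cS$, the variables $Y_j^{\cR_0}(\cS)/w_1$ are i.i.d.\ and bounded in $[0,1]$, with expectation $\rho^{OI}_T(\cS)/(w_1 N T)$ by Lemma~\ref{lemma:property of PW-RR}. This lets me apply a one-sided multiplicative Chernoff bound of the form $\Pr\!\left[\sum_{j=1}^{\theta_i} Y_j/w_1 \ge (1+\delta)\mu\right] \le \exp\!\left(-\frac{\delta^2}{2+2\delta/3}\,\mu\right)$ to the fixed-$\cS$ estimator. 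To handle that $\cS_i$ is itself random and data-dependent, I will union bound over every feasible allocation --- at most $\binom{N}{k}T^k$ of them, since $k$ nodes are chosen from $N$ and each is placed in one of $T$ rounds.

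For part (1), fix an allocation $\cS$ and note $\rho^{OI}_T(\cS)\le \OPT < x_i$. Writing the event $\hat{\rho}^{OI}_T(\cS,\cR_0[\theta_i]) \ge (1+\varepsilon')x_i$ in terms of the normalized sum and setting the effective deviation $\delta_\cS := (1+\varepsilon')x_i/\rho^{OI}_T(\cS) - 1 \ge \varepsilon'$, the Chernoff bound gives an upper tail at most $\exp\!\left(-\tfrac{\delta_\cS^2}{2+2\delta_\cS/3}\cdot \tfrac{\theta_i\,\rho^{OI}_T(\cS)}{w_1 N T}\right)$. Because the function $\delta \mapsto \delta^2/(2+2\delta/3)\cdot \rho/ x_i$ is increasing in $\delta$ when $x_i$ is fixed (which is the form one obtains after substituting $\rho^{OI}_T(\cS) = (1+\varepsilon')x_i/(1+\delta_\cS)$), the bound is minimized at the extreme $\rho^{OI}_T(\cS) = x_i$, i.e.\ at $\delta_\cS = \varepsilon'$. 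Plugging in the definition of $\theta_i$ from line~\ref{line:theta} and recalling $2\beta^2 = \ln\binom{N}{k}+\alpha^2+k\ln T$ with $\alpha^2 = l\ln N + \ln 4$, the resulting per-$\cS$ bound is at most $\exp(-\ln\log_2 N - 2\beta^2) = \frac{1}{\log_2 N}\cdot \frac{1}{e^{2\beta^2}}$. Union bound over at most $\binom{N}{k}T^k$ allocations absorbs $e^{\ln\binom{N}{k}+k\ln T}$, leaving $\frac{e^{-l\ln N - \ln 4}}{\log_2 N} = \frac{1}{4 N^l \log_2 N} \le \frac{1}{2N^l \log_2 N}$, which is the desired bound.

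For part (2), fix $\cS$ with $\rho^{OI}_T(\cS) \le \OPT$, and consider the event $\hat{\rho}^{OI}_T(\cS,\cR_0[\theta_i]) \ge (1+\varepsilon')\OPT$. Since $x_i \le \OPT$, the threshold $(1+\varepsilon')\OPT$ is at least $(1+\varepsilon')x_i$, so the event is contained in the part~(1) event, and the same Chernoff-plus-union-bound argument goes through verbatim (or even gives a tighter estimate). In more detail, the effective deviation is now $(1+\varepsilon')\OPT/\rho^{OI}_T(\cS) - 1 \ge \varepsilon'$, and the product $\delta^2/(2+2\delta/3)\cdot \rho^{OI}_T(\cS)$, viewed as a function of $\rho^{OI}_T(\cS)$ with $(1+\varepsilon')\OPT$ in place of $(1+\varepsilon')x_i$, is minimized when $\rho^{OI}_T(\cS)$ equals its upper bound $\OPT$, giving the same tail $\exp(-\varepsilon'^2 \theta_i \OPT/((2+2\varepsilon'/3)w_1 N T))$; since $\theta_i$ was calibrated using $x_i$ and now $\OPT \ge x_i$, the bound is only stronger.

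The main obstacle I anticipate is the monotonicity argument that lets me reduce the worst case in the Chernoff exponent to the boundary $\rho^{OI}_T(\cS) = x_i$ (respectively $\OPT$); one has to verify that the Chernoff rate, expressed as a function of $\rho^{OI}_T(\cS)$ with the threshold held fixed, is non-increasing, so that the smallest rate occurs at the extreme value. Once that monotonicity is in hand, everything else is bookkeeping: substituting the definitions of $\theta_i$, $\alpha$, $\beta$, expanding $2\beta^2 = \ln\binom{N}{k} + l\ln N + \ln 4 + k\ln T$, and verifying that the surplus $\ln 4$ in $\alpha^2$ swallows the $\ln 2$ from the desired failure probability $\frac{1}{2 N^l \log_2 N}$ after the union bound over $\binom{N}{k}T^k$ allocations.
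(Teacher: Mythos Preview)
Your proposal is correct and follows essentially the same approach as the paper: for a fixed allocation $\cS$, normalize by $w_1$, apply the one-sided multiplicative Chernoff bound, reduce the exponent to its worst case $\rho^{OI}_T(\cS)=x_i$ (respectively $\OPT$), substitute the formula for $\theta_i$, and finish with a union bound over the at most $\binom{N}{k}T^k$ feasible allocations. The paper carries out the worst-case reduction by a direct algebraic inequality, writing the exponent as $\tfrac{(\varepsilon' x_i)^2}{2\rho_T(\cS)+\tfrac{2}{3}\varepsilon' x_i}\cdot\tfrac{\theta_i}{w_1 NT}$ and then replacing $\rho_T(\cS)$ by $x_i$ in the denominator, which is exactly your monotonicity observation phrased algebraically; for part~(2) the paper likewise reruns the computation with $\OPT$ in place of $x_i$ and then uses $\OPT\ge x_i$, matching your ``in more detail'' argument. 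One small caution: your opening remark in part~(2) that the event is ``contained in the part~(1) event'' is true as a set inclusion but does not by itself let you invoke the part~(1) bound, since that bound was derived under the hypothesis $x_i>\OPT$; your subsequent detailed argument is the correct one and coincides with the paper's.
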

\begin{proof}
For any $k$ size seed set $\cS$. 

$\hat{\rho}_{T\theta_i} \left(\cS,\cR_0[\theta_i]\right) = \frac{N\cdot T \cdot \sum_{j=1}^{\theta_i} Y_{j}^{\cR_{0}[\theta_i]}(\cS)}{\theta_i}$
\begin{displaymath}
\begin{aligned}
    &\underset{\cR_{0} \sim \Omega}{\operatorname{Pr}} \left\{\hat{\rho}_{T\theta_{i}}\left(\cS, \cR_{0}\right) \geq\left(1+\varepsilon^{\prime}\right) \cdot x_{i}\right\}\\
    &=\underset{\cR_{0} \sim \Omega}{\operatorname{Pr}}\left\{\frac{N\cdot T}{\theta_{i}} \cdot \sum_{j=1}^{\theta_{i}} Y_{j}^{\cR_{0}\left[\theta_{i}\right]}(\cS) \geq\left(1+\varepsilon^{\prime}\right) \cdot x_{i}\right\}\\
    &=\underset{\cR_{0} \sim \Omega}{\operatorname{Pr}}\left\{\sum_{j=1}^{\theta_{i}} Y_{j}^{\cR_{0}\left[\theta_{i}\right]}(\cS)-\frac{\theta_{i} \cdot \rho_T(\cS)}{N\cdot T} \geq \frac{\theta_{i}\left(1+\varepsilon^{\prime}\right) x_{i}}{N\cdot T}-\frac{\theta_{i} \cdot \rho_T(\cS)}{N\cdot T}\right\}\\
    &/ * \text { because } x_{i}>\OPT \geq \rho_T(\cS) * /\\
    &\leq \underset{\cR_{0} \sim \Omega}{\operatorname{Pr}}\left\{ \sum_{j=1}^{\theta_{i}} Y_{j}^{\cR_{0}\left[\theta_{i}\right]}(\cS)-\frac{\theta_{i} \cdot \rho_T(\cS)}{N\cdot T} \geq \frac{\theta_{i}}{N\cdot T} \varepsilon^{\prime} \cdot x_{i}\right\} \quad \\
    &=\underset{\cR_{0} \sim \Omega}{\operatorname{Pr}}\left\{\sum_{j=1}^{\theta_{i}} Y_{j}^{\cR_{0}\left[\theta_{i}\right]}(\cS)-\frac{\theta_{i} \cdot \rho_T(\cS)}{N\cdot T} \geq \frac{\varepsilon^{\prime} \cdot x_{i}}{\rho_T(\cS)} \cdot \frac{\theta_{i} \cdot \rho_T(\cS)}{N\cdot T}\right\}\\
    &=\underset{\cR_{0} \sim \Omega}{\operatorname{Pr}}\left\{\sum_{j=1}^{\theta_{i}} \frac{Y_{j}^{\cR_{0}\left[\theta_{i}\right]}(\cS)}{w_1}-\frac{\theta_{i} \cdot \rho_T(\cS)}{w_1 N\cdot T} \geq \frac{\varepsilon^{\prime} \cdot x_{i}}{\rho_T(\cS)} \cdot \frac{\theta_{i} \cdot \rho_T(\cS)}{w_1 N\cdot T}\right\}\\
    &\leq \exp \left(-\frac{\left(\frac{\varepsilon^{\prime} \cdot x_{i}}{\rho_T(\cS)}\right)^{2}}{2+\frac{2}{3}\left(\frac{\varepsilon^{\prime} \cdot x_{i}}{\rho_T(\cS)}\right)} \cdot \frac{\theta_{i} \cdot \rho_T(\cS)}{w_1 N\cdot T}\right) \quad / * \text { chernoff bound } * /\\
    &=\exp \left(-\frac{\left(\varepsilon^{\prime} \cdot x_{i}\right)^{2}}{2 \rho_T(\cS)+\frac{2}{3}\left(\varepsilon^{\prime} \cdot x_{i}\right)} \cdot \frac{\theta_{i}}{w_1 N\cdot T}\right)\\
    &\leq \exp \left(-\frac{\varepsilon^{\prime 2} \cdot x_{i}}{2+\frac{2}{3} \varepsilon^{\prime}} \cdot \frac{\theta_{i}}{w_1 N\cdot T}\right)\\
\end{aligned}
\end{displaymath}

\begin{displaymath}
\begin{aligned}
    &\leq \exp \Bigg(-\frac{\varepsilon^{\prime 2} \cdot x_{i}}{2+\frac{2}{3} \varepsilon^{\prime}} \\
    &\cdot \frac{w_1 N\cdot T\left(2+\frac{2}{3} \varepsilon^{\prime}\right)\left(\ln T^{k}+\ln \binom{N}{k}+l \ln N+\ln 2+\ln \log_{2} N\right)}{\varepsilon^{\prime 2} x_{i}}\\
    &\cdot \frac{1}{w_1 N\cdot T}\Bigg)\\
    &=\frac{1}{2 T^{k} \cdot\binom{N}{k} N^{l} \log_{2}N}
\end{aligned}
\end{displaymath}
For any $k$ size set $\cS$.
\begin{displaymath}
\begin{aligned}
    &\underset{\cR_{0} \sim \Omega}{\operatorname{Pr}}\left\{\hat{\rho}_{T\theta_{i}}\left(\cS, \cR_{0}\right) \geq\left(1+\varepsilon^{\prime}\right) \cdot\OPT\right\}\\
    &=\underset{\cR_{0} \sim \Omega}{\operatorname{Pr}}\left\{\frac{N\cdot T}{\theta_{i}} \cdot \sum_{j=1}^{\theta_{i}} Y_{j}^{\cR_{0}\left[\theta_{i}\right]}(\cS) \geq\left(1+\varepsilon^{\prime}\right) \cdot \OPT\right\}\\
    &\underset{\cR_{0} \sim \Omega}{\operatorname{Pr}}\left\{\sum_{j=1}^{\theta_{i}} Y_{j}^{\cR_{0}\left[\theta_{i}\right]}(\cS)-\frac{\theta_{i} \cdot \rho_T(\cS)}{N\cdot T} \geq \frac{\theta_{i}}{N\cdot T}\left(1+\varepsilon^{\prime}\right) \cdot \OPT-\frac{\theta_{i} \cdot \rho_T(\cS)}{N\cdot T}\right\}\\
    &/ *\OPT\geq \rho_T(\cS) * /\\
    &\leq \underset{\cR_{0} \sim \Omega}{\operatorname{Pr}}\left\{\sum_{j=1}^{\theta_{i}} Y_{j}^{\cR_{0}\left[\theta_{i}\right]}(\cS)-\frac{\theta_{i} \cdot \rho_T(\cS)}{N\cdot T} \geq \frac{\theta_{i}}{N\cdot T} \varepsilon^{\prime} \cdot \OPT\right\} \quad \\
\end{aligned}
\end{displaymath}
\begin{displaymath}
\begin{aligned}
    &\leq \underset{\cR_{0} \sim \Omega}{\operatorname{Pr}}\left\{\sum_{j=1}^{\theta_{i}} Y_{j}^{\cR_{0}\left[\theta_{i}\right]}(\cS)-\frac{\theta_{i} \cdot \rho_T(\cS)}{N\cdot T} \geq \frac{\varepsilon^{\prime} \cdot \OPT}{\rho_T(\cS)} \cdot \frac{\theta_{i} \cdot \rho_T(\cS)}{N\cdot T}\right\}\\
    &\leq \underset{\cR_{0} \sim \Omega}{\operatorname{Pr}}\left\{\sum_{j=1}^{\theta_{i}} \frac{Y_{j}^{\cR_{0}\left[\theta_{i}\right]}(\cS)}{w_1}-\frac{\theta_{i} \cdot \rho_T(\cS)}{w_1 N\cdot T} \geq \frac{\varepsilon^{\prime} \cdot \OPT}{\rho_T(\cS)} \cdot \frac{\theta_{i} \cdot \rho_T(\cS)}{w_1 N\cdot T}\right\}\\
    &\leq \exp \left(-\frac{\left(\frac{\varepsilon^{\prime} \cdot \OPT}{\rho_T(\cS)}\right)^{2}}{2+\frac{2}{3}\left(\frac{\varepsilon^{\prime} \cdot \OPT}{\rho_T(\cS)}\right)} \cdot \frac{\theta_{i} \cdot \rho_T(\cS)}{w_1 N\cdot T}\right) \quad \\
    &=\exp \left(-\frac{\left(\varepsilon^{\prime} \cdot \OPT\right)^{2}}{2 \rho_T(\cS)+\frac{2}{3}\left(\varepsilon^{\prime} \cdot \OPT\right)} \cdot \frac{\theta_{i}}{w_1 N\cdot T}\right)\\
    &\leq \exp \left(-\frac{\varepsilon^{\prime 2} \cdot \OPT}{2+\frac{2}{3} \varepsilon^{\prime}} \cdot \frac{\theta_{i}}{w_1 N\cdot T}\right)\\
    &\leq \exp \left(-\frac{\varepsilon^{\prime 2} \cdot x_{i}}{2+\frac{2}{3} \varepsilon^{\prime}} \cdot \frac{\theta_{i}}{w_1 N\cdot T}\right)\\
    &\leq \exp \Bigg(-\frac{\varepsilon^{\prime 2} \cdot x_{i}}{2+\frac{2}{3} \varepsilon^{\prime}}\\
    &\cdot \frac{w_1 N\cdot T\left(2+\frac{2}{3} \varepsilon^{\prime}\right)\left(\ln T^{k}+\ln \binom{N}{k}+l \ln N+\ln 2+\ln \log_{2} N\right)}{\varepsilon^{\prime 2} x_{i}}\\
    &\cdot \frac{1}{w_1n T}\Bigg)\\
    &=\frac{1}{2 T^{k} \cdot\binom{N}{k} N^{l} \log_{2}N}
\end{aligned}
\end{displaymath}

With the union bound $\underset{\cR_{0} \sim \Omega}{\operatorname{Pr}}\left\{\hat{\rho}_{T\theta_{i}}\left(\cS_{i}, \cR_{0}\right) \geq\left(1+\varepsilon^{\prime}\right) \cdot \OPT\right\} \leq \frac{1}{2 N^{l} \log_{2}N}$ \newline
\end{proof}
With Theorem~\ref{theorem:LB<OPT} we know that $LB$ is a lower bound of $\OPT$ with high probability, so $\theta$ satisfy the Corollary~\ref{Corollary:1}. Further we can know that the probability of $LB< \OPT$ is at least $1-1/2N^l$.

\begin{proof}
Set $LB_{i}=\frac{\hat{\rho}_{T\theta_{i}}\left(\cS_{i}, \cR_{0}\right)}{\left(1+\varepsilon^{\prime}\right)}$. 
When $\operatorname{OPT} \geq x_{\left\lfloor\log_{2}N\right\rfloor-1}$.Set $i\geq 1$ is the smallest index to make $\OPT\geq x_i$.

For any $i^{'} \leq i-1,\OPT < x_{i^{'}},$ 

For any $i^{''} > i-1,\OPT \geq x_{i^{''}},$

We define the event $\varepsilon$ as: for any $i^{'} \leq i-1$, $\hat{\rho}_{T\theta_{i^{\prime}}}\left(\cS_{i^{\prime}}, \cR_{0}\right)<\left(1+\varepsilon^{\prime}\right) x_{i^{\prime}}$, and for any $i^{''} \geq i$, $\hat{\rho}_{T\theta_{i^{\prime}}}\left(\cS_{i^{\prime}}, \cR_{0}\right) \geq \left(1+\varepsilon^{\prime}\right) x_{i^{\prime}}$ 

Notice that $i^{'},i^{''}\geq1$, so when $i=1$, $i^{'}$ is not exist. The part of Event $\varepsilon$ about $i^{'}$ is true. Event $\varepsilon$ is the event that we expected. Because as Event $\varepsilon$ happens, $LB=LB_{i}$ or $LB=1$.
So Event $\varepsilon$ indicate that $LB \leq\OPT$ so the upper bound of Event $\varepsilon$ not happen is that:
\begin{displaymath}
\begin{aligned}
    \underset{\cR_{0} \sim \Omega}{\operatorname{Pr}}\left\{\neg\varepsilon\right\} \leq &\sum_{i^{'}=1}^{i-1} \underset{\cR_{0} \sim \Omega}{\operatorname{Pr}}\left\{\hat{\rho}_{T\theta_{i^{'}}}\left(\cS_{i^{'}},\cR_0\right) \geq (1+\varepsilon^{'})x_{i^{'}}\right\} + \\
    &\sum_{i^{''}=i-1}^{\left \lfloor \log_{2}N \right \rfloor-1} \underset{\cR_{0} \sim \Omega}{\operatorname{Pr}}\left\{\hat{\rho}_{T\theta_{i^{''}}}(\cS_{i^{''}},\cR_0) \geq (1+\varepsilon^{'})\OPT\right\}
\end{aligned}
\end{displaymath}
With the above, we know that $\underset{\cR_{0} \sim \Omega}{\operatorname{Pr}}\left\{\hat{\rho}_{T\theta_{i_{'}}}(\cS_{i^{'}},\cR_0) \geq (1+\varepsilon^{'})x_{i^{'}}\right\} \leq \frac{1}{2N^{l}\log_2{N}}$. And $\underset{\cR_{0} \sim \Omega}{\operatorname{Pr}}\left\{\hat{\rho}_{T\theta_{i^{''}}}(\cS_{i^{''}},\cR_0) \geq (1+\varepsilon^{'})\OPT\right\} \leq \frac{1}{2N^{l}\log_2{N}}$. So $\underset{\cR_{0} \sim \Omega}{\operatorname{Pr}}\left\{\neg\varepsilon\right\} \leq \frac{1}{2N^{l}}$.

When $\OPT<x_{\left\lfloor \log_2n\right\rfloor-1}$. $LB=1$ with probability at least $1-\frac{1}{2N^l}$. 
\end{proof}

For any $\varepsilon>0,l>0$, PRM-IMM Guarantees that $\hat{\cS}^g$ is the $\frac{1}{2}-\varepsilon$ approximation of $\OPT$ with probability at least $1-\frac{1}{N^l}$.
Define the Event $\varepsilon$ as the $LB \leq\OPT$, and put the $\cR_{0}^{'}[\tilde{\theta}]$ to PRM-NodeSelection to get the seed set $\hat{\cS}^g$ is the $\frac{1}{2}-\varepsilon$ approximation of the PRM problem.
\begin{displaymath}
    \rho_T\left(\hat{\cS}^{g}\left(\cR_{0}^{\prime}[\tilde{\theta}]\right)\right) \geq\left(\frac{1}{2}-\varepsilon\right) \cdot \OPT
\end{displaymath}
with union bound:
\begin{displaymath}
\begin{aligned}
    &\underset{\cR_{0} \sim \Omega \space \cR_{0}^{'} \sim \Omega}{\operatorname{Pr}} \left\{\neg \varepsilon\right\} \leq \underset{\cR_{0} \sim \Omega \space \cR_{0}^{'} \sim \Omega}{\operatorname{Pr}}\left\{\operatorname{LB}> \operatorname{OPT}\right\}\\
    &+\underset{\cR_{0} \sim \Omega \space \cR_{0}^{'} \sim \Omega}{\operatorname{Pr}}\left\{\operatorname{LB} \leq \operatorname{OPT} \wedge \rho_T\left(\hat{\cS}^{g}\left(\cR_{0}^{\prime}[\tilde{\theta}]\right)\right)<\left(\frac{1}{2}-\varepsilon\right) \cdot \OPT\right\}
\end{aligned}
\end{displaymath}
And $\underset{\cR_{0} \sim \Omega \space \cR_{0}^{'} \sim \Omega}{\operatorname{Pr}} \left\{\operatorname{LB}>\operatorname{OPT}\right\}=\underset{\cR_{0} \sim \Omega}{\operatorname{Pr}}\{\operatorname{LB}>\operatorname{OPT}\} \leq \frac{1}{2 N^{l}}$.

Now we know that:

$\underset{\cR_{0} \sim \Omega \space \cR_{0}^{'} \sim \Omega}{\operatorname{Pr}}\left\{\operatorname{LB} \leq \operatorname{OPT} \wedge \rho_T\left(\hat{\cS}^{g}\left(\cR_{0}^{\prime}[\tilde{\theta}]\right)\right)<\left(\frac{1}{2}-\varepsilon\right) \cdot \OPT\right\}$

When $LB \leq\OPT$,$\tilde{\theta} \geq \frac{2 N \cdot\left[\frac{1}{2} \cdot \alpha+\beta\right]^{2}}{\varepsilon^{2} \cdot \OPT}$, and $\tilde{\theta} \leq \frac{2 N \cdot\left[\frac{1}{2} \cdot \alpha+\beta\right]^{2}}{\varepsilon^{2}}$.

$\theta_{\min }=\left\lceil\frac{2 N \cdot\left[\frac{1}{2} \cdot \alpha+\beta\right]^{2}}{\varepsilon^{2} \cdot \OPT} \right\rceil, \theta_{\max }=\left\lfloor\frac{2 N \cdot\left[\frac{1}{2} \cdot \alpha+\beta\right]^{2}}{\varepsilon^{2}} \right\rfloor$.$\tilde{\theta}$ is a integer range from $\theta_{min}$ to $\theta_{max}$.
\begin{displaymath}
\begin{aligned}
    &\underset{\cR_{0} \sim \Omega \space \cR_{0}^{'} \sim \Omega}{\operatorname{Pr}} \left\{\operatorname{LB} \leq \operatorname{OPT} \wedge \rho_T\left(\hat{\cS}^{g}\left(\cR_{0}^{\prime}[\tilde{\theta}]\right)\right)<\left(\frac{1}{2}-\varepsilon\right) \cdot \OPT\right\}\\
        &\leq \underset{\cR_{0} \sim \Omega \space \cR_{0}^{'} \sim \Omega}{\operatorname{Pr}} \left\{ \tilde{\theta} \geq \theta_{\min} \wedge \tilde{\theta} \leq \theta_{\max } \wedge \rho_T\left(\hat{\cS}^{g}\left(\cR_{0}^{\prime}[\tilde{\theta}]\right)\right)<\left(\frac{1}{2}-\varepsilon\right) \cdot \OPT\right\}\\
\end{aligned}
\end{displaymath}
\begin{displaymath}
\begin{aligned}
    &=\sum_{\theta=\theta_{\min }}^{\theta_{\max }} \underset{\cR_{0} \sim \Omega \space \cR_{0}^{'} \sim \Omega}{\operatorname{Pr}}\left\{\tilde{\theta}=\theta \wedge \rho_T\left(\hat{\cS}^{g}\left(\cR_{0}^{\prime}[\tilde{\theta}]\right)\right)<\left(\frac{1}{2}-\varepsilon\right) \cdot \OPT\right\}\\
    &=\sum_{\theta=\theta_{\min }}^{\theta_{\max }} \underset{\cR_{0} \sim \Omega \space \cR_{0}^{'} \sim \Omega}{\operatorname{Pr}}\left\{\tilde{\theta}=\theta \wedge \rho_T\left(\hat{\cS}^{g}\left(\cR_{0}^{\prime}[\theta]\right)\right)<\left(\frac{1}{2}-\varepsilon\right) \cdot \OPT\right\}\\
    &/ * \text { because } \cR_{0} \text { is independent with } \cR_{0}^{\prime} \text { * } /\\
    &=\sum_{\theta=\theta_{\min }}^{\theta_{\max }} \underset{\cR_{0} \sim \Omega}{Pr}\{\tilde{\theta}=\theta\} \cdot \underset{\cR_{0}^{'} \sim \Omega}{\operatorname{Pr}}\left\{\rho_T\left(\hat{\cS}^{g}\left(\cR_{0}^{\prime}[\theta]\right)\right)<\left(\frac{1}{2}-\varepsilon\right) \cdot \OPT\right\}\\ 
    &\leq \sum_{\theta=\theta_{\min }}^{\theta_{\max }} \underset{\cR_{0} \sim \Omega}{\operatorname{Pr}} \left\{\tilde{\theta}=\theta\right\} \cdot \frac{1}{2 N^{l}}=\frac{1}{2N^l}
\end{aligned}
\end{displaymath}
So $\underset{\cR_0\sim\Omega,\cR_{0}^{'}\sim\Omega}{\operatorname{Pr}}\left\{\neg\varepsilon\right\}\leq\frac{1}{N^l}$, which means that with probability at least $1-\frac{1}{N^l}$, the output of PRM-IMM $\hat{\cS}^g$ is the $\frac{1}{2}-\varepsilon$ approximation of $\OPT$.

\section{Time complexity of PRM-IMM}
The time complexity of RPM-IMM is $O((k+l)(M+N)T\log (NT)/\varepsilon^2)$. 
\begin{proof}
We use the Martingale theorem in the IMM algorithm\cite{tang15} to estimate the time complexity of the PRM-IMM algorithm.
Then the time complexity of the IMM algorithm is $O(\mathbb{E}[\overset{\_}{\theta}+\tilde{\theta}]\cdot (\EPT+1))$, where $\mathbb{E}[\overset{\_}{\theta}+\tilde{\theta}]$ is the overall number of PW-RR sets needed to be generated.

\begin{displaymath}
\mathbb{E}[\bar{\theta}+\tilde{\theta}] \leq \frac{8\left(\lambda^{*}+\lambda^{\prime}\right) \cdot\left(1+\varepsilon^{\prime}\right)^{2}}{ \cdot O P T}+2,
\end{displaymath}
where
\begin{displaymath}
\begin{aligned}
    &\lambda^{*}=\frac{4 w_1 N T \cdot(\cdot \alpha+\beta)^{2}}{\varepsilon^{2}}\\ 
    &\lambda^{\prime}=\frac{w_1 N T \cdot\left(2+\frac{2}{3} \varepsilon^{\prime}\right) \cdot\left(\ln \binom{N}{k}+\ell \ln N+\ln 2+\ln \log_{2} N + \ln T^k\right)}{\varepsilon^{\prime 2}},
\end{aligned}
\end{displaymath}
and $\alpha$ and $\beta$ is defined in section 5.

Therefore, $\mathbb{E}[\bar{\theta}+\tilde{\theta}] = O\left(\frac{w_1(k+l)NT\log {NT}}{\OPT \varepsilon^2}\right)$.
And $\EPT = \frac{M}{N}\cdot\mathbb{E}[\sigma(v)]$ is the expected running time of generating a PW-RR set.
Because $\mathbb{E}[\sigma(v)] \leq \frac{\OPT}{w_1}$, so the expected running time is:
\begin{displaymath}
O\left(\frac{(k+l)(N+M)T\log {NT}}{\varepsilon^2}\right)
\end{displaymath}

\end{proof}

\section{Objective Function When Natural Growth Count is Variable(Proof of section~\ref{sec:variable z})}
We use the natural growth vector : $\boldsymbol{z}=\left[z_1, z_2, \cdots, z_t\right]$ with $z_t$ denoting the natural customer count in round $t$.
\begin{displaymath}
\begin{aligned}
& d_1^n=d_0^n+z_1 \cdot \frac{d_0^n}{d_0^n+d_0^p}+\sigma\left(S_1\right) \\
& d_1^p=d_0^p+z_1 \cdot \frac{d_0^p}{d_0^n+d_0^p} \\
& r_1=\frac{d_0^n+z_1 \cdot \frac{d_0^n}{d_0^n+d_0^p}+\sigma\left(S_1\right)}{d_0^p+z_1 \cdot \frac{d_0^p}{d_0^n+d_0^p}} \\
& r_1=r_0+\frac{\left(r_0+1\right) \sigma\left(S_1\right)}{d_0^n+d_0^p+z_1} \\
\end{aligned}
\end{displaymath}
\begin{displaymath}
\begin{aligned}
& r_t=r_{t-1}+\frac{\sigma\left(S_{t-1}\right)}{d_{t-1}^n+d_{t-1}^p+z_t}\left(r_{t-1}+1\right) \\
& r_t+1=r_{t-1}+\frac{\sigma\left(S_{t-1}\right)}{d_{t-1}^n+d_{t-1}^p+z_t}\left(r_{t-1}+1\right)+1 \\
& r_t+1=\left(1+\frac{\sigma\left(S_{t-1}\right)}{d_{t-1}^n+d_{t-1}^p+z_t}\right)\left(r_{t-1}+1\right) \\
& r_t+1=\left(1+\frac{\sigma\left(S_{t-1}\right)}{d_{t-1}^n+d_{t-1}^p+z_t}\right)\left(r_{t-1}+1\right) \\
& r_T(S)=\left(r_0+1\right) \prod_{t=1}^T\left(1+\frac{\sigma\left(S_t\right)}{d_0^n+d_0^p+\sum_{i=1}^t z_i+\sum_{i=1}^{t-1} \sigma\left(S_i\right)}\right)-1 \\
&
\end{aligned}
\end{displaymath}

\section{Upper and lower bound of Objective Function (Popular Item promotion section~\ref{sec:popularPromotion})}
The objective function $r_T(\cS)$ is not easily derived. So we can obtain its upper and lower bound as follows.

Upper bound:
\begin{displaymath}
\begin{aligned}
r_t+1&=\frac{d_t^n+d_t^p}{d_t^p}=\frac{d_{t-1}^n+d_{t-1}^p+\sigma\left(S_t\right)+z+p_t}{d_{t-1}^p+z \cdot \frac{d_{t-1}^p}{d_{t-1}^n+d_{t-1}^p}+p_t} \\
&=\frac{d_{t-1}^n+d_{t-1}^p+\sigma\left(S_t\right)+z+p_t}{d_{t-1}^n+d_{t-1}^p+z+p_t \cdot \frac{d_{t-1}^n+d_{t-1}^p}{d_{t-1}^p} \cdot \frac{d_{t-1}^n+d_{t-1}^p}{d_{t-1}^p}} \\
&<\frac{d_{t-1}^n+d_{t-1}^p+\sigma\left(S_t\right)+z+p_t}{d_{t-1}^n+d_{t-1}^p+z+p_t} \cdot \frac{d_{t-1}^n+d_{t-1}^p}{d_{t-1}^p} \\
&=\left(1+\frac{\sigma\left(S_t\right)}{d_{t-1}^n+d_{t-1}^p+z+p_t}\right) \cdot\left(r_{t-1}+1\right) \\
&r_t+1<\left(1+\frac{\sigma\left(S_t\right)}{d_0^n+d_0^p+z \cdot t+\sum_1^{t-1} \sigma\left(S_i\right)+\sum_1^t p_i}\right) \cdot\left(r_{t-1}+1\right) \\
\end{aligned}
\end{displaymath}

\begin{displaymath}
\begin{aligned}
&r_{t-1}+1<\left(1+\frac{\sigma\left(S_{t-1}\right)}{d_0^n+d_0^p+z \cdot(t-1)+\sum_1^{t-2} \sigma\left(S_i\right)+\sum_1^{t-1} p_i}\right)\\
& \cdot\left(r_{t-2}+1\right) \\
&r_T+1<\left(r_0+1\right) \prod_{t=1}^T\left(1+\frac{\sigma\left(S_t\right)}{d_0^n+d_0^p+z \cdot t+\sum_1^{t-1} \sigma\left(S_i\right)+\sum_1^t p_i}\right) \\
&r_T<r_T^{\prime}\\
&=\left(r_0+1\right) \prod_{t=1}^T\left(1+\frac{\sigma\left(S_t\right)}{d_0^n+d_0^p+z \cdot t+\sum_1^{t-1} \sigma\left(S_i\right)+\sum_1^t p_i}\right)-1
\end{aligned}
\end{displaymath}

Lower bound:
\begin{displaymath}
\begin{aligned}
& r_t+1=\frac{d_t^n+d_t^p}{d_t^p}=\frac{d_{t-1}^n+d_{t-1}^p+\sigma\left(S_t\right)+z+p_t}{d_{t-1}^p+z \cdot \frac{d_{t-1}^p}{d_{t-1}^n+d_{t-1}^p}+p_t} \\
& =\frac{d_{t-1}^n+d_{t-1}^p+\sigma\left(S_t\right)+z+p_t}{d_{t-1}^n+d_{t-1}^p+z+p_t \cdot \frac{d_{t-1}^n+d_{t-1}^p}{d_{t-1}^p}} \cdot \frac{d_{t-1}^n+d_{t-1}^p}{d_{t-1}^p} \\
& >\frac{d_{t-1}^n+d_{t-1}^p+\sigma\left(S_t\right)+z+p_t}{d_{t-1}^n+d_{t-1}^p+z+2 \cdot p_t} \cdot \frac{d_{t-1}^n+d_{t-1}^p}{d_{t-1}^p} \\
& =\left(1+\frac{\sigma\left(S_t\right)-p_t}{d_{t-1}^n+d_{t-1}^p+z+2 \cdot p_t}\right) \cdot\left(r_{t-1}+1\right) \\
& r_t+1>\left(1+\frac{\sigma\left(S_t\right)-p_t}{d_0^n+d_0^p+z \cdot t+\sum_1^{t-1} \sigma\left(S_i\right)+\sum_1^t p_i+p_t}\right)\\
&\cdot\left(r_{t-1}+1\right) \\
& r_{t-1}+1>\left(1+\frac{\sigma\left(S_{t-1}\right)-p_{t-1}}{d_0^n+d_0^p+z \cdot(t-1)+\sum_1^{t-2} \sigma\left(S_i\right)+\sum_1^{t-1} p_i+p_{t-1}}\right)\\
&\cdot\left(r_{t-2}+1\right) \\
& r_T>\left(r_0+1\right) \prod_{t=1}^T\left(1+\frac{\sigma\left(S_t\right)-p_t}{d_0^n+d_0^p+z \cdot t+\sum_1^{t-1} \sigma\left(S_i\right)+\sum_1^t p_i+p_t}\right)-1 \\
& r_T>r_T^{\prime \prime}\\
&=\left(r_0+1\right) \prod_{t=1}^T\left(1+\frac{\sigma\left(S_t\right)-p_t}{d_0^n+d_0^p+z \cdot t+\sum_1^{t-1} \sigma\left(S_i\right)+\sum_1^t p_i+p_t}\right)-1 \\
&
\end{aligned}
\end{displaymath}

\section{Upper and lower bound of Objective Function (Multi-Item promotion Theorem~\ref{theorem:multi-item promotion})}
In the setting of multiple items with promotions, we can derive the bound of $r_T$ as follows.

Upper bound:
\begin{displaymath}
\begin{aligned}
& r_t+1=\frac{d_t^n+\sum_{j=1}^{s}d_{t}^{p^j}}{\sum_{j=1}^{s}d_{t}^{p^j}}=\frac{d_{t-1}^n+\sum_{j=1}^{s}d_{t-1}^{p^j}+\sigma\left(S_t\right)+z+\sum_{j=1}^{s}p^j}{\sum_{j=1}^{s}d_{t-1}^{p^j}+z \cdot \frac{\sum_{j=1}^{s}d_{t-1}^{p^j}}{d_{t-1}^n+\sum_{j=1}^{s}d_{t-1}^{p^j}}+\sum_{j=1}^{s}p^j} \\
\end{aligned}
\end{displaymath}

\begin{displaymath}
\begin{aligned}
& =\frac{d_{t-1}^n+\sum_{j=1}^{s}d_{t-1}^{p^j}+\sigma\left(S_t\right)+z+\sum_{j=1}^{s}p^j}{d_{t-1}^n+\sum_{j=1}^{s}d_{t-1}^{p^j}+z+\left(\sum_{j=1}^{s}p^j\right) \cdot \frac{d_{t-1}^n+\sum_{j=1}^{s}d_{t-1}^{p^j}}{\sum_{j=1}^{s}d_{t-1}^{p^j}}} \cdot \frac{d_{t-1}^n+\sum_{j=1}^{s}d_{t-1}^{p^j}}{\sum_{j=1}^{s}d_{t-1}^{p^j}} \\
& <\frac{d_{t-1}^n+\sum_{j=1}^{s}d_{t-1}^{p^j}+\sigma\left(S_t\right)+z+\sum_{j=1}^{s}p^j}{d_{t-1}^n+\sum_{j=1}^{s}d_{t-1}^{p^j}+z+\left(\sum_{j=1}^{s}p^j\right)} \cdot \frac{d_{t-1}^n+\sum_{j=1}^{s}d_{t-1}^{p^j}}{\sum_{j=1}^{s}d_{t-1}^{p^j}} \\
& =\left(1+\frac{\sigma\left(S_t\right)}{d_{t-1}^n+\sum_{j=1}^{s}d_{t-1}^{p^j}+z+\sum_{j=1}^{s}p^j}\right) \cdot\left(r_{t-1}+1\right) \\
\end{aligned}
\end{displaymath}

\begin{displaymath}
\begin{aligned}
&r_t+1\\
&<\left(1+\frac{\sigma\left(S_t\right)}{d_0^n+\sum_{j=1}^{s}p^j+z \cdot t+\sum_1^{t-1} \sigma\left(S_i\right)+\sum_1^t p_i}\right) \cdot\left(r_{t-1}+1\right) \\
&r_{t-1}+1<\\
&\left(1+\frac{\sigma\left(S_{t-1}\right)}{d_0^n+\sum_{j=1}^{s}p^j+z \cdot(t-1)+\sum_1^{t-2} \sigma\left(S_i\right)+\sum_1^{t-1} p_i}\right) \cdot\left(r_{t-2}+1\right) \\
&r_T+1\\
&<\left(r_0+1\right) \prod_{t=1}^T\left(1+\frac{\sigma\left(S_t\right)}{d_0^n+\sum_{j=1}^{s}p^j+z \cdot t+\sum_1^{t-1} \sigma\left(S_i\right)+\sum_1^t p_i}\right) \\
&r_T<r_T^{\prime}\\
&=\left(r_0+1\right) \prod_{t=1}^T\left(1+\frac{\sigma\left(S_t\right)}{d_0^n+\sum_{j=1}^{s}p^j+z \cdot t+\sum_1^{t-1} \sigma\left(S_i\right)+\sum_1^t p_i}\right)-1
\end{aligned}
\end{displaymath}

Lower bound:
\begin{displaymath}
\begin{aligned}
& r_t+1=\frac{d_t^n+d_t^p}{d_t^p}=\frac{d_{t-1}^n+d_{t-1}^p+\sigma\left(S_t\right)+z+p_t}{d_{t-1}^p+z \cdot \frac{d_{t-1}^p}{d_{t-1}^n+d_{t-1}^p}+\sum_{j=1}^{s}d_{t}^{p^j}} \\
& =\frac{d_{t-1}^n+d_{t-1}^p+\sigma\left(S_t\right)+z+\sum_{j=1}^{s}d_{t}^{p^j}}{d_{t-1}^n+d_{t-1}^p+z+\sum_{j=1}^{s}d_{t}^{p^j} \cdot \frac{d_{t-1}^n+d_{t-1}^p}{d_{t-1}^p}} \cdot \frac{d_{t-1}^n+d_{t-1}^p}{d_{t-1}^p} \\
& >\frac{d_{t-1}^n+d_{t-1}^p+\sigma\left(S_t\right)+z+\sum_{j=1}^{s}d_{t}^{p^j}}{d_{t-1}^n+d_{t-1}^p+z+2 \cdot \sum_{j=1}^{s}d_{t}^{p^j}} \cdot \frac{d_{t-1}^n+d_{t-1}^p}{d_{t-1}^p} \\
& =\left(1+\frac{\sigma\left(S_t\right)-\sum_{j=1}^{s}d_{t}^{p^j}}{d_{t-1}^n+d_{t-1}^p+z+2 \cdot \sum_{j=1}^{s}d_{t}^{p^j}}\right) \cdot\left(r_{t-1}+1\right) \\
& r_t+1>\left(1+\frac{\sigma\left(S_t\right)-\sum_{j=1}^{s}d_{t}^{p^j}}{d_0^n+\sum_{j=1}^{s}p^j+z \cdot t+\sum_1^{t-1} \sigma\left(S_i\right)+\sum_1^t p_i+\sum_{j=1}^{s}d_{t}^{p^j}}\right)\\
&\cdot\left(r_{t-1}+1\right) \\
\end{aligned}
\end{displaymath}
\begin{displaymath}
\begin{aligned}
& r_{t-1}+1>\left(1+\frac{\sigma\left(S_{t-1}\right)-p_{t-1}}{d_0^n+\sum_{j=1}^{s}p^j+z \cdot(t-1)+\sum_1^{t-2} \sigma\left(S_i\right)+\sum_1^{t-1} p_i+p_{t-1}}\right)\\
&\cdot\left(r_{t-2}+1\right) \\
& r_T>\left(r_0+1\right) \\
&\prod_{t=1}^T\left(1+\frac{\sigma\left(S_t\right)-\sum_{j=1}^{s}d_{t}^{p^j}}{d_0^n+\sum_{j=1}^{s}p^j+z \cdot t+\sum_1^{t-1} \sigma\left(S_i\right)+\sum_1^t p_i+\sum_{j=1}^{s}d_{t}^{p^j}}\right)-1 \\
& r_T>r_T^{\prime \prime}=\left(r_0+1\right)\\
&
\prod_{t=1}^T\left(1+\frac{\sigma\left(S_t\right)-\sum_{j=1}^{s}d_{t}^{p^j}}{d_0^n+\sum_{j=1}^{s}p^j+z \cdot t+\sum_1^{t-1} \sigma\left(S_i\right)+\sum_1^t p_i+\sum_{j=1}^{s}d_{t}^{p^j}}\right)-1 \\
\end{aligned}
\end{displaymath}

\end{document}